\def\bSig\bm{\Sigma}
\newcommand{\bigCI}{\mathrel{\text{\scalebox{1.07}{$\perp\mkern-10mu\perp$}}}}
\newtheorem{theorem}{Theorem}
\newtheorem{proposition}{Proposition}
\newtheorem{lemma}{Lemma}
\def\IF{\textbf{IF}}
\def\BState{\State\hskip-\ALG@thistlm}
\newcommand{\blind}{1}
\begin{document}

\def\spacingset#1{\renewcommand{\baselinestretch}%
{#1}\small\normalsize} \spacingset{1}


\if1\blind
{
  \title{Doubly Robust Nonparametric Instrumental Variable Estimators for Survival Outcomes}
  \author{Youjin Lee\thanks{Center for Causal Inference, University of Pennsylvania. Email: youjin.lee@pennmedicine.upenn.edu},~Edward H. Kennedy\thanks{Department of Statistics, Carnegie Mellon University. Email: edward@stat.cmu.edu}, and Nandita Mitra\thanks{Department of Biostatistics, Epidemiology and Informatics, University of Pennsylvania. Email: nanditam@pennmedicine.upenn.edu} }
    \date{}
  \maketitle
} \fi

\if0\blind
{
  \bigskip
  \bigskip
  \bigskip
  \begin{center}
    {\LARGE\bf Title}
\end{center}
  \medskip
} \fi



\begin{abstract}
Instrumental variable (IV) methods allow us the opportunity to address unmeasured confounding in causal inference. However, most IV methods are only applicable to discrete or continuous outcomes with very few IV methods for censored survival outcomes. In this work we propose nonparametric estimators for the local average treatment effect on survival probabilities under both nonignorable and ignorable censoring. We provide an efficient influence function-based estimator and a simple estimation procedure when the IV is either binary or continuous. The proposed estimators possess double-robustness properties and can easily incorporate nonparametric estimation using machine learning tools. In simulation studies, we demonstrate the flexibility and efficiency of our proposed estimators under various plausible scenarios. We apply our method to the Prostate, Lung, Colorectal, and Ovarian Cancer Screening Trial for estimating the causal effect of screening on survival probabilities and investigate the causal contrasts between the two interventions under different censoring assumptions.
\end{abstract}

%

\noindent%
{\it Keywords:}
Censoring; Instrumental variable; Local average treatment effect; Nonparametric estimation.
\vfill


\maketitle

\section{Introduction}
\label{s:intro}

Over the past few decades, we have seen an enormous growth in applications of survival analysis in clinical and epidemiological research that examine the association between the time to an event of interest and exposure. More recently, there has been more focus on the causal interpretation of survival comparisons beyond associational findings to better understand underlying causal mechanisms and to connect statistical findings to public policy making. To this end, causal inference methods such as propensity score weighting for survival outcomes have been developed~\citep{cole2004adjusted, cheng2012estimating, austin2014use, andersen2017causal}. However, these approaches do not directly address causal effect estimation in the presence of unmeasured confounding. 

Instrumental variable (IV) approaches have been widely adopted in medical and epidemiologic research, as well as in economics and the social sciences, to account for unmeasured confounding~\citep{angrist1996identification, hernan2006instruments, baiocchi2014instrumental}. 
However, even if one obtains a valid and strong instrument, standard IV approaches, such as two-stage least squares (2SLS)
may only be valid in the context of linear models or other collapsible models. When outcomes are right censored survival times, it is not straightforward to apply 2SLS methods when using popular survival models such as the Cox proportional hazards model.  For instance, the IV identification assumptions conflict with the underlying proportionality assumption of a Cox model, and treatment effects are often not collapsed into a hazard ratio~\citep{li2015instrumental, wan2018general, dukes2019doubly}. 

Recently, IV approaches using additive hazard models, which have a similar form to linear models and have interpretable coefficients,  have been developed~\citep{tchetgen2015instrumental, li2015instrumental, dukes2019doubly, brueckner2019instrumental}. \cite{li2015instrumental} directly considered a linear structural equation model for hazards and estimated a consistent estimator for the associated coefficients.
\cite{tchetgen2015instrumental}, extending \cite{terza2008two}'s approach, proposed a so-called control function approach that includes the residuals obtained from the first-stage regression in the second stage model as a proxy of unmeasured confounders.
Recently,~\cite{brueckner2019instrumental} further considered residual inclusion in a semi-parametric additive hazard model where the treatment effect could be time-dependent, and
\cite{dukes2019doubly} have proposed doubly robust estimators for the hazard difference that have flexible properties. 

Cox proportional models have also been considered in the IV setting.  \cite{mackenzie2014using} adapted a Cox model with an additive term to represent unmeasured confounding, and proposed an estimating equation based on this particular model structure. \cite{kianian2019causal} proposed the weighting scheme to identify the local causal effect that is applicable to instrumental variable estimation of proportional hazards models. \cite{martinez2019adjusting} proposed using two-stage residual inclusion with an individual frailty term to handle the non-linear structure of the Cox model.

Other parametric and semi-parametric survival models have been also considered in the IV setting~(e.g.,~\cite{li2015bayesian, huling2019instrumental, wan2015bias}). However,  these model choices were mostly chosen to avoid theoretical difficulties rather than a priori knowledge of survival distributions.
In this paper, we aim to go beyond specifying a particular survival model. 

In this work, we develop a nonparametric IV-based method for estimating the causal effect of a binary treatment on survival probability in the presence of unmeasured confounding. Our nonparametric estimator is based on semiparametric theory and influence function-based inference, which allows the estimator to be robust and efficient while still enjoying flexibility in estimating survival and censoring distributions. 
In Section~\ref{sec:setting}, we introduce notation, data structures, and assumptions. In Section~\ref{sec:propose} we propose nonparametric estimators under two different censoring assumptions 
and demonstrate their asymptotic properties in Section~\ref{sec:asymptotic}. The proposed method is then compared to other methods through simulation studies in Section~\ref{sec:simulation}. We apply our method to a  colorectal cancer screening study using data from the Prostate, Lung, Colorectal, and Ovarian (PLCO) Cancer Screening Trial in  Section~\ref{sec:application}.

\section{Setting}
\label{sec:setting}
\subsection{Notation and data structure}
\label{ssec:notation}

Consider an instrumental variable $Z$, either binary $Z \in \{0,1\}$ or continuous $Z \in \mathbb{R}$, a binary treatment $A \in \{0,1\}$, and baseline covariates $\bm{X}_{0} \in \mathbb{R}^{q}$. We denote the time to event (e.g., time to death or time to discharge from hospital) $T$ as our primary outcome of interest.
Let $Y_{t} := \mathbb{I}(T > t)$ be an indicator of the event not occurring before time $t \in [0, \tau]$, so $E\left[Y_{t}\right] = E\left[\mathbb{I}(T > t) \right]$ is the survival probability at time $t$.

Time to event data are often subject to right censoring (such as subjects lost to follow-up prior to observing the event or those who have not experienced the event by end-of-study). In this setting, with $C$ denoting the time at censoring, we observe only one of $T$ and $C$ for each subject, whichever comes first. Let $R = \mathbb{I}( T < C )$ indicate whether a subject's event precedes censoring.
In summary we have $n$, independent and identically distributed (i.i.d.) observations of $\mathcal{O} = \left( \bm{X}_{0}, Z, A, R Y_{1},  R Y_{2}, \ldots,  R Y_{\tau} \right)$: $\{ \mathcal{O}_{1}, \mathcal{O}_{2}, \ldots, \mathcal{O}_{n} \} \overset{i.i.d.}{\sim} \mathbb{P}$.
The fact that $Y_{t}$ may not be observable if $R = 0$ restricts the identification of a causal effect. 
Further,  any remaining unmeasured confounding after adjusting for $\bm{X}_{0}$ may bias the causal effect of treatment $A$ on $Y_{t}$. 
We will address unmeasured confounding via an instrument to  estimate a local causal effect.

The representation of the observed failure times through $\{ R Y_{t} \}$, however, may lose some information on the censoring time $C$ that is completely observable if $R=0$ or partially observable (e.g., lower bound of $C$) if $R=1$. In a way that does not miss any information on the observed censoring time, we can also view the observed data as i.i.d. observations of $\mathcal{O}^{\prime}$ = $\left( \bm{X}_{0}, Z, A, \min (T, C), R \right)$. In Section~\ref{ssec:hazards}, we will develop the causal estimator based on the latter representation, too.

\subsection{Causal estimands}

Our target estimand is the causal difference in survival probabilities under two different treatment arms ($A=1$ vs. $A=0$). This, for example, would help to answer the question: what is the difference in the chance of surviving beyond five years after receiving adjuvant chemotherapy versus radiation?  
Even though treatment effects on survival are most commonly evaluated using hazard ratios estimated from a Cox proportional hazards model, a direct contrast of survival probabilities is often of primary relevance to clinicians and patients. 

We use a potential outcomes framework ~\citep{neyman1923application, rubin1974estimating} to formally define our causal estimand. Let us first consider a binary instrument $Z \in \{ 0, 1 \}$. 
A potential outcome denoted by $Y^{A = a}_{t}$ refers to the outcome $Y_{t}$ that would be observed if the subject takes the treatment ($a=1$) or the control ($a=0$); and $A^{Z=z}$  refers to the treatment value $A$ that would be observed when the subject is assigned the instrument $(z=1)$ or not $(z=0)$. Equation~\eqref{eq:psi_b} formally introduces our target estimand, implying a \textit{causal} difference in survival probability at time $t$ between  subjects under  treatment ($A=1$) versus  control ($A=0$) for those who would take a treatment if and only if they were assigned the instrument ($1 = A^{Z=1} > A^{Z=0} = 0$).
\begin{equation}
\label{eq:psi_b}
\psi^{\text{b}}(t)  = \mathbb{E} \left( Y^{A=1}_{t} - Y^{A=0}_{t} | A^{Z=1} > A^{Z=0}  \right)
\end{equation}
This estimand is a \textit{local} causal effect applied to the subpopulation of \textit{``compliers"} in the sense that we are conditioning on the subject who would take a treatment if and only if they receive the instrument. This estimand is also called the complier average causal effect (CACE) or local average treatment effect (LATE)~\citep{angrist1996identification, frolich2007nonparametric, tan2006regression, ogburn2015doubly}. One rationale for focusing on this subpopulation is that the LATE is defined by the effect due to the treatment induced by the instrumental variable, and this effect is free from unmeasured confounding. This enables us to obtain an unbiased causal effect due to the treatment.

Although many of the IV methods for survival outcomes have been developed for a binary instrument~\citep{richardson2017nonparametric, kianian2019causal}, instruments in practice are often continuous, e.g., genetic risk scores in Mendelian randomization~\citep{burgess2013use, nordestgaard2012effect} or differential distance from the institution where the treatments are provided ~\citep{rassen2009instrumental, baiocchi2014instrumental}. 

When $Z$ is continuous, our target estimand is similar to the binary case, but with a slightly different subpopulation. For a positive $\kappa > 0$, $\psi^{\text{c}, \kappa}(t)$ denotes our causal estimand with a continuous instrument:
\begin{equation}
	\label{eq:psi_c}
	\psi^{\text{c}, \kappa}(t) = \mathbb{E}\left( Y^{A=1}_{t} - Y^{A=0}_{t} | A^{Z+\kappa} > A^{Z - \kappa} \right)
\end{equation}
The target estimand in \eqref{eq:psi_c} is defined by the difference in survival probabilities at time $t$ between the treatment and control groups within the subpopulation in which subjects take the treatment $(A=1)$ when they are assigned an instrument value of $Z + \kappa$ and take the control when they are assigned an instrument value $Z-\kappa$. For example, if $Z$ denotes  distance  to the nearest hospital that provides a particular treatment of interest and $A$ is an indicator of receiving that treatment,  a difference of $2 \kappa$ could be defined as  distance to the hospital that would alter the intervention received.
\cite{mauro2018instrumental} elucidate advantages to this type of causal estimand. We provide details of our causal estimand~\eqref{eq:psi_c} when the instrument is continuous in the Supporting Information. In the the remaining sections, we  focus on the estimation of~\eqref{eq:psi_b} with a binary IV.

\subsection{Identification}

Clearly,  for each subject, we are not able to observe the potential outcomes under both treatments nor under different instrument values. Moreover, $Y_{t}$ is not observable if censoring precedes the event at $t$. The following identification assumptions are thus needed to estimate $\psi^{\text{b}}(t)$ from the observables, $\mathcal{O} = \left( \bm{X}_{0}, Z, A, R Y_{1},  R Y_{2}, \ldots,  R Y_{\tau} \right)$. 

\begin{itemize}
    \item[] (A1)  Consistency: If $Z = z$, then $A = A^{z}$; if $A = a$, then $Y_{t} = Y^{a}_{t}$.
    \item[] (A2)  Ignorability: $Z \bigCI (Y^{z}_{t}, A^{z}) | \bm{X}_{0}$.
     \item[] (A3) Exclusion restriction: $Y^{Z=z}_{t} = Y^{Z=z, A^{z}}_{t} = Y^{A^{z}}_{t}$.
     \item[] (A4) Independent censoring indicator: $R \bigCI (Y_{t}, Y_{t+1}, \ldots, Y_{\tau}) | \bm{X}_{0}, Z, A$.
      \item[] (A5) Positivity:  $\mathbb{P}\left( Z = z | \bm{X}_{0} \right) > 0$ and $\mathbb{P}\left( R  = 1 | \bm{X}_{0},  Z = z, A = a\right) > 0$.
     \item[] (A6) Monotonicity: $\mathbb{P}\left( A^{Z=0} > A^{Z=1} \right) = 0$; i.e., $\mathbb{P} \left( A^{Z=1} \leq A^{Z=0} \right) = 1$.
\end{itemize}     

Assumptions (A1)--(A6) are needed to bridge the gap between observations and potential outcomes, and  observables and non-observables. Note that Assumption (A4) is a key assumption needed to identify the causal effect when there is censoring. This assumption implies that whether an event precedes censoring is independent of the survival indicator at \textit{any time point t} ($=0,1,\ldots, \tau$) conditional on baseline covariates, instrument, and treatment. 
Assumption (A4) puts a constraint on the gap time between the censoring and failure times, i.e., $Q := T - C$; if this gap time $Q$, which is not observable, is conditionally independent of failure time $T$, then (A4) holds. Therefore, if the censoring time is a random perturbation around the failure time (e.g., when a patient's exit time from the study is nearly randomly distributed around their failure time), then assumption (A4) is satisfied. In other words, in practice, if censoring time is strongly associated with failure time, (A4) would be a reasonable assumption.
However, this assumption might not be justifiable in some contexts (e.g., administrative censoring or uniformly distributed censoring times), so we will consider ignorable censoring  in Section~\ref{ssec:hazards}.
\section{Proposed Nonparametric Estimator}
\label{sec:propose}
We have discussed a set of identification assumptions to connect our target estimand~\eqref{eq:psi_b}, represented through  counterfactuals, to observable quantities. We now focus on understanding the true data-generating process $\mathbb{P}$ of the observed data $\mathcal{O}$ (under identification assumptions) to estimate our causal estimand in~\eqref{eq:psi_b} with reasonably small bias and variance. This is often accomplished with parametric modeling of $\mathcal{O}$; for example, we often directly fit a Cox proportional hazards model for $T$ conditioned on $A$ and $\bm{X}_{0}$. 
However, it is almost impossible to guarantee that we have modeled the true data generating process correctly, particularly when $\bm{X}_{0}$ is high-dimensional. 
We instead present a causal estimator with more flexible properties using its influence function in which case we do not need to assume a correct parametric model for the true data-generating process.

Influence function-based causal inference has been used less frequently with IVs and survival~\citep{robins1991correcting, vansteelandt2014structural, mauro2018instrumental, kennedy2019robust, yang2020semiparametric, dukes2019doubly, diaz2019statistical}, compared to more standard uncensored settings with no unmeasured confounding.
An influence function for a causal estimand provides an optimal estimator, with knowledge on asymptotic behaviour,  and promises both flexibility and efficiency.
Finding an influence function is often  challenging, but  representating  the target estimand as a series of conditional expectations can provide relatively simple and intuitive influence functions. \cite{van2003unified} and  \cite{kennedy2016semiparametric} provide excellent reviews of influence function theory. Here, we develop an influence function-based estimator for the LATE and demonstrate its estimation procedure and properties.  
\subsection{Nonparametric estimator conditioning on non-censored subjects}
\label{ssec:binary}

An influence function is a function of the observed data, so we must first  convert  counterfactuals into  observable quantities.
The  following Lemma shows how to represent our target estimand~\eqref{eq:psi_b} through identifiable conditional expectations. 

\begin{lemma} (Target estimand $\psi^{\text{b}}(t)$ through conditional expectations).
	\label{lemma:binary_psi}
Under the identification assumptions (A1)-(A6), our target estimand, the LATE of $A$ on $Y_{t}$, can be represented through the following conditional expectations:
\[
		\psi^{\text{b}}(t)   = \frac{ \mathbb{E} \left(  \mathbb{E} \left( Y_{t} | \bm{X}_{0}, Z = 1  \right)  \right) -    \mathbb{E} \left(   \mathbb{E} \left( Y_{t} | \bm{X}_{0},  Z = 0 \right)  \right) }{\mathbb{E} \left(  \mathbb{E} \left( A | \bm{X}_{0}, Z = 1  \right)  \right) -  \mathbb{E} \left(  \mathbb{E} \left( A | \bm{X}_{0}, Z = 0 \right)  \right)}, \\
\]
where $\mathbb{E}\left( Y_{t} | \bm{X}_{0}, Z = z \right) = \sum\limits_{a \in \{0,1 \}} \mathbb{E}\left(  Y_{t} | \bm{X}_{0}, R = 1, Z = z, A = a \right) \mathbb{P}(A = a | \bm{X}_{0},  Z = z)$ for $z \in \{0,1\}$.
\end{lemma}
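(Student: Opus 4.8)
The plan is to split the argument into two essentially independent pieces: first, disposing of the censoring so that the inner conditional expectations appearing in the lemma are written in terms of quantities identified from the observed law of $\mathcal{O}$; and second, carrying out the classical Imbens--Angrist LATE identification, now performed conditionally on $\bm{X}_0$ and then averaged.

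For the censoring piece I would start from the tower property, $\mathbb{E}(Y_t \mid \bm{X}_0, Z=z) = \sum_{a\in\{0,1\}} \mathbb{E}(Y_t \mid \bm{X}_0, Z=z, A=a)\,\mathbb{P}(A=a \mid \bm{X}_0, Z=z)$, and then invoke (A4): since $R \bigCI (Y_t,\dots,Y_\tau)\mid \bm{X}_0, Z, A$, additionally conditioning on $R=1$ does not alter the conditional law of $Y_t$, so $\mathbb{E}(Y_t \mid \bm{X}_0, Z=z, A=a) = \mathbb{E}(Y_t \mid \bm{X}_0, R=1, Z=z, A=a)$, with (A5) ensuring this conditional expectation is well defined. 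This produces exactly the displayed formula for $\mathbb{E}(Y_t\mid \bm{X}_0, Z=z)$ and reduces the lemma to showing that the ratio of the $\bm{X}_0$-averaged contrasts in $\mathbb{E}(Y_t\mid\bm{X}_0,Z=z)$ and in $\mathbb{E}(A\mid\bm{X}_0,Z=z)$ equals $\psi^{\text{b}}(t)$.

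For the LATE piece the key reduction is the chain $Y_t = Y_t^{A} = Y_t^{A^{z}} = Y_t^{Z=z}$ whenever $Z=z$, obtained by combining the two consistency statements in (A1) with the exclusion restriction (A3); similarly $A = A^{Z=z}$ when $Z=z$. Then (A2) ignorability gives $\mathbb{E}(Y_t\mid\bm{X}_0,Z=z) = \mathbb{E}(Y_t^{Z=z}\mid\bm{X}_0)$ and $\mathbb{E}(A\mid\bm{X}_0,Z=z) = \mathbb{E}(A^{Z=z}\mid\bm{X}_0)$, so averaging over $\bm{X}_0$ turns the numerator into $\mathbb{E}(Y_t^{Z=1}-Y_t^{Z=0}) = \mathbb{E}(Y_t^{A^{Z=1}} - Y_t^{A^{Z=0}})$ and the denominator into $\mathbb{E}(A^{Z=1}-A^{Z=0})$. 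I would finish by the standard principal-stratification decomposition, using (A6) to rule out defiers: on the always-taker and never-taker strata $A^{Z=1}=A^{Z=0}$, so $Y_t^{A^{Z=1}}=Y_t^{A^{Z=0}}$ and these terms drop from the numerator, while on the complier stratum $Y_t^{A^{Z=1}}-Y_t^{A^{Z=0}} = Y_t^{A=1}-Y_t^{A=0}$ and $A^{Z=1}-A^{Z=0}=1$. Hence the numerator equals $\mathbb{E}(Y_t^{A=1}-Y_t^{A=0}\mid A^{Z=1}>A^{Z=0})\,\mathbb{P}(A^{Z=1}>A^{Z=0})$ and the denominator equals $\mathbb{P}(A^{Z=1}>A^{Z=0})$; dividing gives $\psi^{\text{b}}(t)$.

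I expect the only real care to be bookkeeping: being precise about which assumption licenses each equality (in particular the composition $Y_t = Y_t^{Z=z}$ under $Z=z$, which quietly uses both halves of (A1) together with (A3)), and checking that every conditional expectation is well defined via (A5). The censoring step is short but is the genuinely new ingredient here relative to the textbook binary-IV LATE identification; the rest is the classical Angrist--Imbens--Rubin argument run inside the conditional expectation given $\bm{X}_0$. There is also an implicit nondegeneracy requirement, $\mathbb{P}(A^{Z=1}>A^{Z=0})>0$ (instrument relevance), needed for the ratio to be meaningful, which I would flag rather than prove.
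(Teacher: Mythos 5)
Your proposal is correct and follows essentially the same route as the paper: the tower property plus (A4) to replace $\mathbb{E}(Y_t\mid\bm{X}_0,Z=z,A=a)$ by its $R=1$ version, ignorability/consistency/exclusion to equate the observed $Z$-contrasts with $\mathbb{E}(Y_t^{Z=1}-Y_t^{Z=0})$ and $\mathbb{E}(A^{Z=1}-A^{Z=0})$, and monotonicity to reduce the ratio to the complier contrast (the paper writes this last step compactly as $(Y_t^{A=1}-Y_t^{A=0})\mathbb{I}(A^{Z=1}>A^{Z=0})=Y_t^{A^{Z=1}}-Y_t^{A^{Z=0}}$, which is exactly your principal-stratification decomposition read in the other direction). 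Your explicit flagging of instrument relevance, $\mathbb{P}(A^{Z=1}>A^{Z=0})>0$, is a point the paper leaves implicit.
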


Note that due to censoring, if $R=0$ (i.e., if censoring precedes event), $Y_{t}$ might not be observed, so we cannot identify $\mathbb{E}(Y_{t} | \bm{X}_{0}, Z=1)$. However, if we expand $\mathbb{E}(Y_{t} | \bm{X}_{0}, Z)$ into two conditionals, we are able to identify it using two, identifiable expectations:
\begin{eqnarray*}
	\mathbb{E}(Y_{t} | \bm{X}_{0}, Z ) &=&  \sum\limits_{a \in \{0,1\}} \mathbb{E}(Y_{t} | \bm{X}_{0}, Z, A = a) \mathbb{P}(a | \bm{X}_{0}, Z) \\
	&=& \sum\limits_{a \in \{ 0, 1\}} \mathbb{E}(Y_{t} | \bm{X}_{0}, Z, A = a, R = 1) \mathbb{P}(a | \bm{X}_{0}, Z).
\end{eqnarray*}	 
The last equality in the above is due to (A4). For ease of notation, we represent the estimand as:
$\psi^{\text{b}, Z=z}_{1}(t) =  \mathbb{E}\left( \mathbb{E}\left( Y_{t} | \bm{X}_{0}, Z = j  \right) \right)$, 
$\psi^{\text{b}, Z = z}_{2} = \mathbb{E}\left(  \mathbb{E} \left( A | \bm{X}_{0}, Z = j \right) \right)$ for $z \in\{0,1\}$, and 
$\psi^{\text{b}} (t) =  \left\{  \psi^{\text{b}, Z= 1}_{1}(t) - \psi^{\text{b}, Z=0}_{1}(t) \right\} \left\{  \psi^{\text{b}, Z=1}_{2} - \psi^{\text{b}, Z=0}_{2} \right\}^{-1} := \psi^{\text{b}}_{1}(t) (\psi^{\text{b}}_{2})^{-1}$.

Before introducing an influence function of $\psi^{\text{b}}(t)$, we
consider the following nuisance functions that are estimable from the data where $\mathbb{I}(\cdot)$ is an indicator function:
\begin{itemize}[label={}]
	\item[(i)] \textbf{Survival indicator}:  $\mu_{t,z, a}(\bm{X}_{0}) = \mathbb{E}(Y_{t} | \bm{X}_{0}, Z = z, A =a) = \mathbb{E}(Y_{t} | \bm{X}_{0}, R = 1, Z = z, A = a)$.  
	\item[(ii)] \textbf{Censoring indicator}: $ \omega_{z,a}(\bm{X}_{0}) = \mathbb{E}(\mathbb{I}(T < C) | \bm{X}_{0}, Z =z, A = a)$. 
	\item[(iii)] \textbf{Treatment propensity score}:  $\pi_{z}(\bm{X}_{0}) = \mathbb{E}(A  | \bm{X}_{0}, Z = z)$.
	\item[(iv)] \textbf{Instrument prevalence}: $\delta_{z}(\bm{X}_{0}) = \mathbb{P}(Z = z | \bm{X}_{0})$.
\end{itemize}
We now introduce an efficient influence function of $\psi^{b}(t)$ and a causal estimator based on this influence function.
Let $\IF(f)$ be an operator that produces an efficient influence function given a function $f$ and let $\mathbb{P}_{n}$ denote the empirical measure. For notational simplicity, we omit $(\bm{X}_{0})$ from each of the nuisance functions. Unless otherwise mentioned, all nuisance functions are conditioned on baseline covariates $\bm{X}_{0}$. Let $\Theta$ denote a set of the nuisance functions and $\hat{\Theta}$ denote the estimated nuisance functions.
\begin{theorem}[Influence function of $\psi^{b}(t)$]
	\label{thm:phib_influence}
Under  identification assumptions (A1)-(A6), the efficient influence function of the causal difference in survival probability among compliers, $\psi^{\text{b}}(t)$, is given by:
\[
	\mbox{\IF}\left( \psi^{\text{b}}(t) \right) =  \left\{ (M_{1} (t) - M_{0} (t)   ) - \psi^{\text{b}}(t)( \Pi_{1} - \Pi_{0} ) \right\} / \psi^{\text{b}}_{2},
\]	
where, for each of the instrumental variable values $j=0,1$:
\begin{eqnarray}
\label{eq:M_j}
M_{j} (t; \Theta) &=&  \mu_{t, j, 1} \pi_{j} + \mu_{t, j, 0}(1-\pi_{j}  ) +  \frac{\mathbb{I}(Z = j)}{\delta_{j}} \left[   \frac{R A}{\omega_{j, 1} } (Y_{t} - \mu_{t,j, 1}) + \mu_{t, j, 1}(A - \pi_{j}) \right] \nonumber \\ &+& \frac{\mathbb{I}(Z = j)}{\delta_{j}} \left[   \frac{R (1-A)}{\omega_{j, 0} } (Y_{t} - \mu_{t, j, 0}) + \mu_{t, j, 0}((1-A) - (1-\pi_{j}) \right]  \nonumber \\
\Pi_{j}(\Theta) &=&  (A - \pi_{j} ) \mathbb{I}(Z = j) / \delta_{j} + \pi_{j}. 	\nonumber
\end{eqnarray}	
\end{theorem}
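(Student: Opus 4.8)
The plan is to derive the efficient influence function of $\psi^{\text{b}}(t) = \psi^{\text{b}}_1(t)/\psi^{\text{b}}_2$ from Lemma~\ref{lemma:binary_psi} by first obtaining the efficient influence functions of the numerator and denominator separately, and then combining them via the standard delta-method/quotient rule for influence functions: if $\psi = \psi_1/\psi_2$ then $\mbox{\IF}(\psi) = \{\mbox{\IF}(\psi_1) - \psi\,\mbox{\IF}(\psi_2)\}/\psi_2$. Given the form of the claimed answer, I would define $M_j(t;\Theta)$ to be the (uncentered) efficient influence function of the ``mean survival at time $t$ under instrument level $j$'' functional $\psi^{\text{b},Z=j}_1(t) = \mathbb{E}\{\mathbb{E}(Y_t\mid\bm X_0, Z=j)\}$, and $\Pi_j(\Theta)$ the analogous object for $\psi^{\text{b},Z=j}_2 = \mathbb{E}\{\pi_j(\bm X_0)\}$; then $\mbox{\IF}(\psi^{\text{b}}_1(t)) = M_1(t) - M_0(t)$ and $\mbox{\IF}(\psi^{\text{b}}_2) = \Pi_1 - \Pi_0$ by linearity, and the quotient rule delivers the stated expression with the $1/\psi^{\text{b}}_2$ factor outside.

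The core computation is therefore the influence function of $\psi^{\text{b},Z=j}_1(t)$. Here I would use the nested structure exposed in Lemma~\ref{lemma:binary_psi}: write this functional as $\mathbb{E}\{\sum_{a}\mu_{t,j,a}(\bm X_0)\,\mathbb{P}(A=a\mid \bm X_0, Z=j)\}$, i.e. a mean over $\bm X_0$ of a product of a regression function $\mu_{t,j,a}$ (which, crucially by (A4), equals the regression restricted to $R=1$, so it is identified) and a propensity $\pi_j$. I would compute its EIF by the usual ``peel off one conditional expectation at a time'' strategy (cf. the cited reviews), treating it as a composition of (a) the outer marginalization over $\bm X_0$, which contributes the plug-in term $\mu_{t,j,1}\pi_j + \mu_{t,j,0}(1-\pi_j)$; (b) estimation of $\mu_{t,j,a}$, which contributes a weighted residual term of the form $\frac{\mathbb{I}(Z=j)}{\delta_j}\cdot\frac{R\,\mathbb{I}(A=a)}{\omega_{j,a}}(Y_t - \mu_{t,j,a})$ — the $\mathbb{I}(Z=j)/\delta_j$ reweights from $Z=j$ to the margin, and $R\,\mathbb{I}(A=a)/\omega_{j,a}$ is the inverse-probability-of-censoring (and of treatment) weight that makes the term mean-zero using (A5) positivity and (A4); and (c) estimation of $\pi_j$, contributing $\frac{\mathbb{I}(Z=j)}{\delta_j}\mu_{t,j,a}(\mathbb{I}(A=a) - \mathbb{P}(A=a\mid\bm X_0, Z=j))$. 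Summing the $a=1$ and $a=0$ pieces yields exactly $M_j(t;\Theta)$. The analogous, simpler calculation for $\mathbb{E}\{\pi_j(\bm X_0)\}$ — plug-in $\pi_j$ plus the reweighted residual $(A-\pi_j)\mathbb{I}(Z=j)/\delta_j$ — gives $\Pi_j(\Theta)$.

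I would then verify two things: that each proposed term is mean-zero at the truth (so that $M_j$ minus its mean is a genuine influence function and the centering is as written), using iterated expectations together with (A2)–(A5); and that the resulting influence function is the \emph{efficient} one, i.e. it lies in the nonparametric tangent space. For the latter, since we impose no restrictions on $\mathbb{P}$ beyond the identification assumptions (which constrain only the link between observables and counterfactuals, not the law of $\mathcal O$ itself), the model for the observed data is nonparametric, so any valid influence function is automatically the efficient one; I would note this and, if desired, confirm it by checking that each term is a score along an appropriate one-dimensional submodel. The main obstacle is bookkeeping in step (b): getting the censoring weight $R/\omega_{j,a}$ and the treatment indicator/propensity structure to interlock correctly so that the cross-terms cancel and the term is orthogonal to the other nuisance scores — in particular, carefully invoking (A4) to justify replacing $\mathbb{E}(Y_t\mid\bm X_0,Z,A)$ by $\mathbb{E}(Y_t\mid\bm X_0,Z,A,R=1)$ at the right moment, and making sure the $\mu_{t,j,a}(A-\pi_j)$-type terms are attributed to the propensity piece rather than double-counted.
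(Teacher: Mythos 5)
Your proposal matches the paper's proof essentially step for step: the paper likewise computes $\mbox{\IF}(\psi^{\text{b},Z=j,A=a}_{1}(t))$ by applying the product rule to the three factors $\mu_{t,j,a}(\bm{x}_{0})$, $\pi_{j}(\bm{x}_{0})$, and $d\mathbb{P}(\bm{x}_{0})$ inside the integral (yielding exactly your contributions (a), (b), (c), with the treatment-propensity factor cancelling so that the residual weight is $R\,\mathbb{I}(A=a)\mathbb{I}(Z=j)/(\omega_{j,a}\delta_{j})$ as you wrote), sums over $a$ to get $M_{j}(t)-\psi^{\text{b},Z=j}_{1}(t)$, derives $\Pi_{j}-\psi^{\text{b},Z=j}_{2}$ analogously, and combines via the quotient rule. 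Your additional remarks on mean-zero verification and efficiency via the nonparametric tangent space are consistent with, though not spelled out in, the paper's argument.
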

Note that in the above equations, all of the variables are observable given the estimated nuisance functions -- in particular, $Y_{t}$  is always accompanied by $R=1$. 
Denote the influence function of $\psi^{\text{b}}_{1}(t; \Theta)$ and $\psi^{\text{b}}_{2}(\Theta)$ as $\phi^{\text{b}}_{1}(t;\Theta)$ and $\phi^{\text{b}}_{2}(\Theta)$, respectively. Then we can rephrase $M_{1}(t; \Theta) - M_{0}(t; \Theta) = \phi^{\text{b}}_{1}(t; \Theta) + \psi^{\text{b}}_{1}(t; \Theta)$ and $\Pi_{1}(\Theta) - \Pi_{0}(\Theta) = \phi^{\text{b}}_{2}(\Theta) + \psi^{\text{b}}_{2}(\Theta)$.
The influence function above guides us in constructing an efficient, nonparametric causal estimator for~\eqref{eq:psi_b} that solves the estimating equation  $\mathbb{P}_{n} (  \mbox{\IF}( \psi^{\text{b}}(t ; \hat{\Theta}) )  ) = 0$ given $\hat{\Theta} = \{ \hat{\mu}_{t, z, a},  \hat{\omega}_{z, a}, \hat{\pi}_{z}, \hat{\delta}_{z} \}$. 
\begin{proposition}[Influence function-based estimator for $\psi^{b}$(t)]
	An influence function-based estimator $\hat{\psi}^{\text{b}}(t)$ or $\hat{\psi}^{\text{b}}(t; \hat{\Theta})$ given the estimated nuisance functions $\hat{\Theta}$ is as follows:
\begin{eqnarray}
\label{eq:IFest_psib}
	&& \hat{\psi}^{\text{b}}(t; \hat{\Theta}) =\mathbb{P}_{n}\left( M_{1}(t; \hat{\Theta}) - M_{0}(t; \hat{\Theta}) \right) / \mathbb{P}_{n} \left( \Pi_{1}(\hat{\Theta}) - \Pi_{0}(\hat{\Theta}) \right).
\end{eqnarray}
\end{proposition}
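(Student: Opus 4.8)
The plan is to show that the estimator $\hat\psi^{\text{b}}(t;\hat\Theta)$ displayed in \eqref{eq:IFest_psib} is precisely the root of the empirical estimating equation $\mathbb{P}_{n}\big(\mbox{\IF}(\psi^{\text{b}}(t;\hat\Theta))\big)=0$ built from the efficient influence function of Theorem~\ref{thm:phib_influence}, and to record that this root is well defined. First I would substitute the estimated nuisances $\hat\Theta$ into the expression for $\mbox{\IF}(\psi^{\text{b}}(t))$ and treat the estimand value as an unknown scalar $c$; the key structural point is that $c$ enters the influence function \emph{only} through the single term $-c(\Pi_{1}-\Pi_{0})$, so the estimating equation is affine in $c$, and the normalizing scalar $\psi^{\text{b}}_{2}$ (more precisely, whatever plug-in is used for it) multiplies the entire influence function and hence cancels from $\mathbb{P}_{n}(\cdot)=0$.

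With those two observations the estimating equation reduces to $\mathbb{P}_{n}\{M_{1}(t;\hat\Theta)-M_{0}(t;\hat\Theta)\}-c\,\mathbb{P}_{n}\{\Pi_{1}(\hat\Theta)-\Pi_{0}(\hat\Theta)\}=0$, and solving for $c$ gives exactly \eqref{eq:IFest_psib}, provided the denominator $\mathbb{P}_{n}\{\Pi_{1}(\hat\Theta)-\Pi_{0}(\hat\Theta)\}$ is nonzero. For the latter I would invoke the decomposition recorded in the text, $\Pi_{1}-\Pi_{0}=\phi^{\text{b}}_{2}+\psi^{\text{b}}_{2}$ with $\phi^{\text{b}}_{2}$ mean-zero, so that $\mathbb{P}_{n}\{\Pi_{1}-\Pi_{0}\}$ consistently estimates $\psi^{\text{b}}_{2}=\mathbb{E}(\pi_{1}-\pi_{0})=\mathbb{P}(A^{Z=1}>A^{Z=0})$ (using consistency, ignorability, and monotonicity); this is the complier proportion, which must be bounded away from zero for the estimand itself to be defined in Lemma~\ref{lemma:binary_psi}. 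Hence the denominator is nonvanishing with probability tending to one, the root is unique, and $\hat\psi^{\text{b}}(t;\hat\Theta)$ as defined is the unique solution of the empirical estimating equation.

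I would close by noting the two companion facts that make the ratio a sensible estimator: by the analogous decomposition $M_{1}(t)-M_{0}(t)=\phi^{\text{b}}_{1}(t)+\psi^{\text{b}}_{1}(t)$ the numerator targets $\psi^{\text{b}}_{1}(t)=\psi^{\text{b},Z=1}_{1}(t)-\psi^{\text{b},Z=0}_{1}(t)$, and every quantity appearing in $M_{j}$ and $\Pi_{j}$ is a function of the observables together with $\hat\Theta$ only -- in particular $Y_{t}$ always appears multiplied by $R$, as already remarked after Theorem~\ref{thm:phib_influence} -- so the estimator is computable from $\mathcal{O}_{1},\dots,\mathcal{O}_{n}$. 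Together with the continuous mapping theorem this already yields $\hat\psi^{\text{b}}(t;\hat\Theta)=\psi^{\text{b}}(t)+o_{\mathbb{P}}(1)$ when the nuisances are consistent, with the sharper double-robustness and efficiency claims left to Section~\ref{sec:asymptotic}.

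The only real subtlety -- there is no heavy computation here -- is the legitimacy of ``plugging the estimand into its own efficient influence function and solving'': one has to check that $c\mapsto\mathbb{P}_{n}(\mbox{\IF}(c;\hat\Theta))$ is affine with slope $-\mathbb{P}_{n}\{\Pi_{1}(\hat\Theta)-\Pi_{0}(\hat\Theta)\}/\psi^{\text{b}}_{2}\neq0$, so that the $Z$-estimator equation has the single closed-form root \eqref{eq:IFest_psib}, and that the scalar standing in for $\psi^{\text{b}}_{2}$ in the denominator of $\mbox{\IF}$ is itself a consistent estimate, so that dropping it does not change the root. Everything else is direct algebra on the display in Theorem~\ref{thm:phib_influence}.
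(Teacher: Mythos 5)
Your proposal is correct and follows essentially the same route as the paper: the paper likewise obtains \eqref{eq:IFest_psib} by setting $\mathbb{P}_{n}\bigl(\mbox{\IF}(\psi^{\text{b}}(t;\hat\Theta))\bigr)=0$, noting the equation is affine in the estimand with the $1/\psi^{\text{b}}_{2}$ factor cancelling, and solving for the ratio. Your added remarks on the denominator being bounded away from zero (via the complier proportion under monotonicity) are a harmless elaboration the paper leaves implicit.
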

In Section~\ref{sec:asymptotic}, we provide estimation procedures and theoretical properties of this estimator.
In parallel, we present the influence function for~\eqref{eq:psi_c} under a continuous instrument in the Supporting Information S3.

\subsection{Nonparametric estimator conditioning on the risk set}
\label{ssec:hazards}

We provide a causal estimator under the common  ignorable censoring assumption based on $n$ i.i.d. observations of $\mathcal{O}^{\prime}$ = $\left( \bm{X}_{0}, Z, A, \min (T, C), R \right)$. Consider replacing (A4) by (A4*) Random censoring : $T \bigCI C \mid A, \bm{X}_{0}$.
The fundamental difference between (A4) and (A4*) is in the \textit{conditioning set} induced by censoring. Under (A4), the distribution of $Y_{t}$ is identified by conditioning on the non-censoring indicator $R=1$, which is time-invariant; on the other hand, (A4*) identifies the discrete hazards up to time point $t$ by conditioning on the risk set at $k=1,2,\ldots, t$, which is time-varying. Accordingly, nuisance functions that involve censoring, i.e., the conditional distribution of $Y_{t}$ and the censoring distribution, will be estimated differently under (A4*).

Under ignorable censoring assumption (A4*), an influence function-based nonparametric estimator for survival probabilities was proposed by~\cite{diaz2019statistical}.  In that work, survival and censoring functions were constructed through the product of hazards conditional on events  only being observed at a finite number of time points, $t= 1,2,\ldots, \tau$. 
First, let's consider the following nuisance functions:
\begin{enumerate}[label={}]
	\item[(i)] \textbf{Survival function}:  $S_{t,z,a}(\bm{X}_{0}) = \mathbb{E}(Y_{t} | \bm{X}_{0}, Z = z, A =a) =  \prod\limits_{k=1}^{t} \{ 1 - h_{k}(\bm{X}_{0}, Z = z, A = a) \}$, where $h_{k}(\bm{X}_{0}, Z = z, A = a) = \mathbb{E}(\mathbb{I}( \min(T, C) = k,  R = 1   | \bm{X}_{0}, Z = z, A = a)$.
	\item[(ii)] \textbf{Censoring function}: $G_{t, z,a}(\bm{X}_{0}) = \mathbb{E}(\mathbb{I}(C > t) | \bm{X}_{0}, Z = z , A = a) = \prod\limits_{k=1}^{t} \{ 1 - g_{k}(\bm{X}_{0}, Z = z, A = a) \}$, where $g_{k}(\bm{X}_{0}, A = a, Z = z) = \mathbb{E}(\mathbb{I}( \min(T, C) = k,  R = 0  ) | \bm{X}_{0}, Z = z, A = a)$.
	\item[(iii)] \textbf{Treatment propensity scores}:  $\pi_{z}(a ; \bm{X}_{0}) = \mathbb{E}(A  = a | \bm{X}_{0}, Z = z)$.
\end{enumerate}

Then the estimator $\hat{\psi}^{\text{b}}_{\text{Naive-hazard}} (t; \hat{\Theta})$ or $\hat{\psi}^{\text{b}}_{\text{Naive-hazard}} (t)$ based on the efficient influence function for $\mathbb{E}(Y^{A = 1}_{t} - Y^{A = 0}_{t} )$ proposed in~\cite{diaz2019statistical} is given by:
\begin{eqnarray}
\label{eq:naive_hazard}
\hat{\psi}^{\text{b}}_{\text{Naive-hazard}} (t; \hat{\Theta}) &=&  \mathbb{P}_{n} \left[ \Lambda^{A=1}(t; \hat{\Theta}) - \Lambda^{A=0}(t; \hat{\Theta}) \right], \\ 
\Lambda^{A = a} (t; \Theta) &= & - \sum\limits_{k=1}^{t} \frac{\mathbb{I}(A  = a)  \mathbb{I}(Y_{k-1} = 1)}{ \pi(a ; \bm{X}_{0}) G_{k-1, a}(\bm{X}_{0})}   \frac{S_{t, a}(\bm{X}_{0})}{ S_{k, a}(\bm{X}_{0})}  \{ R \mathbb{I}(Y_{k} = 0) - h_{k}(\bm{X}_{0}, A = a)  \} + S_{t, a}(\bm{X}_{0}), \nonumber
\end{eqnarray}
where $S_{k, a}(\bm{X}_{0}) = \prod\limits_{l=1}^{k} \{ 1 - h_{l}(\bm{X}_{0}, A = a) \}$ and 
$G_{k,a}(\bm{X}_{0}) = \prod\limits_{l=1}^{k} \{  1- g_{l}(\bm{X}_{0}, A = a) \}$
for $a  \in \{0, 1\}$.   The above estimator $\hat{\psi}^{\text{b}}_{\text{Naive-hazard}}(t)$ was shown to be a doubly robust and efficient estimator in the absence of unmeasured confounding~\citep{diaz2019statistical}. Note that we omit the subscript $z$ for each function since they do not condition on an instrument $Z$.

However, the estimator $\hat{\psi}^{\text{b}}_{\text{Naive-hazard}}(t)$ is susceptible to unmeasured confounding.
By adapting the above estimator~\eqref{eq:naive_hazard}, we propose a new influence function-based estimator for the LATE~\eqref{eq:psi_b} that retains the same properties as~\eqref{eq:naive_hazard} even in the presence of unmeasured confounding.
To do so, we introduce a binary instrument $Z$ with an additional nuisance function \textbf{(iv)} \textbf{instrumental prevalence:} $\delta_{z}(\bm{X}_{0}) = p( Z =z | \bm{X}_{0})$.  Then the next theorem introduces an influence function-based estimator for $\psi^{\text{b}}(t)$ under ignorable censoring (A4*).
We call this estimator $\hat{\psi}^{\text{b}}_{\text{IF-hazard}}(t)$. 
\begin{proposition}(Influence function-based estimator for $\psi^{\text{b}}(t)$ using discrete hazards)
Under the causal conditions (A1)--(A3), censoring condition (A4*), and the monotonicity assumption ($\mbox{A6}$), the efficient influence function-based estimator for $\psi^{\text{b}}(t)$ is given by:
\label{thm:if_hazard}
\begin{eqnarray}
\label{eq:if_hazard}
\hat{\psi}^{\text{b}}_{\text{IF-hazard}} (t; \hat{\Theta}) &=&  \frac{ \mathbb{P}_{n}(\widetilde{\Lambda}^{Z=1}(\hat{\Theta}) -   \widetilde{\Lambda}^{Z=0}(\hat{\Theta}) ) }{  \mathbb{P}_{n} ( \Pi_{1}(\hat{\Theta}) - \Pi_{0}(\hat{\Theta}) ) }, \mbox{ where }
\widetilde{\Lambda}^{Z=z} =  \widetilde{\Lambda}^{Z=z, A = 1} +   \widetilde{\Lambda}^{Z=z, A = 0}, \\   \widetilde{\Lambda}^{Z=z, A = a}(\Theta) & = & - \frac{\mathbb{I}(Z =  z)}{ \delta_{z}(\bm{X}_{0})} \left[ \sum\limits_{k=1}^{t} \frac{\mathbb{I}(A  = a)  \mathbb{I}(Y_{k-1} = 1)}{   \pi_{z}(a ; \bm{X}_{0}) G_{k-1, z, a}(\bm{X}_{0})}   \frac{S_{t, z, a}(\bm{X}_{0})}{ S_{k, z, a}(\bm{X}_{0})}  \{ R \mathbb{I}(Y_{k} = 0) - h_{k}(\bm{X}_{0}, z, a)  \}   \right]  \nonumber   \\
& + &  \frac{\mathbb{I}(Z =  z)}{ \delta_{z}(\bm{X}_{0})}  S_{t, z, a}(\bm{X}_{0}) (\mathbb{I}(A = a) - \pi_{z}(a ; \bm{X}_{0})) + S_{t, z, a}(\bm{X}_{0}) \pi(a ; \bm{X}_{0}) 
\nonumber,
\end{eqnarray}
assuming a new positivity assumption (A5*) $\delta_{z}(\bm{X}_{0}), \pi_{z}(a; \bm{X}_{0}), G_{k, z, a}(\bm{X}_{0}), S_{k, z, a}(\bm{X}_{0}) > 0$ for all $a,z \in \{0, 1\}$ and $k=1,2,\ldots, t$. 
\end{proposition}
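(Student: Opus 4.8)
The plan is to follow the template of Theorem~\ref{thm:phib_influence} almost line for line, the only change being that the conditional survival curve $\mathbb{E}(Y_t\mid\bm X_0,Z=z,A=a)$ is now identified and estimated through the discrete‑hazard (sequential‑regression) representation of \cite{diaz2019statistical} rather than through conditioning on $R=1$. So I would (i) redo the Wald/LATE identification of $\psi^{\text{b}}(t)$ under (A4*), (ii) compute the efficient influence functions of the numerator $\psi^{\text{b}}_1(t)$ and denominator $\psi^{\text{b}}_2$ separately, and (iii) combine them with the quotient rule and read off the estimating‑equation estimator.

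Step 1 (identification under (A4*)). Assumptions (A1)--(A3) and (A6) give the usual LATE identity $\psi^{\text{b}}(t)=\{\mathbb{E}[\mathbb{E}(Y_t\mid\bm X_0,Z=1)]-\mathbb{E}[\mathbb{E}(Y_t\mid\bm X_0,Z=0)]\}/\{\mathbb{E}[\pi_1]-\mathbb{E}[\pi_0]\}$ exactly as in Lemma~\ref{lemma:binary_psi}, since that argument does not touch censoring. What changes is the inner regression: under (A4*) together with (A2), the discrete‑time Kaplan--Meier identity makes $\mathbb{E}(Y_t\mid\bm X_0,Z=z,A=a)=\mathbb{P}(T>t\mid\bm X_0,Z=z,A=a)$ equal to the telescoping product $S_{t,z,a}(\bm X_0)=\prod_{k=1}^t\{1-h_k(\bm X_0,z,a)\}$ of the observed discrete sub‑hazards, with (A5*) ensuring every factor is well defined. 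Hence $\psi^{\text{b}}_1(t)=\mathbb{E}[\sum_a S_{t,1,a}\pi_1(a)]-\mathbb{E}[\sum_a S_{t,0,a}\pi_0(a)]$ and $\psi^{\text{b}}_2=\mathbb{E}[\pi_1]-\mathbb{E}[\pi_0]$.

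Step 2 (influence functions of the pieces, then the ratio). The denominator is unchanged: $\phi^{\text{b}}_2(\Theta)=(\Pi_1-\Pi_0)-\psi^{\text{b}}_2$ with $\Pi_z$ as in Theorem~\ref{thm:phib_influence}. For the numerator I would compute the efficient influence function of $\psi^{\text{b},Z=z}_1(t)=\mathbb{E}[\sum_a S_{t,z,a}(\bm X_0)\pi_z(a;\bm X_0)]$ by differentiating along a one‑dimensional submodel and projecting onto the tangent space, which factors into scores for $p(\bm X_0)$, $p(Z\mid\bm X_0)$, $p(A\mid\bm X_0,Z)$, and the discrete conditional laws of $(\min(T,C),R)$ given $(\bm X_0,Z,A)$ at each $k$. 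The $p(\bm X_0)$ block gives the centered plug‑in $\sum_a S_{t,z,a}\pi_z(a)-\psi^{\text{b},Z=z}_1(t)$; the $p(A\mid\bm X_0,Z)$ block gives $\tfrac{\mathbb{I}(Z=z)}{\delta_z}\sum_a S_{t,z,a}(\mathbb{I}(A=a)-\pi_z(a))$; and each hazard block contributes the \cite{diaz2019statistical} martingale‑difference correction $-\tfrac{\mathbb{I}(Z=z)\mathbb{I}(A=a)}{\delta_z\pi_z(a)}\sum_{k=1}^t\mathbb{I}(Y_{k-1}=1)\tfrac{S_{t,z,a}}{S_{k,z,a}}\{R\mathbb{I}(Y_k=0)-h_k\}/G_{k-1,z,a}$, the $1/G_{k-1,z,a}$ factor arising, just as there, from the censoring‑hazard scores $g_k$. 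Summing over $a$ yields $\phi^{\text{b}}_1(t;\Theta)=(\widetilde\Lambda^{Z=1}-\widetilde\Lambda^{Z=0})-\psi^{\text{b}}_1(t)$. The quotient rule for influence functions then gives $\mbox{\IF}(\psi^{\text{b}}(t))=\{\phi^{\text{b}}_1(t)-\psi^{\text{b}}(t)\phi^{\text{b}}_2\}/\psi^{\text{b}}_2=\{(\widetilde\Lambda^{Z=1}-\widetilde\Lambda^{Z=0})-\psi^{\text{b}}(t)(\Pi_1-\Pi_0)\}/\psi^{\text{b}}_2$, and solving $\mathbb{P}_n(\mbox{\IF}(\psi^{\text{b}}(t;\hat\Theta)))=0$ for $\psi^{\text{b}}(t)$ reproduces \eqref{eq:if_hazard}, since the plug‑in pieces $S_{t,z,a}\pi_z(a)$ that were subtracted off in $\phi^{\text{b}}_1$ reappear inside $\widetilde\Lambda^{Z=z}$.

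The main obstacle is Step~2: verifying that after inserting the extra inverse weights $\mathbb{I}(Z=z)/\delta_z$ and $\mathbb{I}(A=a)/\pi_z(a)$ the hazard‑score block is still a conditionally unbiased estimating function, and that the censoring‑hazard scores contribute precisely the $1/G_{k-1,z,a}$ factors and nothing more. Rather than redo that within‑stratum computation from scratch, I would invoke the efficient influence function of \cite{diaz2019statistical} for the conditional‑on‑$(\bm X_0,Z=z,A=a)$ survival problem and then apply twice — once for $Z$, once for $A$ — the standard lemma that rewriting $\mathbb{E}[f]=\mathbb{E}[\mathbb{E}(f\mid V)]$ as a nested expectation attaches the weight $\mathbb{I}(V=v)/p(V=v\mid\cdot)$ to the inner influence function while adding the centered outer term. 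I would also record, for use in Section~\ref{sec:asymptotic}, that the von Mises remainder of this EIF is a sum of products of errors in $(\delta_z,\pi_z)$ against errors in $(h_k,g_k,S)$, which is what delivers the double‑robustness property, leaving the rate conditions to the asymptotics section.
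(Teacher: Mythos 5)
Your proposal is correct and follows essentially the same route as the paper: the paper's proof simply cites the efficient influence function of $\mathbb{E}(Y_{t}\mid\bm{X}_{0},Z=z,A=a)$ from \cite{diaz2019statistical} under (A4*) and substitutes it for the corresponding inner-regression influence function in the proof of Theorem~\ref{thm:phib_influence}, which is exactly your strategy of reusing the Diaz EIF within the stratum and then attaching the $\mathbb{I}(Z=z)/\delta_{z}$ and $\mathbb{I}(A=a)/\pi_{z}(a)$ weights plus centered outer terms before applying the quotient rule. Your write-up is more explicit about the tangent-space blocks and the remainder structure, but the underlying argument is the same.
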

The estimator~\eqref{eq:if_hazard} replaces $\hat{\mathbb{E}}(Y_{t} | \bm{X}_{0}, Z = z)$ in~\eqref{eq:IFest_psib} by a collection of estimated hazard functions until time $t$ (instead of directly estimating the survival function through conditional expectation of a binary $Y_{t}$). The estimator~\eqref{eq:if_hazard} is a reasonable choice under administrative censoring and with failure times observed only at certain time points. When the ignorable censoring assumption (A4*) is violated, however, the performance of the estimator~\eqref{eq:if_hazard} becomes highly biased and unstable, as we demonstrate in our later simulation studies. 
\subsection{Estimation and Properties of an Estimator}
\label{ssec:performance}

A principal advantage of our proposed estimator~\eqref{eq:IFest_psib} is that each of the nuisance functions $\hat{\Theta}$ can be estimated nonparametrically. Similarly, we can also nonparametrically estimate the hazard functions for failure and censoring times for estimator~\eqref{eq:if_hazard} as well as $\{\hat{\delta}, \hat{\pi}\}$. Once all of the nuisances are estimated, we take a sample average with the estimated nuisance functions to obtain the estimates~\eqref{eq:IFest_psib} and~\eqref{eq:if_hazard}, respectively.
However, if these two processes use all $n$ i.i.d. samples $\mathcal{\bm{O}}_{n} = \{ \mathcal{O}_{i} \}_{i=1}^{n}$ (or $\{ \mathcal{O}^{\prime}_{i} \}_{i=1}^{n}$), this may result in overfitting because the same data will be used twice for nuisance function estimation and for causal effect estimation. For this reason, we use sample splitting or  cross-fitting~\citep{robins2008higher, zheng2010asymptotic, chernozhukov2018double, diaz2019statistical}.
Specifically, we partition the data into $K$ mutually exclusive groups, $G_{1} \cup G_{2} \cup \cdots \cup G_{k} = \bm{O}_{n}$. We then use the data excluding one group $\bm{O}_{n} \setminus G_{k}$ to estimate the nuisance functions and use the remaining sample in $G_{k}$ to evaluate the causal effect for each iteration $k~(=1,2,\ldots, K)$. Combining all estimates across the  $k$ iterations, the proposed estimator for a binary instrument using sample splitting is given by: 
\begin{equation}
\label{eq:split_psib}
\hat{\psi}^{\text{b, split}}(t) =  \frac{1}{K} \sum\limits_{k=1}^{K} \hat{\psi}^{\text{b}(k)}(t).
\end{equation}
To nonparametrically estimate each nuisance function for $\hat{\psi}^{\text{b} (k)}(t)$ we can can use, for example, random forests implemented in the \texttt{ranger} package~\citep{wright2019fast} available in \texttt{R} (see Algorithm~1 in the Supporting Information). 
For our proposed estimator~\eqref{eq:IFest_psib}, however, we directly estimate the conditional density of the binary outcome $Y_{t}$ at a given time point, instead of estimating a whole survival function; in fact, this enables us to dramatically reduce computing time. 
For the proposed estimator~\eqref{eq:if_hazard}, we can also estimate the discrete hazards functions $\hat{h}_{t}(\bm{X}_{0}, Z, A)$ and $\hat{g}_{t}(\bm{X}_{0}, Z, A)$ using  conditional binary outcome models instead of estimating the whole survival function nonparametrically. 

Further, sample splitting allows us to partition the total error into (i)  the error resulting from  estimation of nuisance functions (e.g., error due to using $\hat{\mu}_{t, z, a}$ instead of $\mu_{t, z, a}$) and (ii) the error from sample approximation (i.e., error due to using $\mathbb{P}_{n}$ instead of $\mathbb{P}$). This is particularly useful for studying the asymptotic behavior of the proposed estimators, which we  discuss next. 

\section{Asymptotic properties of estimators}
\label{sec:asymptotic}
The next two theorems address the large-sample behavior of our two proposed estimators~\eqref{eq:IFest_psib} and~\eqref{eq:if_hazard}. First consider  the following conditions for~\eqref{eq:IFest_psib}:
\begin{itemize}[label={}]
	\item[(C1)]  The nuisance functions for each estimator are in the Donsker class. 
	\item[(C2)] For some constant $\epsilon > 0$, $\mathbb{P}(\epsilon < \hat{\omega}_{z,a} \hat{\delta}_{z} < \infty ) = 1$ and  $\mathbb{P}(\epsilon < \delta_{z} < \infty ) = 1$ for all $z,a \in \{0,1\}$.
\end{itemize}	
However, belonging to a Donsker class (C1) may restrict the complexity of the estimators;
here we use sample splitting to alleviate some of the Donsker class restrictions \citep{chernozhukov2018double, diaz2020machine}. 

\begin{theorem}(Asymptotic distribution of $\hat{\psi}^{\text{b}}(t)$~\eqref{eq:IFest_psib})
	\label{thm:psi_b_asy}
	Under (C1)--(C2) in addition to identification conditions (A1)--(A6), 
\begin{eqnarray}
	\label{eq:binary_result}
	&&\hat{\psi}^{\text{b}}(t; \hat{\Theta}) - \psi^{\text{b}}(t; \Theta) \\  & = & O_{\mathbb{P}} \left\{ \sum\limits_{z,a \in \{ 0,1 \}} \left( \parallel \omega_{z,a} \delta_{z}-  \hat{\omega}_{z, a}  \hat{\delta}_{z} \parallel \cdot \parallel \mu_{t, z, a} - \hat{\mu}_{t, z, a} \parallel  \right) \nonumber  + \sum\limits_{z \in \{0,1\} }  \left( \parallel \delta_{z} -  \hat{\delta}_{z} \parallel \cdot \parallel  \pi_{z}  -  \hat{\pi}_{z}  \parallel\right) \right\}
	\\ & + & \xi^{-1}_{n}(\mathbb{P}_{n} - \mathbb{P}) ( \phi^{\text{b}}_{1}(t ; \Theta) - \psi^{\text{b}}(t; \Theta) \phi^{\text{b}}_{2}(\Theta) ) +  o_{\mathbb{P}}(n^{-1/2}), \nonumber
\end{eqnarray}
where $\xi_{n} = \mathbb{P}_{n}(\phi^{\text{b}}_{2}(\hat{\Theta})  + \psi^{\text{b}}_{2}(\hat{\Theta}))$.
\end{theorem}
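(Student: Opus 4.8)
The plan is to view $\hat\psi^{\text{b}}(t;\hat\Theta)$ as a ratio of two AIPW/one-step estimators --- a numerator targeting $\psi^{\text{b}}_1(t)$ and a denominator targeting $\psi^{\text{b}}_2$ --- linearize the ratio, and then run the standard cross-fitting analysis on each piece. First I would use the exact algebraic identity
\[
\frac{\hat a}{\hat b}-\frac{a}{b}=\xi_n^{-1}\Bigl\{(\hat a-a)-\psi^{\text{b}}(t)\,(\hat b-b)\Bigr\},
\]
with $\hat a=\mathbb{P}_n(M_1(t;\hat\Theta)-M_0(t;\hat\Theta))$, $\hat b=\mathbb{P}_n(\Pi_1(\hat\Theta)-\Pi_0(\hat\Theta))=\xi_n$, $a=\psi^{\text{b}}_1(t;\Theta)$, $b=\psi^{\text{b}}_2(\Theta)$, and $\psi^{\text{b}}(t)=a/b$. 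Since $\psi^{\text{b}}_2>0$ under (A6) and instrument relevance, and since consistency of the nuisances together with (C2) gives $\xi_n=\psi^{\text{b}}_2+o_{\mathbb{P}}(1)$, we have $\xi_n^{-1}=O_{\mathbb{P}}(1)$; this reduces everything to expanding $\hat a-a$ and $\hat b-b$.

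\textbf{Decomposition with cross-fitting.} Using the relations $M_1-M_0=\phi^{\text{b}}_1(t;\Theta)+\psi^{\text{b}}_1(t;\Theta)$ and $\Pi_1-\Pi_0=\phi^{\text{b}}_2(\Theta)+\psi^{\text{b}}_2(\Theta)$ from the text, write $\varphi\in\{M_1-M_0,\ \Pi_1-\Pi_0\}$ and decompose, fold by fold (with $\hat\Theta$ trained on the complement of fold $G_k$, and $\mathbb{P}\varphi(\Theta)\in\{\psi^{\text{b}}_1(t),\psi^{\text{b}}_2\}$ by mean-zeroness of the influence functions),
\[
\mathbb{P}_{n,k}\varphi(\hat\Theta)-\mathbb{P}\varphi(\Theta)=(\mathbb{P}_{n,k}-\mathbb{P})\varphi(\Theta)+(\mathbb{P}_{n,k}-\mathbb{P})\{\varphi(\hat\Theta)-\varphi(\Theta)\}+\{\mathbb{P}\varphi(\hat\Theta)-\mathbb{P}\varphi(\Theta)\}.
\]
Averaging the first term over folds and dropping the additive constants gives $(\mathbb{P}_n-\mathbb{P})\phi^{\text{b}}_1(t;\Theta)$ and $(\mathbb{P}_n-\mathbb{P})\phi^{\text{b}}_2(\Theta)$, which the ratio identity assembles into the stated leading term $\xi_n^{-1}(\mathbb{P}_n-\mathbb{P})(\phi^{\text{b}}_1(t;\Theta)-\psi^{\text{b}}(t;\Theta)\phi^{\text{b}}_2(\Theta))$. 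For the second (empirical-process) term, conditioning on $\hat\Theta$ makes it a mean-zero average over the independent fold-$k$ sample with conditional variance $\lesssim n_k^{-1}\|\varphi(\hat\Theta)-\varphi(\Theta)\|^2_{L_2(\mathbb{P})}$; since (C2) keeps all weights and denominators bounded away from $0$, $\varphi$ is Lipschitz in the nuisances, so $L_2$-consistency of $\hat\Theta$ forces this term to be $o_{\mathbb{P}}(n^{-1/2})$ (without sample splitting, (C1) plus consistency delivers the same conclusion via stochastic equicontinuity).

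\textbf{The second-order bias term (the main obstacle).} The crux is showing $\mathbb{P}\varphi(\hat\Theta)-\mathbb{P}\varphi(\Theta)$ is a sum of products of nuisance errors. I would evaluate it by iterated expectations, conditioning first on $\bm X_0$ and then on $(Z,A)$. The key structural input is (A4), which yields $\mathbb{E}[RY_t\mid\bm X_0,Z,A]=\omega_{z,a}\mu_{t,z,a}$, so the inverse-weighted residual $\frac{\mathbb{I}(Z=j)RA}{\hat\delta_j\hat\omega_{j,1}}(Y_t-\hat\mu_{t,j,1})$ has conditional mean $\frac{\delta_j}{\hat\delta_j}\pi_j\frac{\omega_{j,1}}{\hat\omega_{j,1}}(\mu_{t,j,1}-\hat\mu_{t,j,1})$, while the augmentation $\hat\mu_{t,j,1}(A-\hat\pi_j)$ contributes $\frac{\delta_j}{\hat\delta_j}\hat\mu_{t,j,1}(\pi_j-\hat\pi_j)$, and these are collected against the plug-in term $\hat\mu_{t,j,1}\hat\pi_j$. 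Every term linear in a single nuisance error then cancels, leaving bilinear pieces of the form $\frac{\delta_j-\hat\delta_j}{\hat\delta_j}(\pi_j-\hat\pi_j)$, $\frac{\omega_{j,1}-\hat\omega_{j,1}}{\hat\omega_{j,1}}(\mu_{t,j,1}-\hat\mu_{t,j,1})$, $\frac{\delta_j-\hat\delta_j}{\hat\delta_j}(\mu_{t,j,1}-\hat\mu_{t,j,1})$, plus strictly higher-order triple products; the two $\mu$-type terms recombine, up to a triple product, into $\frac{\omega_{j,1}\delta_j-\hat\omega_{j,1}\hat\delta_j}{\hat\omega_{j,1}\hat\delta_j}(\mu_{t,j,1}-\hat\mu_{t,j,1})$. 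The same bookkeeping applies verbatim to the $A=0$ block (with $1-\pi_j$, $1-A$, $\mu_{t,j,0}$, $\omega_{j,0}$) and to $\Pi_j$, where it reduces directly to $\frac{\delta_j-\hat\delta_j}{\hat\delta_j}(\pi_j-\hat\pi_j)$. Bounding the denominators by (C2) and applying Cauchy--Schwarz termwise gives the claimed rate $O_{\mathbb{P}}\{\sum_{z,a}\|\omega_{z,a}\delta_z-\hat\omega_{z,a}\hat\delta_z\|\cdot\|\mu_{t,z,a}-\hat\mu_{t,z,a}\|+\sum_z\|\delta_z-\hat\delta_z\|\cdot\|\pi_z-\hat\pi_z\|\}$.

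\textbf{Assembly.} Substituting the three pieces through the ratio identity of the first step, and using $\xi_n^{-1}=O_{\mathbb{P}}(1)$ to absorb it into the $O_{\mathbb{P}}\{\cdot\}$ product bound and the $o_{\mathbb{P}}(n^{-1/2})$ remainder (the factor $\psi^{\text{b}}(t)$ multiplying $\hat b-b$ being a fixed constant), produces exactly the displayed expansion. The delicate part is the bias expansion: carrying the conditional-expectation algebra carefully enough to see that every linear-in-error term vanishes, and recognizing that the leftover $\omega$- and $\delta$-errors regroup into the single object $\omega_{z,a}\delta_z-\hat\omega_{z,a}\hat\delta_z$ appearing in the statement; everything else is routine once the cross-fitting decomposition is set up.
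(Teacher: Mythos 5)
Your proposal is correct and follows essentially the same route as the paper's proof: the ratio linearization with $\xi_n^{-1}$, the three-way split into a CLT term, an empirical-process term controlled by (C1) or cross-fitting, and a drift term $\mathbb{P}\varphi(\hat\Theta)-\mathbb{P}\varphi(\Theta)$ evaluated by iterated conditioning under (A4) so that all linear-in-error contributions cancel and only the products $\lVert \omega_{z,a}\delta_z-\hat\omega_{z,a}\hat\delta_z\rVert\cdot\lVert\mu_{t,z,a}-\hat\mu_{t,z,a}\rVert$ and $\lVert\delta_z-\hat\delta_z\rVert\cdot\lVert\pi_z-\hat\pi_z\rVert$ survive. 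The only cosmetic difference is that the paper obtains the combined factor $(\omega_{z,a}\delta_z-\hat\omega_{z,a}\hat\delta_z)/(\hat\omega_{z,a}\hat\delta_z)$ in a single algebraic step rather than recombining separate $\omega$- and $\delta$-error terms as you do, which changes nothing substantive.
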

Note that the second term of $(\mathbb{P}_{n} - \mathbb{P})( \phi^{\text{b}}_{1}(t ; \Theta) - \psi^{\text{b}}(t ; \Theta) \phi^{\text{b}}_{2}(\Theta))$ is asymptotically normal by the central limit theorem.
Theorem~\ref{thm:psi_b_asy} implies (i) double robustness between the outcome distribution and the joint distribution of the censoring and instrument indicators and between the instrument indicator and the treatment indicator; and (ii) $n^{1/4}$ convergence rate of each nuisance function to sufficiently guarantee $\sqrt{n}$-convergence of $\hat{\psi}^{\text{b}}(t; \hat{\Theta})$. In fact, faster than $n^{1/4}$ rates for one nuisance function estimator may allow slower than $n^{1/4}$ rates for other nuisance estimators.

For estimator~\eqref{eq:if_hazard}, we also prove double robustness by adapting the proof presented in~\cite{moore2009increasing}. Here we assume  (C1) and replace (C2) by (C2*): $\mathbb{P} ( \epsilon < \pi_{z}(a) < \infty) = 1$, $\mathbb{P} ( \epsilon < \hat{\delta}_{z} \hat{G}_{k-1, z, a} < \infty) = 1$, 
$\mathbb{P} ( \epsilon < \hat{S}_{t,z, a} < \infty) = 1$, and $\mathbb{P} ( \epsilon < \hat{\delta}_{z}< \infty) = 1$.
Let $\phi^{\text{b}}_{1, \text{IF-hazard}}(t ; \Theta)$  denote the influence function of $\mathbb{E}(Y^{Z=1}_{t} - Y^{Z=0}_{t})$ under the censoring assumption (A4).

\begin{theorem}(Asymptotic distribution of $\hat{\psi}^{\text{b}}_{\text{IF-hazard}}(t)$)
	\label{thm:IF_hazard}
	When nuisance functions are in the Donsker class or estimated using  sample splitting, the following result holds under (C2*) and (A1)--(A3), (A4*), (A5*), and (A6).
\begin{eqnarray}
\label{eq:IF_hazard_asy}
	 &&\hat{\psi}^{\text{b}}_{\text{IF-hazard}}(t; \hat{\Theta}) - \psi^{\text{b}}_{\text{IF-hazard}}(t; \Theta) \nonumber  \\ &=& O_{\mathbb{P}} \left\{ \sum\limits_{z,a \in \{ 0, 1\}}	\parallel  S_{t,z,a}- \hat{S}_{t,z,a} \parallel \parallel \delta_{z} \bm{G}_{t-1, z, a} - \hat{\delta}_{z}\hat{\bm{G}}_{t-1, z, a} \parallel_{2}  \right. \nonumber + \left. \sum\limits_{z,a \in \{0,1\}} \parallel  \delta_{z} - \hat{\delta}_{z}  \parallel  \parallel \pi_{z}(a) - \hat{\pi}_{z}(a) \parallel \right\}
	 \\ & + & \xi^{-1}_{n}(\mathbb{P}_{n} - \mathbb{P}) \left\{ \phi^{\text{b}}_{1,\text{IF-hazard}}(t ; \Theta)-\psi^{\text{b}}_{\text{IF-hazard}}(t ; \Theta) \phi^{\text{b}}_{2}(\Theta) \right\} +  o_{\mathbb{P}}(n^{-1/2})  \nonumber,
\end{eqnarray}	
where  $\parallel \delta_{z} \bm{G}_{t-1, z, a} - \hat{\delta}_{z}\hat{\bm{G}}_{t-1, z, a} \parallel_{2} = \left\{ \sum\limits_{k=1}^{t} \left( \delta_{z}G_{k-1, z, a} - \hat{\delta}_{z} \hat{G}_{k-1, z, a}  \right)^{2} \right\}^{1/2}$. 
\end{theorem}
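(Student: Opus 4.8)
The plan is to treat $\hat\psi^{\text{b}}_{\text{IF-hazard}}(t)$ as a ratio, $\mathbb{P}_n\widehat N/\mathbb{P}_n\widehat D$ with $\widehat N = \widetilde\Lambda^{Z=1}(\hat\Theta) - \widetilde\Lambda^{Z=0}(\hat\Theta)$ and $\widehat D = \Pi_1(\hat\Theta) - \Pi_0(\hat\Theta)$, and to linearize it. Writing $N = \mathbb{E}[\widetilde\Lambda^{Z=1}(\Theta) - \widetilde\Lambda^{Z=0}(\Theta)] = \psi^{\text{b}}_{1,\text{IF-hazard}}(t)$ and $D = \psi^{\text{b}}_2$, the elementary identity $\hat a/\hat b - a/b = \hat b^{-1}\{(\hat a - a) - (a/b)(\hat b - b)\}$ gives $\hat\psi^{\text{b}}_{\text{IF-hazard}}(t) - \psi^{\text{b}}_{\text{IF-hazard}}(t) = \xi_n^{-1}\{(\mathbb{P}_n\widehat N - N) - \psi^{\text{b}}_{\text{IF-hazard}}(t)(\mathbb{P}_n\widehat D - D)\}$ with $\xi_n = \mathbb{P}_n\widehat D = \mathbb{P}_n(\phi^{\text{b}}_2(\hat\Theta)+\psi^{\text{b}}_2(\hat\Theta))$, which converges in probability to $\psi^{\text{b}}_2 > 0$ by the law of large numbers and (C2*). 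It therefore suffices to expand $\mathbb{P}_n\widehat N - N$ and $\mathbb{P}_n\widehat D - D$ separately. For each, and on each cross-fitting fold $k$, I would condition on the training sample so that $\hat\Theta$ is a fixed function and use the standard three-term decomposition $\mathbb{P}_n\widehat N - N = (\mathbb{P}_n-\mathbb{P})\{\widetilde\Lambda^{Z=1}(\Theta)-\widetilde\Lambda^{Z=0}(\Theta)\} + (\mathbb{P}_n-\mathbb{P})\{\widehat N - (\widetilde\Lambda^{Z=1}(\Theta)-\widetilde\Lambda^{Z=0}(\Theta))\} + \{\mathbb{P}\widehat N - N\}$. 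The first term is the leading empirical-process term in the statement; the second is $o_{\mathbb{P}}(n^{-1/2})$ because, conditionally on the training fold, it is a centered average with variance of order $n_k^{-1}\mathbb{P}[\{\widehat N - N\text{-function}\}^2]\to 0$, using that the map from nuisances to $\widetilde\Lambda$ is $L_2$-Lipschitz under (C2*) and that the nuisance $L_2$ errors vanish (alternatively this follows from (C1) and asymptotic equicontinuity); the third is the drift term, where the double-robustness structure must be extracted.

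The heart of the argument is computing $\mathbb{P}\widetilde\Lambda^{Z=z,A=a}(\hat\Theta)$ by iterated expectation: first over the conditional law of $(T,C)$ given $(\bm{X}_0, Z=z, A=a)$, carrying the outer $\mathbb{I}(Z=z)/\hat\delta_z$, and then over $\bm{X}_0$. For the inner discrete-hazard sum I would reuse the identity underlying the efficient-influence-function calculation of \cite{diaz2019statistical} — in the treatment-only world it states that $\mathbb{E}[\Lambda^{A=a}(t;\hat\Theta)\mid\bm{X}_0] - S_{t,a}$ equals a bilinear form in the increments $(\widehat h_k - h_k)$ and the ratios $(1 - G_{k-1,a}/\widehat G_{k-1,a})$ — adapting it to the instrument-stratified nuisances and tracking carefully that censoring enters at $k-1$ while survival enters at $k$ and $t$. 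Telescoping the hazard-martingale increments collapses the expression to a bilinear form in $(S_{\cdot,z,a} - \widehat S_{\cdot,z,a})$ and $(G_{\cdot,z,a} - \widehat G_{\cdot,z,a})$; the outer $\mathbb{I}(Z=z)/\hat\delta_z$ supplies a factor $\delta_z/\hat\delta_z$, turning the censoring part of the bias into a sum over $k$ of products of $(\delta_z G_{k-1,z,a} - \hat\delta_z\widehat G_{k-1,z,a})$ against $(S - \widehat S)$ terms, while the treatment-indicator correction $S_{t,z,a}(\mathbb{I}(A=a)-\pi_z(a))$ together with $S_{t,z,a}\pi(a)$ and the $\delta_z/\hat\delta_z$ mismatch produce the complementary product $(\delta_z - \hat\delta_z)(\pi_z(a) - \hat\pi_z(a))$, exactly as in the augmented-IPW denominator calculation. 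Applying Cauchy--Schwarz over the law of $\bm{X}_0$ converts each $\mathbb{P}$ of a product into a product of $L_2$ norms, and summing over $a,z\in\{0,1\}$ yields the first $O_{\mathbb{P}}\{\cdots\}$ line, with the $\|\cdot\|_2$ over $k=1,\dots,t$ being the natural grouping of the $k$-indexed censoring terms. The denominator bias $\mathbb{P}\widehat D - \psi^{\text{b}}_2$ is the standard computation and contributes only $\sum_z\|\delta_z-\hat\delta_z\|\,\|\pi_z-\hat\pi_z\|$. Collecting everything, the product terms from $\widehat N$ and from $\psi^{\text{b}}_{\text{IF-hazard}}(t)\widehat D$ combine into the displayed bias line (cross-products of two distinct nuisance errors being $o_{\mathbb{P}}(n^{-1/2})$ under (C2*) and the assumed rates), the two leading empirical-process terms combine, after using $\xi_n\to\psi^{\text{b}}_2$, into $\xi_n^{-1}(\mathbb{P}_n-\mathbb{P})\{\phi^{\text{b}}_{1,\text{IF-hazard}}(t;\Theta) - \psi^{\text{b}}_{\text{IF-hazard}}(t;\Theta)\phi^{\text{b}}_2(\Theta)\}$, and all remaining terms are $o_{\mathbb{P}}(n^{-1/2})$.

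The main obstacle is the exact bookkeeping in this drift term: verifying that, after wrapping the discrete-hazard augmentation of \cite{diaz2019statistical} inside the instrument-IPW factor $\mathbb{I}(Z=z)/\hat\delta_z$ and adding the treatment-indicator correction, the conditional bias collapses \emph{precisely} to the bilinear forms $\|S_{t,z,a}-\widehat S_{t,z,a}\|\,\|\delta_z\bm{G}_{t-1,z,a}-\hat\delta_z\widehat{\bm{G}}_{t-1,z,a}\|_2$ and $\|\delta_z-\hat\delta_z\|\,\|\pi_z(a)-\hat\pi_z(a)\|$ with no surviving single-error (first-order) term. This is delicate because the nested products defining $S_k$ and $G_k$ do not factor through a single martingale, so the telescoping must be carried out increment by increment, and because the $\delta_z$ mismatch couples the censoring and treatment channels; this is exactly the step to be adapted from \cite{moore2009increasing}. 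The Donsker/equicontinuity bookkeeping for the empirical-process remainder is routine once sample splitting is used, and (C2*) provides the uniform lower bounds on $\hat\delta_z\widehat G_{k-1,z,a}$, $\widehat S_{t,z,a}$, $\pi_z(a)$, and $\hat\delta_z$ that make every ratio and every Lipschitz bound in the argument valid.
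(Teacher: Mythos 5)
Your proposal is correct and follows essentially the same route as the paper's proof: linearize the ratio exactly as in Theorem 2, reduce to the drift of the numerator via iterated expectations that successively replace $\mathbb{I}(Z=z)$, $\mathbb{I}(A=a)$, $\mathbb{I}(C>k-1)$, and the martingale residual by $\delta_z$, $\pi_z(a)$, $G_{k-1,z,a}$, and $S_{k-1,z,a}(h_k-\hat h_k)$, then invoke the telescoping identity from Appendix A of \cite{moore2009increasing} to collapse the hazard increments into survival-function differences, and finish with Cauchy--Schwarz and the Theorem-2 denominator bound. The one step you flag as the ``main obstacle'' --- the increment-by-increment telescoping inside the instrument-IPW factor --- is precisely the step the paper carries out explicitly, and your description of how the $\delta_z$ mismatch couples into both the censoring and treatment channels matches the paper's computation.
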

Theorem~\ref{thm:IF_hazard} demonstrates the doubly robust properties of \eqref{eq:if_hazard}. All  proofs are provided in the Supporting Information. These asymptotic results are supported by our numerical results in the next section.

\section{Simulation Studies}
\label{sec:simulation}
We present two numerical studies.
First, we investigate the finite-sample performance of our proposed estimator $\hat{\psi}^{\text{b}}(t)$ in~\eqref{eq:IFest_psib} under various scenarios including model misspecification. 
Second, we explore the performance of the previously described influence function-based estimators~\eqref{eq:IFest_psib},~\eqref{eq:naive_hazard}, and~\eqref{eq:if_hazard} under different censoring and unmeasured confounding scenarios.

\subsection{Simulation settings}

We consider two survival outcome models that are commonly implemented in clinical studies: (i) Cox proportional hazards models and (ii) additive hazards models.
We correctly specify each data generating model for survival outcomes in our parametric estimation, but we also consider nonparametric estimation that does not involve any modeling.  Consider the following data generating models  with baseline covariates $\bm{X}_{i,0} \overset{i.i.d.}{\sim} \mbox{MVN}( \bm{0}, \bm{I}_{5 \times 5} )$, binary instruments generated from $\mbox{logit}\left( p (Z_{i} = 1  | \bm{X}_{i,0}) \right) =   \bm{X}^{\prime}_{i,0}  \boldsymbol{\kappa}  $, and binary treatments from $\mbox{logit}\left( p(A_{i} | \bm{X}_{i,0}, Z_{i}, U_{i}) \right) =     -0.1 + \bm{X}^{\prime}_{i,0} \boldsymbol{\alpha}_{x} + Z_{i}  \alpha_{z}+  + U_{i} \alpha_{u}$. 
We generated a non-censoring indicator from   $\mbox{logit}\left( p(R_{i} | \bm{X}_{i,0}, Z_{i}, A_{i}) \right) =  \bm{X}^{\prime}_{i,0} \boldsymbol{\gamma}_{x} +  Z_{i} \gamma_{z} + A_{i} \gamma_{z}$. The two data generating models for survival outcomes are the additive hazards model, $h(t | \bm{X}_{i,0}, Z_{i}, A_{i}, U_{i})  =  h_{0}(t)  +   \bm{X}^{\prime}_{i,0} \boldsymbol{\beta}_{x} + A_{i}  \beta_{a} + U_{i}  \beta_{u}$, and the  Cox proportional hazards model: $h(t |  \bm{X}_{i,0}, Z_{i}, A_{i}, U_{i})  =  h_{0}(t)\exp \left( \bm{X}^{\prime}_{i,0}  \boldsymbol{\beta}_{x} + A_{i}  \beta_{a} +  U_{i}  \beta_{u}\right)$.

We compare misspecified models  to the correctly specified case. We say a model is \textit{misspecified} when we observe $\bm{W}_{i,0} \in \mathbb{R}^{5}$ instead of $\bm{X}_{i,0} \in \mathbb{R}^{5}$. The first four covariates in $\bm{W}_{i,0}$ are transformed versions of the first four covariates in $\bm{X}_{i,0}$ following~\cite{kang2007demystifying};  the fifth covariate is observed correctly. Details can be found in the Supporting Information.
We illustrate the performance of the proposed estimators under four different scenarios: (i) all nuisance functions correctly specified; (ii) incorrectly specified $\hat{\omega}_{z,a}, \hat{\delta}_{z}$; (iii) incorrectly specified $\hat{\pi}_{z}, \hat{\mu}_{t,z,a}$ ; and (iv) incorrectly specified $\hat{\pi}_{z}, \hat{\omega}_{z,a}$. As discussed in Section~\ref{sec:asymptotic}, the influence function-based estimators~\eqref{eq:IFest_psib} and~\eqref{eq:if_hazard} should maintain their consistency under all four scenarios.

In practice, when baseline covariates $\bm{X}_{i,0}$ are high-dimensional, a regression model is likely to misspecify the true data generating process. Hence, we also consider nonparametric estimation of nuisance functions using sample splitting with $K=2$ partitions. 
Details of this procedure are provided both for a binary instrument and a continuous instrument in the Supporting Information.  

To evaluate the performance of each estimator, we report bias and root-mean-squared error (RMSE): 
\begin{eqnarray*}
\widehat{\mbox{bias}}(\hat{\psi}^\text{b}) & = & \frac{1}{\tau}\sum\limits_{t=1}^{\tau} \left|  \frac{1}{I} \sum\limits_{i=1}^{I}  \hat{\psi}^{\text{b}}_{i}(t) - \psi^{\text{b}}_{i}(t) \right| \\
\widehat{\mbox{RMSE}}(\hat{\psi}^{\text{b}}) & = & \frac{\sqrt{n}}{\tau}\sum\limits_{t=1}^{\tau} \left[ \frac{1}{I} \sum\limits_{i=1}^{I} \left\{   \hat{\psi}^{\text{b}}_{i}(t) - \psi^{\text{b}}_{i}(t) \right\}^{2} \right]^{1/2} 
\end{eqnarray*}
We generated $n=1000$ i.i.d. observations, $I=1000$ independent times. Bias and RMSE were integrated over $\tau =30$ times points.

\subsection{Performance of $\hat{\psi}^{\text{b}}(t)$ }
For our first numerical experiment, we compare the performance of (a) $\hat{\psi}^{\text{b}}(t)$ in~\eqref{eq:IFest_psib} with the following two simple estimators for $\psi^{\text{b}}(t)$ that are not based on the influence function: (b) a simple inverse-probability-weighted estimator (IPW estimator) and (c) a regression-based plug-in estimator (Plug-in estimator). See the Supporting Information for details. All three estimators, (a), (b), and (c),  require all or some of the four estimated nuisance functions $\hat{\Theta}$. 
Under parametric modeling, we fit a survival outcome model for $T$ following the same model that $T$ was generated from, i.e., either a Cox model or an additive hazards model.

\begin{figure}[H]
	\centering
	\begin{subfigure}[b]{0.4\textwidth}
		\includegraphics[width=\textwidth]{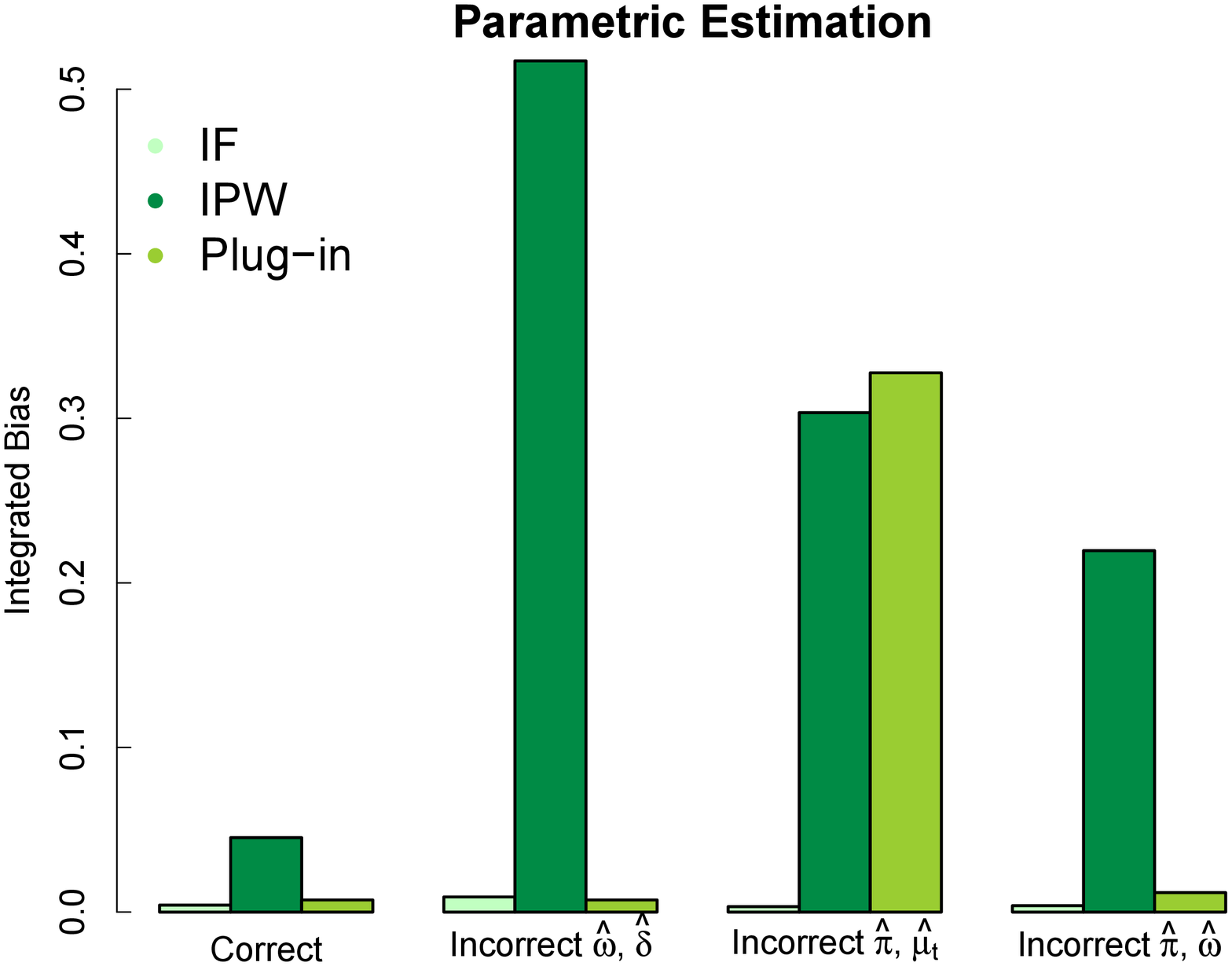}
		\caption{ }
	\end{subfigure}
	\begin{subfigure}[b]{0.4\textwidth}
		\includegraphics[width=\textwidth]{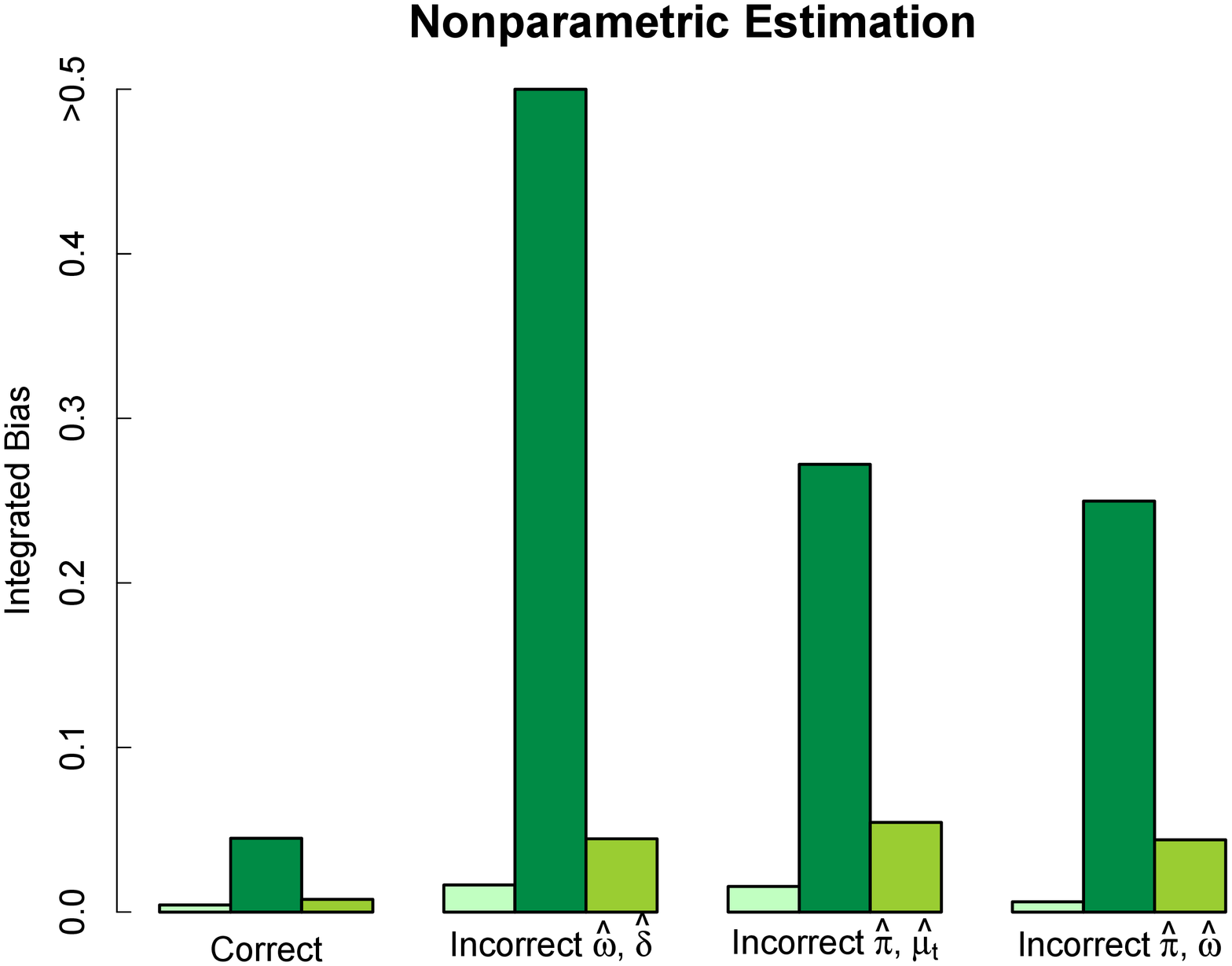}
		\caption{}
	\end{subfigure}
	\begin{subfigure}[b]{0.4\textwidth}
		\includegraphics[width=\textwidth]{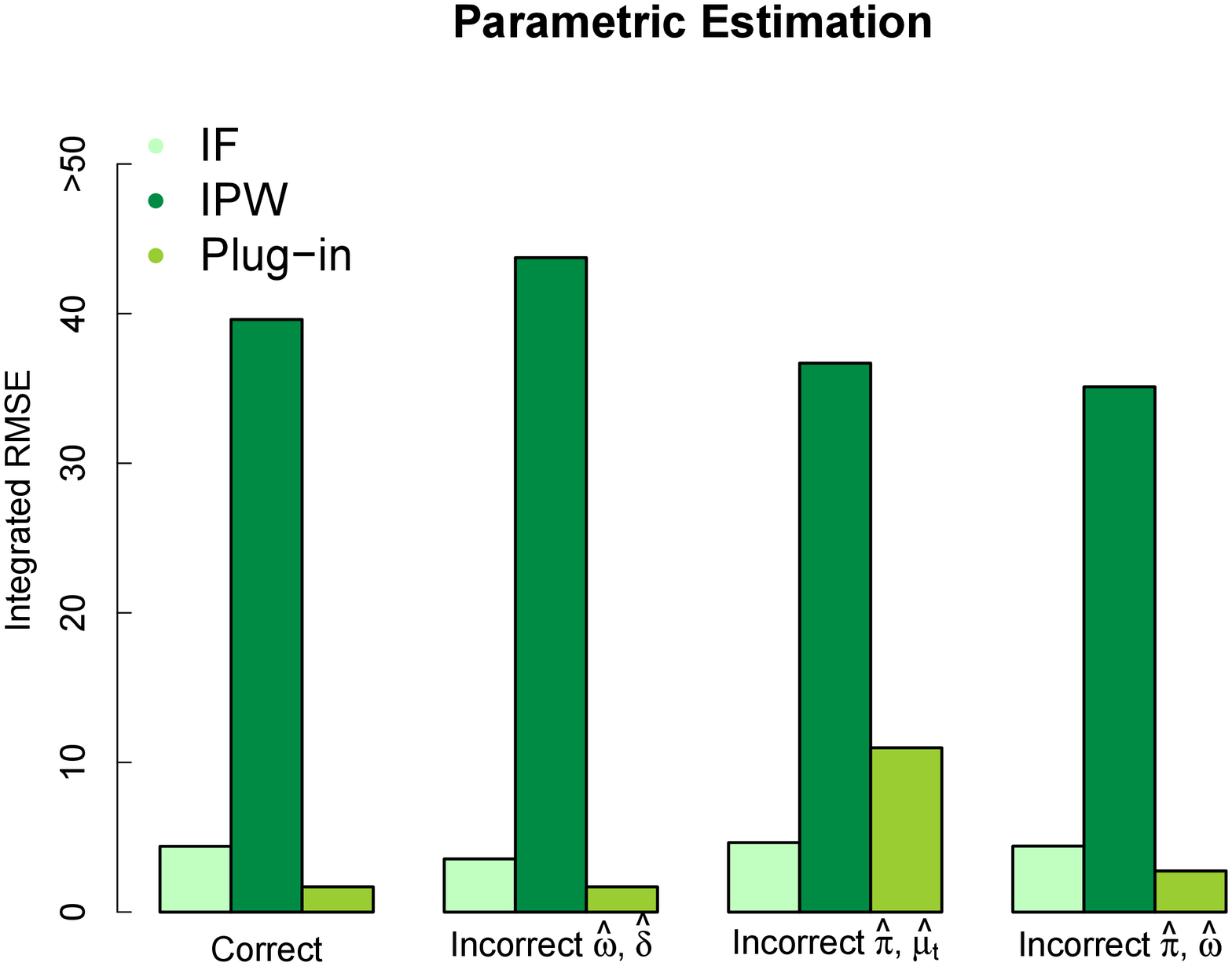}
		\caption{ }
	\end{subfigure}
	\begin{subfigure}[b]{0.4\textwidth}
		\includegraphics[width=\textwidth]{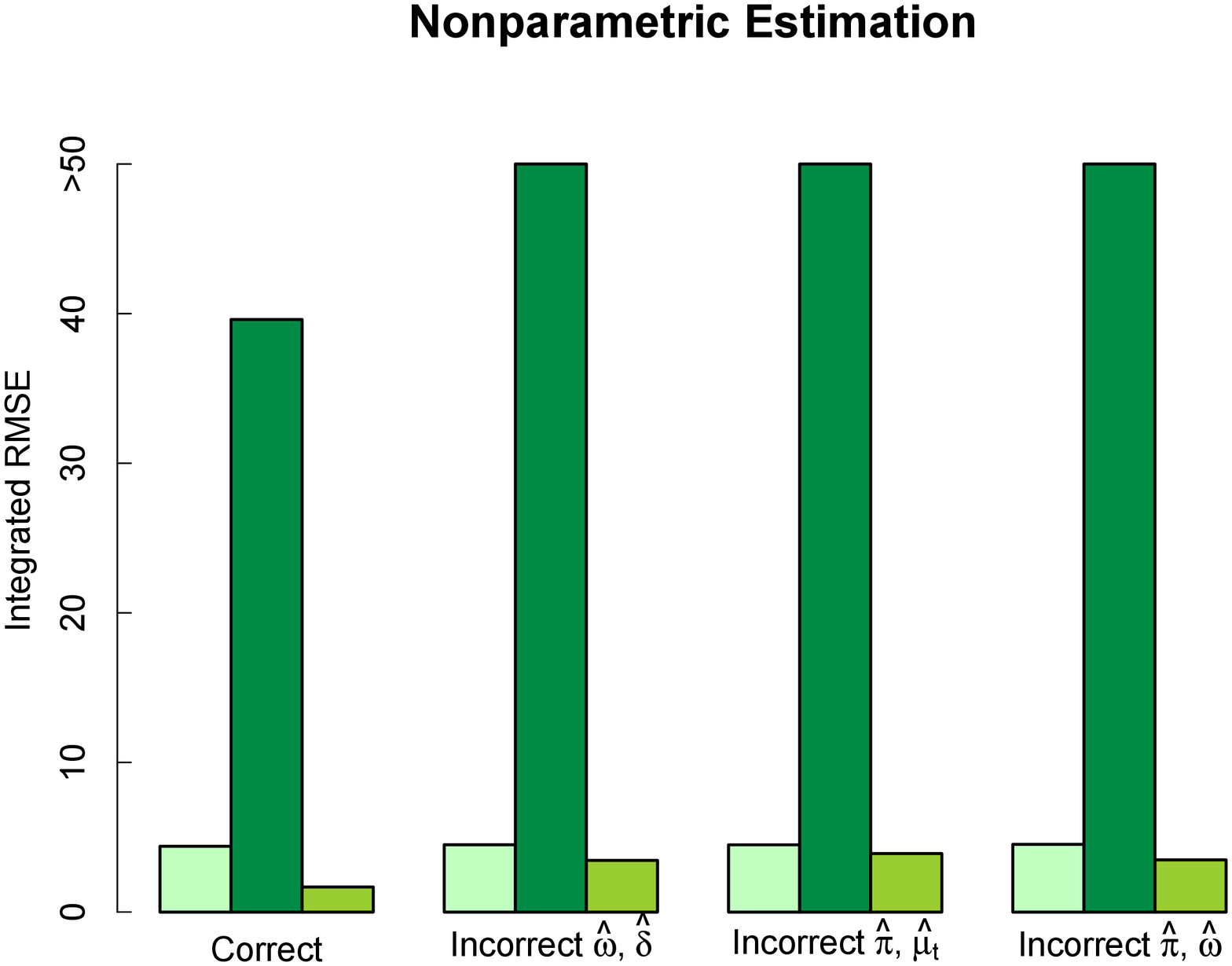}
		\caption{}
	\end{subfigure}
	\caption{\label{fig:cox_binary}Bias (upper panel) and RMSE (lower panel) of $\hat{\psi}^{\text{b}}(t), \hat{\psi}^{\text{b}}_{\text{ipw}}(t)$, and $\hat{\psi}^{\text{b}}_{\text{plugin}}(t)$ with parametric nuisance functions estimation (left panel) and nonparametric estimation (right panel) when survival outcomes are generated from a Cox proportional model and an instrument is binary. Sample size is $n=1000$ and each scenario was replicated $I=1000$ times.}
\end{figure}

Figure~\ref{fig:cox_binary} presents the bias and RMSE for each scenario when survival outcomes are generated from a Cox model. First, our proposed estimator (IF) demonstrates the smallest bias and RMSE across all four scenarios in parametric estimation, and exhibits almost equivalent performance under nonparametric estimation. On the other hand, the IPW estimator is most biased under scenario (ii) and also shows noticeable bias under scenarios (iii) and (iv); in general, the IPW estimator also has the largest RMSE. As expected, the plug-in estimator that only uses $\hat{\pi}_{z}$ and $\hat{\mu}_{t,z, a}$ is most biased under (iii) when the two nuisance functions are incorrect and (iv) when one of them is incorrect with parametric estimation. 
Generally, when the nuisance functions are nonparametrically estimated, all three estimators are less sensitive to model misspecification. This demonstrates the robustness of nonprametric estimation to covariate transformation. 
When the model is correctly specified, the parametric estimator of~\eqref{eq:IFest_psib} performed similarly to the nonparametric estimator where the nuisance functions might not converge at $\sqrt{n}$-rate.   
Similar results when survival outcomes are generated from an additive hazards model are presented in the Supporting Information.

\subsection{Performance of estimators under different censoring and unmeasured confounding conditions}
\label{ssec:sim_discrete}

For the second experiment, we compare the performance of three influence function-based estimators, $\hat{\psi}^{\text{b}}_{\text{Naive-hazard}}(t)$, $\hat{\psi}^{\text{b}}_{\text{IF-hazard}}(t)$, and $\hat{\psi}^{\text{b}}(t)$ under different censoring and unmeasured confounding assumptions. We generated the failure times from a Cox model, and the censoring times under three different scenarios. We also control the amount of unmeasured confounding for each scenario:
(i) $C_{i} \overset{i.i.d.}{\sim} Uniform\left(10, 100 \right)$ and $\beta_{u} = 2.5$;
 (ii) $C_{i} \overset{i.i.d.}{\sim} T_{i} + Uniform\left( -10, 50 \right)$ and $\beta_{u} = 2.5$; and
(iii) $C_{i} = 20$ and $\beta_{u} = 0$.
Scenario (i) satisfies (A4*); (ii) satisfies (A4); and (iii) satisfies (A4*) and assumes no unmeasured confounding.

Similar to the previous experiment, we evaluate the causal effect both using parametric and nonparametric nuisance function estimation and also simulate three different misspecified models as described in the previous simulation study. To make discrete hazard functions valid, censoring and failure times are rounded down to integer values. 
In our setting, we set $\mathbb{E}(Y^{A=1}_{t} - Y^{A=0}_{t}) = \mathbb{E}(Y^{A=1}_{t} - Y^{A=0}_{t} | A^{Z=1} > A^{Z=0} )$, i.e., the LATE is the same as the ATE, making the target estimand  the same across the three estimators. 
\begin{table}[ht]
\centering
\resizebox{0.9\textwidth}{!}{\begin{tabular}{rrrr|rrr}
  \hline
& \multicolumn{3}{c|}{\textbf{Parametric estimation}} &  \multicolumn{3}{c}{\textbf{Nonparametric estimation}}  \\ 
 & $\hat{\psi}^{\text{b}}_{\text{Naive-hazard}}(t)$ & $\hat{\psi}^{\text{b}}_{\text{IF-hazard}}(t)$ & $\hat{\psi}^{\text{b}}(t)$ & $\hat{\psi}^{\text{b}}_{\text{Naive-hazard}}(t)$ & $\hat{\psi}^{\text{b}}_{\text{IF-hazard}}(t)$ & $\hat{\psi}^{\text{b}}(t)$ \\ 
  \hline
    \multicolumn{5}{l}{Scenario (i)} \\
  \hline
\textbf{Correct} & 
17.66 & \cellcolor{GreenYellow}2.04 & 2.57 & 11.97 & 3.03 & 7.48 \\ 
  \textbf{Incorrect $\hat{\omega}, \hat{\delta}$} &
  17.65 & \cellcolor{GreenYellow}3.94 & 9.35 & 11.57 & 4.19 & 8.18 \\ 
\textbf{Incorrect $\hat{\pi}, \hat{\mu}_{t}$} & 
19.49 & \cellcolor{GreenYellow}2.60 & 2.68 & 11.79 & 2.41 & 8.00 \\  
 \textbf{Incorrect $\hat{\pi}, \hat{\omega}$} &
  17.43 & \cellcolor{GreenYellow}2.18 & 2.81 & 11.67 & 3.22 & 8.04 \\ 
   \hline
    \multicolumn{5}{l}{Scenario (ii)} \\
   \hline
\textbf{Correct} & $>10^{10}$ & $>10^{10}$ & \cellcolor{GreenYellow}1.01 & 7.96 & 4.60 & 1.66 \\ 
 \textbf{Incorrect $\hat{\omega}, \hat{\delta}$}  & $>10^{10}$ & $>10^{10}$ &  \cellcolor{GreenYellow}4.81 & 7.70 & 5.92 & 2.67 \\ 
\textbf{Incorrect $\hat{\pi}, \hat{\mu}_{t}$} & $>10^{10}$ & $>10^{10}$ &  \cellcolor{GreenYellow}0.94 & 7.86 & 4.23 & 2.88 \\
  \textbf{Incorrect $\hat{\pi}, \hat{\omega}$} & $>10^{10}$ & $>10^{10}$ & \cellcolor{GreenYellow}0.96 & 7.79 & 4.76 & 2.33 \\ 
  \hline 
  \multicolumn{5}{l}{Scenario (iii)} \\
  \hline
\textbf{Correct}  & \cellcolor{GreenYellow}1.49 &  \cellcolor{GreenYellow}1.58 & 18.05 & 3.73 & 3.20 & 16.49 \\ 
  \textbf{Incorrect $\hat{\omega}, \hat{\delta}$}   & \cellcolor{GreenYellow}1.49 & \cellcolor{GreenYellow}2.10 & 21.85 & 3.94 & 5.96 & 22.21 \\ 
  \textbf{Incorrect $\hat{\pi}, \hat{\mu}_{t}$} &  4.61 & \cellcolor{GreenYellow}1.99 & 17.69 & 2.69 & 3.58 & 17.10 \\ 
 \textbf{Incorrect $\hat{\pi}, \hat{\omega}$}  &  \cellcolor{GreenYellow}1.56 & \cellcolor{GreenYellow}1.61 & 21.88 & 3.02 & 4.07 & 18.07 \\ 
  \hline
\end{tabular}}
\caption{\label{tab:sim2} $100 \times \widehat{\mbox{Bias}}$ of three influence functions under three different scenarios (i)-(iii). In case of parametric estimation, we mark the case when each estimator that satisfies the censoring and unmeasured confounding assumption. Because $\hat{\psi}^{\text{b}}_{\text{Naive-hazard}}(t)$ does not use the instrumental density $\delta$ function, ``Incorrect $\hat{\omega}, \hat{\delta}$" in fact only indicates incorrectly specified $\omega$ function for $\psi^{\text{b}}_{\text{Naive-hazard}}(t)$.
}
\end{table}
Table~\ref{tab:sim2} presents the integrated bias in three causal estimators under three different scenarios. First, with parametric estimation, $\hat{\psi}^{\text{b}}_{\text{IF-hazard}}(t)$ has the smallest bias under (i), but exhibits significant sensitivity to nonignorable censoring in (ii). The naive estimator $\hat{\psi}^{\text{b}}_{\text{Naive-hazard}}(t)$ that does not use an IV retains the smallest bias under scenario (iii) where there is no unmeasured confounding and administrative censoring, 
but it exhibits substantial bias in the presence of unmeasured confounding. 
When nuisance functions are nonparametrically estimated, the estimator with best performance under parametric estimation also shows better performance under nonparametric estimation, but the difference in bias between the three estimators becomes smaller than what we had seen under parametric estimation. 
Results on the RMSE that are presented in the Supporting Information Table S1 demonstrate that our proposed estimator $\hat{\psi}^\text{b}(t)$ has the smallest RMSE under scenario (ii) with parametric estimation but the naive estimator $\hat{\psi}^{\text{b}}_{\text{Naive-hazard}}(t)$ has the smallest RMSE across all three scenarios under nonparametric estimation.

\section{Application to Cancer Screening}
\label{sec:application}

We apply our proposed approach to evaluate the effect of cancer screening on survival using data from the Prostate, Lung, Colorectal, and Ovarian (PLCO) Cancer Screening Trial~\citep{team2000prostate}. Specifically, in the colorectal cancer screening trial, approximately 150,000 participants were \textit{randomly} assigned to either (i) the control arm of usual care or (ii) the intervention arm comprised of two colorectal cancer screening exams -- at baseline and at year 5. Details of the trial can be found in~\cite{prorok2000design} and \cite{kianian2019causal}.
Here, we define our target estimand as the causal effect of colorectal screening on survival probability among those who comply with the randomized assignment.  In this example, we consider the time to all-cause mortality from trial entry as our primary outcome. 

Even though the intervention was randomly assigned, noncompliance was observed among the participants.
Of the $n=142,426$ eligible subjects (those with complete information on the baseline questionnaire, with no history of any cancer including colorectal cancer prior to entry, and age no less than 55 at trial entry),  $70,578~(49.55\%)$ participants were assigned to the control arm and $78,724~(50.45\%)$ participants were assigned to the intervention arm. Of the $78,724$ participants randomized to the intervention arm, $8,146~(11.34\%)$ did not comply with the intervention, meaning that they did not have two colorectal screening exams.

Say $Y_{t}$ is the survival indicator at time $t$ and, $R$ indicates whether death precedes censoring. We define the instrumental variable $Z$ to be a binary indicator of whether the participant was randomized to the intervention (control arm: $Z=0$, intervention arm: $Z=1$), and $A$ is a binary indicator of the actual intervention that the participant received (control arm: $A=0$, intervention arm: $A=1$). Because our instrument is the randomization procedure itself, the underlying assumptions of the IV being associated with treatment, the IV not being associated with unmeasured confounders, and the exclusion restriction all obviously hold.  Note that participants randomized to the control arm would not have the opportunity to have cancer screening, so $\mathbb{E}(A = 1 | Z = 0) = 0$ and hence the monotonicity assumption (A6) also holds. In contrast, $\mathbb{E}(A = 0 | Z = 1)$ is non-zero due to non-compliance. 
We take this into account by, for example, directly estimating $\mathbb{E}(Y_{t} | Z  = 0, \bm{X}_{0})$  via $\mathbb{E}(Y_{t} | Z = 0, A = 0, \bm{X}_{0})$ and not $\sum\limits_{a \in \{0,1\}}  \mathbb{E}(Y_{t} | Z = 0, A = a, \bm{X}_{0}) \mathbb{P}(A = a | Z = 0, \bm{X}_{0})$. 

We consider three different estimators:   $\hat{\psi}^{\text{b}}(t)$,  $\hat{\psi}^{\text{b}}_{\text{IF-hazard}} (t)$
, and $\hat{\psi}^{\text{b}}_{\text{Naive-hazard}}(t)$.
Note that estimators $\hat{\psi}^{\text{b}}(t)$ and $\hat{\psi}^{\text{b}}_{\text{IF-hazard}} (t)$ target the LATE~\eqref{eq:psi_b} while $\hat{\psi}^{\text{b}}_{\text{Naive-hazard}}(t)$  estimates the intent-to-treat effect.
Estimator $\hat{\psi}^{\text{b}}(t)$ is valid under censoring assumption (A4) while $\hat{\psi}^{\text{b}}_{\text{IF-hazard}} (t)$ and $\hat{\psi}^{\text{b}}_{\text{Naive-hazard}}(t)$ are based on (A4*); 
on the other hand, compared to the naive estimator $\hat{\psi}^{\text{b}}_{\text{Naive-hazard}}(t)$, $\hat{\psi}^{\text{b}}(t)$ and $\hat{\psi}^{\text{b}}_{\text{IF-hazard}} (t)$ both provide an unbiased estimator for the causal effect even in the presence of unmeasured confounding.

Before analysis, we  rounded down the observed survival times, e.g., $t=0, 10, 20, \ldots, 8000$, to be able to estimate $\hat{\psi}^{\text{b}}_{\text{IF-hazard}} (t)$
or $\hat{\psi}^{\text{b}}_{\text{Naive-hazard}}(t)$ using the nuisance discrete hazard functions. 
Figure~\ref{fig:realresult} presents our results using these three different estimators with nonparametric nuisance function estimation. We also present the same results with parametric nuisance function estimation in the Supporting Information.
\begin{figure}
  \centerline{
    \includegraphics[width=\textwidth]{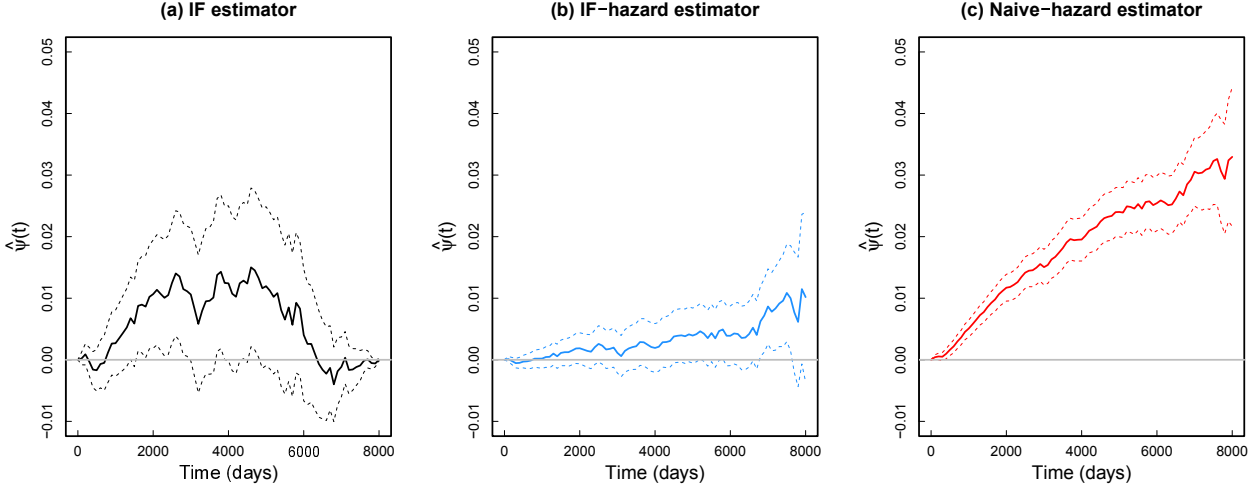}}
  \caption{\label{fig:realresult} Estimated effect of the screening on the survival probability using three different influence function-based estimators when nuisance functions are nonparametrically identified. Point-wise confidence intervals (dotted lines) are estimated through $500$ bootstrap samples.}
\end{figure}

The results of $\hat{\psi}^{\text{b}}_{\text{Naive-hazard}}(t)$ suggests a significantly positive intent-to-treat effect of cancer screening on survival probability, resulting in about a 2\% survival difference at $t=4000$ days assuming unmeasured confounding and ignorable censoring (A4*); however, with the same censoring assumption, $\hat{\psi}^{\text{b}}_{\text{IF-hazard}} (t)$ suggests an attenuated causal effect compared to the results from $\hat{\psi}^{\text{b}}_{\text{Naive-hazard}}(t)$ while the effect is still positive and increasing with time. This suggests that part of the effect observed from $\hat{\psi}^{\text{b}}_{\text{Naive-hazard}}(t)$ might be attributed to unmeasured confounding factors. On the other hand, results from $\hat{\psi}^{\text{b}}(t)$ indicate that ignoring the correlation between censoring and failure time later in follow-up  might overestimate the causal effect on survival probability.


In fact, among the $101,442~(71.2\%)$ subjects who were censored, $83,914~(82.7\%)$ subjects were due to end of study, $16,568~(16.3\%)$ subjects refused to answer, and $960~(0.1\%)$ subjects were censored for other reasons. Therefore, it is more plausible to assume ignorable censoring than nonignorable censoring in this study. Hence, our proposed estiamtor $\hat{\psi}^{\text{b}}_{\text{IF-hazard}} (t)$ seems to be a  reasonable choice for this study since it is both valid uncer ignorable censoring and is   also robust to unmeasured confounding due to noncompliance.

\section{Discussion}

The goal of this work was to provide a flexible approach for the estimation of the causal effect of treatment on survival outcomes that also accounts for unmeasured confounding. To this end, we have introduced novel influence function-based estimators that leverage instruments in the context of survival outcomes. We discuss identification assumptions and provide several estimators that are valid under different censoring assumptions. Our estimators are flexible and exhibit doubly robust properties. Further, they allow slow convergence of $n^{1/4}$ for each nuisance function, and allow both binary and continuous instruments. Our approach encourages the use of machine learning methods instead of less flexible parametric modeling.   



In this work, we have assumed that we have a valid instrument; however, valid instruments are often not available. 
~\cite{tchetgen2017genius} and ~\cite{schooling2019survival} very nicely tackle the issue of invalid instruments for survival outcomes. 
In future work, we plan to assess the sensitivity of our estimators to violations of the IV assumptions. Further, in this work, we only consider right censored survival outcomes. Future work is planned to address interval-censoring, truncation, and competing risks, which often arise in clinical studies.


\section*{Acknowledgement}
The	authors	thank the	National	Cancer	Institute (NCI)	for	access	to
NCI's	data	collected	by	the	Prostate,	Lung,	Colorectal	and	Ovarian	(PLCO)	Cancer	Screening	Trial.	The
statements	contained	herein	are	solely	those	of	the	authors	and	do	not	represent	or	imply	concurrence	or
endorsement	by	NCI. 

\bibliographystyle{biom} \bibliography{reference}

\begin{thebibliography}{}

\bibitem[\protect\citeauthoryear{Andersen, Syriopoulou, and Parner}{Andersen
  et~al.}{2017}]{andersen2017causal}
Andersen, P.~K., Syriopoulou, E., and Parner, E.~T. (2017).
\newblock Causal inference in survival analysis using pseudo-observations.
\newblock {\em Statistics in medicine} {\bf 36,} 2669--2681.

\bibitem[\protect\citeauthoryear{Angrist, Imbens, and Rubin}{Angrist
  et~al.}{1996}]{angrist1996identification}
Angrist, J.~D., Imbens, G.~W., and Rubin, D.~B. (1996).
\newblock Identification of causal effects using instrumental variables.
\newblock {\em Journal of the American Statistical Association} {\bf 91,}
  444--455.

\bibitem[\protect\citeauthoryear{Austin}{Austin}{2014}]{austin2014use}
Austin, P.~C. (2014).
\newblock The use of propensity score methods with survival or time-to-event
  outcomes: reporting measures of effect similar to those used in randomized
  experiments.
\newblock {\em Statistics in medicine} {\bf 33,} 1242--1258.

\bibitem[\protect\citeauthoryear{Baiocchi, Cheng, and Small}{Baiocchi
  et~al.}{2014}]{baiocchi2014instrumental}
Baiocchi, M., Cheng, J., and Small, D.~S. (2014).
\newblock Instrumental variable methods for causal inference.
\newblock {\em Statistics in medicine} {\bf 33,} 2297--2340.

\bibitem[\protect\citeauthoryear{Brueckner, Titman, and Jaki}{Brueckner
  et~al.}{2019}]{brueckner2019instrumental}
Brueckner, M., Titman, A., and Jaki, T. (2019).
\newblock Instrumental variable estimation in semi-parametric additive hazards
  models.
\newblock {\em Biometrics} {\bf 75,} 110--120.

\bibitem[\protect\citeauthoryear{Burgess and Thompson}{Burgess and
  Thompson}{2013}]{burgess2013use}
Burgess, S. and Thompson, S.~G. (2013).
\newblock Use of allele scores as instrumental variables for mendelian
  randomization.
\newblock {\em International journal of epidemiology} {\bf 42,} 1134--1144.

\bibitem[\protect\citeauthoryear{Cheng and Wang}{Cheng and
  Wang}{2012}]{cheng2012estimating}
Cheng, Y.-J. and Wang, M.-C. (2012).
\newblock Estimating propensity scores and causal survival functions using
  prevalent survival data.
\newblock {\em Biometrics} {\bf 68,} 707--716.

\bibitem[\protect\citeauthoryear{Chernozhukov, Chetverikov, Demirer, Duflo,
  Hansen, Newey, and Robins}{Chernozhukov
  et~al.}{2018}]{chernozhukov2018double}
Chernozhukov, V., Chetverikov, D., Demirer, M., Duflo, E., Hansen, C., Newey,
  W., and Robins, J. (2018).
\newblock Double/debiased machine learning for treatment and structural
  parameters.
\newblock {\em The Econometrics Journal} {\bf 21,} C1--C68.

\bibitem[\protect\citeauthoryear{Cole and Hern{\'a}n}{Cole and
  Hern{\'a}n}{2004}]{cole2004adjusted}
Cole, S.~R. and Hern{\'a}n, M.~A. (2004).
\newblock Adjusted survival curves with inverse probability weights.
\newblock {\em Computer methods and programs in biomedicine} {\bf 75,} 45--49.

\bibitem[\protect\citeauthoryear{D{\'\i}az}{D{\'\i}az}{2019}]{diaz2019statistical}
D{\'\i}az, I. (2019).
\newblock Statistical inference for data-adaptive doubly robust estimators with
  survival outcomes.
\newblock {\em Statistics in Medicine} {\bf 38,} 2735--2748.

\bibitem[\protect\citeauthoryear{D{\'\i}az}{D{\'\i}az}{2020}]{diaz2020machine}
D{\'\i}az, I. (2020).
\newblock Machine learning in the estimation of causal effects: targeted
  minimum loss-based estimation and double/debiased machine learning.
\newblock {\em Biostatistics} {\bf 21,} 353--358.

\bibitem[\protect\citeauthoryear{D{\'\i}az, Williams, Hoffman, and
  Schenck}{D{\'\i}az et~al.}{2020}]{diaz2020non}
D{\'\i}az, I., Williams, N., Hoffman, K.~L., and Schenck, E.~J. (2020).
\newblock Non-parametric causal effects based on longitudinal modified
  treatment policies.
\newblock {\em arXiv preprint arXiv:2006.01366} .

\bibitem[\protect\citeauthoryear{Dukes, Martinussen, Tchetgen~Tchetgen, and
  Vansteelandt}{Dukes et~al.}{2019}]{dukes2019doubly}
Dukes, O., Martinussen, T., Tchetgen~Tchetgen, E.~J., and Vansteelandt, S.
  (2019).
\newblock On doubly robust estimation of the hazard difference.
\newblock {\em Biometrics} {\bf 75,} 100--109.

\bibitem[\protect\citeauthoryear{Fr{\"o}lich}{Fr{\"o}lich}{2007}]{frolich2007nonparametric}
Fr{\"o}lich, M. (2007).
\newblock Nonparametric iv estimation of local average treatment effects with
  covariates.
\newblock {\em Journal of Econometrics} {\bf 139,} 35--75.

\bibitem[\protect\citeauthoryear{Hern{\'a}n and Robins}{Hern{\'a}n and
  Robins}{2006}]{hernan2006instruments}
Hern{\'a}n, M.~A. and Robins, J.~M. (2006).
\newblock Instruments for causal inference: an epidemiologist's dream?
\newblock {\em Epidemiology} pages 360--372.

\bibitem[\protect\citeauthoryear{Huling, Yu, and O'Malley}{Huling
  et~al.}{2019}]{huling2019instrumental}
Huling, J.~D., Yu, M., and O'Malley, A.~J. (2019).
\newblock Instrumental variable based estimation under the semiparametric
  accelerated failure time model.
\newblock {\em Biometrics} {\bf 75,} 516--527.

\bibitem[\protect\citeauthoryear{Kang, Schafer, et~al\mbox{.}}{Kang
  et~al.}{2007}]{kang2007demystifying}
Kang, J.~D., Schafer, J.~L., et~al. (2007).
\newblock Demystifying double robustness: A comparison of alternative
  strategies for estimating a population mean from incomplete data.
\newblock {\em Statistical science} {\bf 22,} 523--539.

\bibitem[\protect\citeauthoryear{Kennedy}{Kennedy}{2016}]{kennedy2016semiparametric}
Kennedy, E.~H. (2016).
\newblock Semiparametric theory and empirical processes in causal inference.
\newblock In {\em Statistical causal inferences and their applications in
  public health research}, pages 141--167. Springer.

\bibitem[\protect\citeauthoryear{Kennedy}{Kennedy}{2019}]{kennedy2019nonparametric}
Kennedy, E.~H. (2019).
\newblock Nonparametric causal effects based on incremental propensity score
  interventions.
\newblock {\em Journal of the American Statistical Association} {\bf 114,}
  645--656.

\bibitem[\protect\citeauthoryear{Kennedy, Lorch, and Small}{Kennedy
  et~al.}{2019}]{kennedy2019robust}
Kennedy, E.~H., Lorch, S., and Small, D.~S. (2019).
\newblock Robust causal inference with continuous instruments using the local
  instrumental variable curve.
\newblock {\em Journal of the Royal Statistical Society: Series B (Statistical
  Methodology)} {\bf 81,} 121--143.

\bibitem[\protect\citeauthoryear{Kianian, Kim, Fine, and Peng}{Kianian
  et~al.}{2019}]{kianian2019causal}
Kianian, B., Kim, J.~I., Fine, J.~P., and Peng, L. (2019).
\newblock Causal proportional hazards estimation with a binary instrumental
  variable.
\newblock {\em arXiv preprint arXiv:1901.11050} .

\bibitem[\protect\citeauthoryear{Li and Lu}{Li and Lu}{2015}]{li2015bayesian}
Li, G. and Lu, X. (2015).
\newblock A bayesian approach for instrumental variable analysis with censored
  time-to-event outcome.
\newblock {\em Statistics in medicine} {\bf 34,} 664--684.

\bibitem[\protect\citeauthoryear{Li, Fine, and Brookhart}{Li
  et~al.}{2015}]{li2015instrumental}
Li, J., Fine, J., and Brookhart, A. (2015).
\newblock Instrumental variable additive hazards models.
\newblock {\em Biometrics} {\bf 71,} 122--130.

\bibitem[\protect\citeauthoryear{MacKenzie, Tosteson, Morden, Stukel, and
  O'Malley}{MacKenzie et~al.}{2014}]{mackenzie2014using}
MacKenzie, T.~A., Tosteson, T.~D., Morden, N.~E., Stukel, T.~A., and O'Malley,
  A.~J. (2014).
\newblock Using instrumental variables to estimate a cox’s proportional
  hazards regression subject to additive confounding.
\newblock {\em Health Services and Outcomes Research Methodology} {\bf 14,}
  54--68.

\bibitem[\protect\citeauthoryear{Mart{\'\i}nez-Camblor, Mackenzie, Staiger,
  Goodney, and O’Malley}{Mart{\'\i}nez-Camblor
  et~al.}{2019}]{martinez2019adjusting}
Mart{\'\i}nez-Camblor, P., Mackenzie, T., Staiger, D.~O., Goodney, P.~P., and
  O’Malley, A.~J. (2019).
\newblock Adjusting for bias introduced by instrumental variable estimation in
  the cox proportional hazards model.
\newblock {\em Biostatistics} {\bf 20,} 80--96.

\bibitem[\protect\citeauthoryear{Mauro, Kennedy, and Nagin}{Mauro
  et~al.}{2018}]{mauro2018instrumental}
Mauro, J.~A., Kennedy, E.~H., and Nagin, D. (2018).
\newblock Instrumental variable methods using dynamic interventions.
\newblock {\em arXiv preprint arXiv:1811.01301} .

\bibitem[\protect\citeauthoryear{Moore and van~der Laan}{Moore and van~der
  Laan}{2009}]{moore2009increasing}
Moore, K.~L. and van~der Laan, M.~J. (2009).
\newblock Increasing power in randomized trials with right censored outcomes
  through covariate adjustment.
\newblock {\em Journal of biopharmaceutical statistics} {\bf 19,} 1099--1131.

\bibitem[\protect\citeauthoryear{Neyman}{Neyman}{1923}]{neyman1923application}
Neyman, J. (1923).
\newblock On the application of probability theory to agricultural experiments.
  essay on principles. section 9 (with discussion) translated in statistical
  sciences.
\newblock {\em Statistical Science} pages 465--472.

\bibitem[\protect\citeauthoryear{Nordestgaard, Palmer, Benn, Zacho,
  Tybj{\ae}rg-Hansen, Smith, and Timpson}{Nordestgaard
  et~al.}{2012}]{nordestgaard2012effect}
Nordestgaard, B.~G., Palmer, T.~M., Benn, M., Zacho, J., Tybj{\ae}rg-Hansen,
  A., Smith, G.~D., and Timpson, N.~J. (2012).
\newblock The effect of elevated body mass index on ischemic heart disease
  risk: causal estimates from a mendelian randomisation approach.
\newblock {\em PLoS medicine} {\bf 9,} e1001212.

\bibitem[\protect\citeauthoryear{Ogburn, Rotnitzky, and Robins}{Ogburn
  et~al.}{2015}]{ogburn2015doubly}
Ogburn, E.~L., Rotnitzky, A., and Robins, J.~M. (2015).
\newblock Doubly robust estimation of the local average treatment effect curve.
\newblock {\em Journal of the Royal Statistical Society: Series B (Statistical
  Methodology)} {\bf 77,} 373--396.

\bibitem[\protect\citeauthoryear{Prorok, Andriole, Bresalier, Buys, Chia,
  Crawford, Fogel, Gelmann, Gilbert, Hasson, et~al\mbox{.}}{Prorok
  et~al.}{2000}]{prorok2000design}
Prorok, P.~C., Andriole, G.~L., Bresalier, R.~S., Buys, S.~S., Chia, D.,
  Crawford, E.~D., Fogel, R., Gelmann, E.~P., Gilbert, F., Hasson, M.~A.,
  et~al. (2000).
\newblock Design of the prostate, lung, colorectal and ovarian (plco) cancer
  screening trial.
\newblock {\em Controlled clinical trials} {\bf 21,} 273S--309S.

\bibitem[\protect\citeauthoryear{Rassen, Brookhart, Glynn, Mittleman, and
  Schneeweiss}{Rassen et~al.}{2009}]{rassen2009instrumental}
Rassen, J.~A., Brookhart, M.~A., Glynn, R.~J., Mittleman, M.~A., and
  Schneeweiss, S. (2009).
\newblock Instrumental variables i: instrumental variables exploit natural
  variation in nonexperimental data to estimate causal relationships.
\newblock {\em Journal of clinical epidemiology} {\bf 62,} 1226--1232.

\bibitem[\protect\citeauthoryear{Richardson, Hudgens, Fine, and
  Brookhart}{Richardson et~al.}{2017}]{richardson2017nonparametric}
Richardson, A., Hudgens, M.~G., Fine, J.~P., and Brookhart, M.~A. (2017).
\newblock Nonparametric binary instrumental variable analysis of competing
  risks data.
\newblock {\em Biostatistics} {\bf 18,} 48--61.

\bibitem[\protect\citeauthoryear{Robins, Li, Tchetgen, van~der Vaart,
  et~al\mbox{.}}{Robins et~al.}{2008}]{robins2008higher}
Robins, J., Li, L., Tchetgen, E., van~der Vaart, A., et~al. (2008).
\newblock Higher order influence functions and minimax estimation of nonlinear
  functionals.
\newblock In {\em Probability and statistics: essays in honor of David A.
  Freedman}, pages 335--421. Institute of Mathematical Statistics.

\bibitem[\protect\citeauthoryear{Robins and Tsiatis}{Robins and
  Tsiatis}{1991}]{robins1991correcting}
Robins, J.~M. and Tsiatis, A.~A. (1991).
\newblock Correcting for non-compliance in randomized trials using rank
  preserving structural failure time models.
\newblock {\em Communications in statistics-Theory and Methods} {\bf 20,}
  2609--2631.

\bibitem[\protect\citeauthoryear{Rubin}{Rubin}{1974}]{rubin1974estimating}
Rubin, D.~B. (1974).
\newblock Estimating causal effects of treatments in randomized and
  nonrandomized studies.
\newblock {\em Journal of educational Psychology} {\bf 66,} 688.

\bibitem[\protect\citeauthoryear{Schooling, Lopez, Yeung, and Huang}{Schooling
  et~al.}{2019}]{schooling2019survival}
Schooling, C.~M., Lopez, P., Yeung, S.~A., and Huang, J. (2019).
\newblock Survival bias and competing risk can severely bias mendelian
  randomization studies of specific conditions.
\newblock {\em bioRxiv} page 716621.

\bibitem[\protect\citeauthoryear{Tan}{Tan}{2006}]{tan2006regression}
Tan, Z. (2006).
\newblock Regression and weighting methods for causal inference using
  instrumental variables.
\newblock {\em Journal of the American Statistical Association} {\bf 101,}
  1607--1618.

\bibitem[\protect\citeauthoryear{Tchetgen~Tchetgen, Sun, and
  Walter}{Tchetgen~Tchetgen et~al.}{2017}]{tchetgen2017genius}
Tchetgen~Tchetgen, E.~J., Sun, B., and Walter, S. (2017).
\newblock The genius approach to robust mendelian randomization inference.
\newblock {\em arXiv preprint arXiv:1709.07779} .

\bibitem[\protect\citeauthoryear{Tchetgen~Tchetgen, Walter, Vansteelandt,
  Martinussen, and Glymour}{Tchetgen~Tchetgen
  et~al.}{2015}]{tchetgen2015instrumental}
Tchetgen~Tchetgen, E.~J., Walter, S., Vansteelandt, S., Martinussen, T., and
  Glymour, M. (2015).
\newblock Instrumental variable estimation in a survival context.
\newblock {\em Epidemiology (Cambridge, Mass.)} {\bf 26,} 402.

\bibitem[\protect\citeauthoryear{Team, Gohagan, Prorok, Hayes, and Kramer}{Team
  et~al.}{2000}]{team2000prostate}
Team, P.~P., Gohagan, J.~K., Prorok, P.~C., Hayes, R.~B., and Kramer, B.-S.
  (2000).
\newblock The prostate, lung, colorectal and ovarian (plco) cancer screening
  trial of the national cancer institute: history, organization, and status.
\newblock {\em Controlled clinical trials} {\bf 21,} 251S--272S.

\bibitem[\protect\citeauthoryear{Terza, Basu, and Rathouz}{Terza
  et~al.}{2008}]{terza2008two}
Terza, J.~V., Basu, A., and Rathouz, P.~J. (2008).
\newblock Two-stage residual inclusion estimation: addressing endogeneity in
  health econometric modeling.
\newblock {\em Journal of health economics} {\bf 27,} 531--543.

\bibitem[\protect\citeauthoryear{Van~der Laan, Laan, and Robins}{Van~der Laan
  et~al.}{2003}]{van2003unified}
Van~der Laan, M.~J., Laan, M., and Robins, J.~M. (2003).
\newblock {\em Unified methods for censored longitudinal data and causality}.
\newblock Springer Science \& Business Media.

\bibitem[\protect\citeauthoryear{Vansteelandt, Joffe,
  et~al\mbox{.}}{Vansteelandt et~al.}{2014}]{vansteelandt2014structural}
Vansteelandt, S., Joffe, M., et~al. (2014).
\newblock Structural nested models and g-estimation: the partially realized
  promise.
\newblock {\em Statistical Science} {\bf 29,} 707--731.

\bibitem[\protect\citeauthoryear{Wan, Small, Bekelman, and Mitra}{Wan
  et~al.}{2015}]{wan2015bias}
Wan, F., Small, D., Bekelman, J.~E., and Mitra, N. (2015).
\newblock Bias in estimating the causal hazard ratio when using two-stage
  instrumental variable methods.
\newblock {\em Statistics in medicine} {\bf 34,} 2235--2265.

\bibitem[\protect\citeauthoryear{Wan, Small, and Mitra}{Wan
  et~al.}{2018}]{wan2018general}
Wan, F., Small, D., and Mitra, N. (2018).
\newblock A general approach to evaluating the bias of 2-stage instrumental
  variable estimators.
\newblock {\em Statistics in medicine} {\bf 37,} 1997--2015.

\bibitem[\protect\citeauthoryear{Wright, Wager, and Probst}{Wright
  et~al.}{2019}]{wright2019fast}
Wright, M.~N., Wager, S., and Probst, P. (2019).
\newblock A fast implementation of random forests.
\newblock {\em R package version 0.11} {\bf 2,}.

\bibitem[\protect\citeauthoryear{Yang, Pieper, and Cools}{Yang
  et~al.}{2020}]{yang2020semiparametric}
Yang, S., Pieper, K., and Cools, F. (2020).
\newblock Semiparametric estimation of structural failure time models in
  continuous-time processes.
\newblock {\em Biometrika} {\bf 107,} 123--136.

\bibitem[\protect\citeauthoryear{Zheng and Van Der~Laan}{Zheng and Van
  Der~Laan}{2010}]{zheng2010asymptotic}
Zheng, W. and Van Der~Laan, M.~J. (2010).
\newblock Asymptotic theory for cross-validated targeted maximum likelihood
  estimation.

\end{thebibliography}

\section*{Supporting Information}

Software in the form of \texttt{R} code to implement the parametric and nonparametric estimation for the simulation study is available at one of the authors' Github repository\footnote{\url{https://github.com/youjin1207/survivalIV}}, and the data is available through the National Cancer Institutes upon approval (\url{https://cdas.cancer.gov/plco/}).

\newpage
\def\spacingset#1{\renewcommand{\baselinestretch}%
{#1}\small\normalsize} \spacingset{1}

\setcounter{equation}{0}
\setcounter{figure}{0}
\setcounter{table}{0}
\setcounter{page}{1}
\setcounter{section}{0}
\setcounter{theorem}{0}
\setcounter{lemma}{0}
\renewcommand{\theequation}{S\arabic{equation}}
\renewcommand{\thetheorem}{S\arabic{theorem}}
\renewcommand{\thelemma}{S\arabic{lemma}}
\renewcommand{\thefigure}{S\arabic{figure}}
\renewcommand{\thesection}{S\arabic{section}}
\renewcommand{\thetable}{S\arabic{table}}
\renewcommand{\thefigure}{S\arabic{figure}}

\begin{center}
    {\LARGE\bf Supporting Information for ``Doubly Robust Nonparametric Instrumental Variable Estimators for Survival Outcomes"}
\end{center}

\date{}
\section{Proof from the main manuscript}
\begin{proof}[Proof of Lemma~1]
	\begin{eqnarray*}
		\mathbb{E}\left(  (Y^{A=1}_{t} - Y^{A=0}_{t}) \mathbb{I}(A^{Z=1} > A^{Z=0}) \right) &=& \mathbb{E} \left[  \mathbb{E}\left(  (Y^{A=1}_{t} - Y^{A=0}_{t} ) (A^{Z=1} > A^{Z=0})  | \bm{X}_{0} \right) \right] \\ 
		&=& \mathbb{E} \left[  \mathbb{E} \left( Y^{Z=1}_{t} - Y^{Z=1}_{t} | \bm{X}_{0} \right)  \right] \\
		&=& \mathbb{E} \left[   \mathbb{E}\left( Y_{t} | \bm{X}_{0}, Z= 1 \right) - \mathbb{E}\left( Y_{t} | \bm{X}_{0}, Z = 0 \right)  \right] 
	\end{eqnarray*}
	Then for each $z \in \{0, 1\}$:	
	\begin{eqnarray*}	
		\mathbb{E} \left[   \mathbb{E}\left( Y_{t} | \bm{X}_{0}, Z= z \right) \right] 
		&=& \mathbb{E} \left\{   \sum\limits_{a \in \{0,1 \}} \mathbb{E}\left(  Y_{t} | \bm{X}_{0},  Z = z, A = a \right) p(A = a | \bm{X}_{0},  Z = z) \right\} \\ 
		&=& \mathbb{E} \left\{  \sum\limits_{a \in \{0,1 \}} \mathbb{E}\left(  Y_{t} | \bm{X}_{0}, R = 1, Z = z, A = a \right) p(A = a | \bm{X}_{0},  Z = z) \right\}
	\end{eqnarray*}
	On the other hand, 
	\begin{eqnarray*}
		\mathbb{P}\left(A^{Z=1} > A^{Z=0} \right) &=& \mathbb{P}\left(  A^{Z=1} =1 ~\&~ A^{Z=0} = 0 \right)  \\ 
		&=& \mathbb{E} \left[ \mathbb{E}\left(  A^{Z=1} - A^{Z=0}  | \bm{X}_{0} \right) \right] \\
		&=& \mathbb{E} \left[ \mathbb{E}\left(  A |Z=1, \bm{X}_{0} \right) -  \mathbb{E}\left(  A |Z=0, \bm{X}_{0} \right) \right]. 
	\end{eqnarray*}
	The second line follows from the monotonicity assumption. 	
\end{proof}	

\begin{proof}[Proof of Theorem~1]
Under the identification assumptions (A1)--(A6), our target estimand of a local average treatment effect on $Y_{t}$ can be represented through the following conditional expectations:
\begin{equation}
	\begin{split}
		\psi^{\text{b}}(t)  & =:  \frac{\psi^{\text{b}, Z= 1}_{1}(t) - \psi^{\text{b}, Z=0}_{1}(t) }{ \psi^{\text{b}, Z=1}_{2} - \psi^{\text{b}, Z=0}_{2}} \\ 
		& = \frac{ \mathbb{E} \left(  \mathbb{E} \left( Y_{t} | \bm{X}_{0},  Z = 1  \right)  \right) -  \mathbb{E} \left(  \mathbb{E} \left( Y_{t} | \bm{X}_{0}, Z = 0 \right)  \right) }{\mathbb{E} \left(  \mathbb{E} \left( A | \bm{X}_{0}, Z = 1  \right)  \right) -  \mathbb{E} \left(  \mathbb{E} \left( A | \bm{X}_{0}, Z = 0 \right)  \right)} \\
		& =  \{ \mathbb{E} \left(  \mathbb{E} \left( A | \bm{X}_{0}, Z = 1  \right)  \right) -  \mathbb{E} \left(  \mathbb{E} \left( A | \bm{X}_{0}, Z = 0 \right)  \right) \}^{-1} \\ & \times \left[ \mathbb{E} \left\{  \sum\limits_{a \in \{ 0, 1 \}} \mathbb{E} \left( Y_{t} | \bm{X}_{0},  Z = 1, A = a  \right) \mathbb{P}(A = a | \bm{X}_{0}, Z = 1)  \right\} \right.  \\ &   \quad  -  \left. \mathbb{E} \left\{  \sum\limits_{a \in \{0,1 \}}  \mathbb{E} \left( Y_{t} | \bm{X}_{0},  Z = 0, A = a \right)\mathbb{P}(A = a | \bm{X}_{0}, Z  = 0)  \right\} \right]  \\
	& =  \{  \mathbb{E} \left(  \mathbb{E} \left( A | \bm{X}_{0}, Z = 1  \right)  \right) -  \mathbb{E} \left(  \mathbb{E} \left( A | \bm{X}_{0}, Z = 0 \right)  \right) \}^{-1} \\ & \times \left[ \mathbb{E} \left\{  \sum\limits_{a \in \{ 0, 1 \}} \mathbb{E} \left( Y_{t} | \bm{X}_{0},  R = 1,  Z = 1, A = a  \right) \mathbb{P}(A = a | \bm{X}_{0}, Z = 1)  \right\} \right. \  \\ & \quad  -  \left[ \mathbb{E} \left\{  \sum\limits_{a \in \{0,1 \}}  \mathbb{E} \left( Y_{t} | \bm{X}_{0},  R = 1, Z = 0, A = a \right)\mathbb{P}(A = a | \bm{X}_{0}, Z  = 0)  \right\} \right].
	\end{split}
\end{equation}

Consider the first term of $\psi^{\text{b}, Z = 1}_{1}(t) := \psi^{\text{b}, Z = 1, A = 1}_{1}(t) + \psi^{\text{b}, Z = 1, A = 0}_{1}(t)$.
\begin{eqnarray*}
	\label{eq:phi1_binary}
	\psi^{\text{b},Z=1, A = 1}_{1}(t) &:=&    \mathbb{E} (  \mathbb{E} \left( Y_{t} | \bm{X}_{0},  R = 1,  Z = 1, A = 1  \right) \mathbb{P}(A = 1 | \bm{X}_{0}, Z = 1)  ) \\ 
	&= &  
	\int\limits_{\mathcal{X}_{0}} \mathbb{E} \left( Y_{t} | \bm{X}_{0} = \bm{x}_{0}, R = 1,  Z = 1, A = 1 \right) d\mathbb{P}(A = 1 | \bm{x}_{0}, Z = 1 ) d \mathbb{P}(\bm{x_{0}}).
\end{eqnarray*}
Therefore, 
\begin{eqnarray*}
	\IF\left( \psi^{\text{b}, Z=1, A = 1}_{1}(t) \right) &=& \int\limits_{\mathcal{X}_{0}} \IF\left\{ \mu_{t,1,1}(\bm{x}_{0}) \right\} \pi_{1}(\bm{x}_{0})  d\mathbb{P} \left( \bm{x}_{0} \right)  +
	\int\limits_{\mathcal{X}_{0}} \mu_{t,1,1}(\bm{x}_{0})\IF \left\{ \pi_{1} (\bm{x}_{0}) \right\} d\mathbb{P} \left( \bm{x}_{0} \right)  \\ 
	&&\quad + \int\limits_{\mathcal{X}_{0}}  \mu_{t,1,1}(\bm{x}_{0}) \pi_{1} (\bm{x}_{0})  \IF \left\{  d\mathbb{P} \left( \bm{x}_{0} \right) \right\} \\ 
	&=:& (A) + (B) + (C),
\end{eqnarray*}
where

\begin{eqnarray*}
	(A)&=&\int\limits_{\mathcal{X}_{0}} \IF\left\{ \mu_{t,1,1}(\bm{x}_{0}) \right\}  \pi_{1}(\bm{x}_{0})   d\mathbb{P} \left( \bm{x}_{0} \right) \\
	& =&  \int\limits_{\mathcal{X}_{0}} \frac{\mathbb{I} \left( (\bm{X}_{0}, R, Z, A) = (\bm{x}_{0}, 1, 1, 1)  \right) }{ d\mathbb{P} \left( \bm{X}_{0}, R = 1, Z = 1, A =1 \right) } \left\{ Y_{t} - \mu_{t,1,1}(\bm{x}_{0}) \right\} \pi_{1} (\bm{x}_{0}) d \mathbb{P}\left( \bm{x}_{0} \right) 
	\\ &= & \frac{(Y_{t} - \mu_{t,1,1}(\bm{x}_{0})) \mathbb{I}(R = 1, Z = 1, A = 1) \pi_{1}(\bm{x}_{0})}{ d\mathbb{P}(R = 1 |  \bm{x}_{0}, Z = 1, A = 1) \pi_{1} (\bm{x}_{0}) d \mathbb{P}(Z = 1 | \bm{x}_{0})   } \\ 
	&=& \frac{(Y_{t} - \mu_{t,1,1}(\bm{x}_{0})) \mathbb{I}(R = 1, Z = 1, A = 1)}{  \omega(  \bm{x}_{0}, Z = 1, A = 1) \delta_{1}(\bm{x}_{0}) };
\end{eqnarray*}

\begin{eqnarray*}
	(B)&=&\int\limits_{\mathcal{X}_{0}} \mu_{t,1,1}(\bm{x}_{0})  \IF\left\{ \pi_{1} (\bm{x}_{0})\right\}    d\mathbb{P} \left( \bm{x}_{0} \right) \\
	& =& \int\limits_{\mathcal{X}_{0}}  \mu_{t,1,1}(\bm{x}_{0}) \frac{\mathbb{I} \left( (\bm{X}_{0}, Z ) = (\bm{x}_{0}, 1)  \right)}{ d\mathbb{P}(\bm{x}_{0}, Z = 1) } (A - \pi_{1} (\bm{x}_{0}) ) d\mathbb{P} (\bm{x}_{0})\\
	& = & \mu_{t,1,1}(\bm{x}_{0})\frac{\mathbb{I}(Z = 1)}{ \delta_{1}(\bm{x}_{0} )} \left( A - \pi_{1} (\bm{x}_{0}) \right);
\end{eqnarray*}

\begin{eqnarray*} 
	(C)& = & \int\limits_{\mathcal{X}_{0}}  \mu_{t,1,1}(\bm{x}_{0}) \pi_{1}(\bm{x}_{0})  \IF \left\{  d\mathbb{P} \left( \bm{x}_{0} \right) \right\} \\
	&& = \int\limits_{\mathcal{X}_{0}} \mu_{t,1,1}(\bm{x}_{0}) \pi_{1}(\bm{x}_{0}) \left\{  \mathbb{I}\left( \bm{X}_{0} = \bm{x}_{0} \right) - d\mathbb{P}\left( \bm{x}_{0} \right) \right\} \\
	& = &  \mu_{t,1,1}(\bm{x}_{0}) \pi_{1}(\bm{x}_{0})  - \mathbb{E}\left( \mu_{t,1,1}(\bm{x}_{0}) \pi_{1}(\bm{x}_{0}) \right) \\ 
	&=& \mu_{t,1,1}(\bm{x}_{0}) \pi_{1}(\bm{x}_{0}) - \psi^{\text{b}, Z = 1, A = 1}_{1}(t).
\end{eqnarray*}

Then the above equations finally lead to:
\begin{eqnarray*}
	\IF\left( \psi^{\text{b}, Z=1, A  = 1}_{1}(t)\right) &=&    \frac{\mathbb{I}(R = 1, Z = 1)}{\omega_{1,1}(\bm{X}_{0}) \delta_{1}(\bm{x}_{0})} A (Y_{t} - \mu_{t,1,1}(\bm{X}_{0}) ) + \frac{\mathbb{I}(Z = 1)}{\delta_{1}(\bm{x}_{0})} \mu_{t,1, 1}(\bm{X}_{0}) (A -  \pi_{1} (\bm{x}_{0})) 
	\\ && \quad + \mu_{t,1,1}(\bm{X}_{0}) \pi_{1}(\bm{x}_{0}) - \psi^{\text{b}, Z = 1, A = 1}(t),
\end{eqnarray*}
We can similarly construct \IF($\psi^{\text{b}, Z= z, A=a}_{1}(t)$) for any $a, z \in \{0,1 \}$. Using the same notation defined in Theorem~1 in the main text, the influence function of $\psi^{\text{b}, Z = 1}_{1}(t)$ is given by:
\begin{eqnarray*}
	\IF ( \psi^{\text{b}, Z=1}_{1}(t) ) & =&  \left(  \mu_{t, 1, 1} \pi_{1} +  \mu_{t,1,0} (1-\pi_{1}  \right) \\ &  +&  \frac{\mathbb{I}(Z = 1)}{\delta_{1} } \left[   \frac{R A}{\omega_{1,1} } (Y_{t} - \mu_{t,1, 1}) + \mu_{t,1, 1}(A - \pi_{1}) \right]  \\ 
	& + & \frac{\mathbb{I}(Z = 1)}{\delta_{1}} \left[   \frac{R (1-A)}{\omega_{1,0} } (Y_{t} - \mu_{t,1, 0}) + \mu_{t,1, 0}((1-A) - (1-\pi_{1}) \right] - \psi^{\text{b}, Z = 1}_{1}(t) \\
	&=& M_{1}(t) - \psi^{\text{b}, Z = 1}_{1}(t)
\end{eqnarray*}

On the other hand, the influence function of $\psi^{\text{b}, Z=1}_{2} = \mathbb{E}( \mathbb{E}(A | \bm{X}_{0}, Z=1))$ is given by:
\begin{eqnarray*}
	\label{eq:phi2_if}
	\mbox{\IF}\left(\psi^{\text{b}, Z=1}_{2}\right)  &= & \frac{\mathbb{I}(Z = 1)}{\delta_{1}} \left\{ A - \pi_{1}\right\} +  \pi_{1} - \psi^{\text{b}, Z=1}_{2} \\ 
	&=&  \Pi_{1} - \psi^{\text{b}, Z = 1}_{2} 
\end{eqnarray*}

We can similarly derive $\phi^{\text{b}}_{1}(t)$ := $\IF (\psi^{\text{b}, Z=0}_{1}(t) )$ and $\phi^{\text{b}}_{2}$ $:= \IF\left(\psi^{\text{b}, Z=0}_{2}\right)$. Then $\IF (\psi^{\text{b}}_{1}(t) ) = \IF (\psi^{\text{b}, Z=1}_{1}(t) ) + \IF (\psi^{\text{b}, Z=0}_{1}(t) )$ and $\IF (\psi^{\text{b}}_{2} ) = \IF (\psi^{\text{b}, Z=1}_{2} ) + \IF (\psi^{\text{b}, Z=0}_{2} )$.
To sum up, with the same notations as defined in Theorem~1 we have:
\begin{eqnarray*}
	\IF(\psi^{\text{b}}(t)) &=& \frac{ \IF(\psi^{\text{b}}_{1} (t)) \psi^{\text{b}}_{2} - \psi^{\text{b}}_{1} (t) \IF (\psi^{\text{b}}_{2}) }{(\psi^{\text{b}}_{2})^2} \\ 
	&=&  \frac{ \phi^{\text{b}}_{1}(t) \psi^{\text{b}}_{2} - \psi^{\text{b}}_{1} (t) \phi^{\text{b}}_{2} }{(\psi^{\text{b}}_{2})^2} \\ 
	&=&  \frac{ (M_{1}(t) - M_{0}(t) - \psi^{\text{b}}_{1} (t) ) \psi^{\text{b}}_{2} - \psi^{\text{b}}_{1} (t) (\Pi_{1} - \Pi_{0} - \psi^{\text{b}}_{2} )   }{   (\psi^{\text{b}}_{2})^2 } \\ 
	&=& \frac{M_{1}(t) - M_{0}(t) - \psi^{\text{b}}_{1} (t) }{\psi^{\text{b}}_{2} } - \frac{\psi^{\text{b}}_{1} (t)}{ (\psi^{\text{b}}_{2} )^2}(\Pi_{1} - \Pi_{0}) + \frac{\psi^{\text{b}}_{1}(t)}{\psi^{\text{b}}_{2}} \\ 
	& =& \left\{   (M_{1}(t) -M_{0}(t)) - \psi^{\text{b}}(t) (\Pi_{1} - \Pi_{0} ) \right\} / \psi^{\text{b}}_{2}.
\end{eqnarray*}	
This finally yields an influence function-based estimator for $\psi^{\text{b}}(t)$:
\begin{eqnarray*}
	 \hat{\psi}^{\text{b}}(t) =\mathbb{P}_{n}\left( M_{1} (t; \hat{\Theta}) - M_{0} (t; \hat{\Theta}) \right) / \mathbb{P}_{n} \left( \Pi_{1}(\hat{\Theta}) - \Pi_{0}(\hat{\Theta}) \right).
\end{eqnarray*}
\end{proof}

\begin{proof}[Proof of Proposition~2]
By \cite{diaz2019statistical}, under the idenfication assumptions described in the main text, the efficient influence function of $\mathbb{E}(Y_{t} | \bm{X}_{0}, Z = z,  A = a)$ is given by:
\begin{eqnarray*}
&& -\sum\limits_{k=1}^{t} \frac{\mathbb{I}(A = a) \mathbb{I}(Z = z) \mathbb{I}(Y_{k-1} = 1) }{ \delta_{z}(\bm{X}_{0}) \pi_{z}(a; \bm{X}_{0}) G_{k-1, z, a}(\bm{X}_{0})} \frac{S_{t, z, a}(\bm{X}_{0})}{S_{k, z, a}(\bm{X}_{0})} \{ R \mathbb{I}(Y_{k} = 0) - h_{k, z, a}(\bm{X}_{0}) \} \\ 
&&  + S_{k, z, a}(\bm{X}_{0}) - \mathbb{E}(Y_{t} | \bm{X}_{0}, Z = z, A = a).
\end{eqnarray*}
In Proof of Theorem~1, replace the influence function of $\mathbb{E}(Y_{t} | \bm{X}_{0}, Z = z, A = a)$ by the function above.
\end{proof}

\begin{proof}[Proof of Theorem~2]
	Consider an influence function-based estimator $\hat{\psi}^{\text{b}}(t)$ as follows:
	\begin{eqnarray*}
		\hat{\psi}^{\text{b}}(t) &=& \mathbb{P}_{n}\left( M_{1} (t;\hat{\Theta}) - M_{0} (t;\hat{\Theta}) \right) / \mathbb{P}_{n} \left( \Pi_{1}(\hat{\Theta}) - \Pi_{0}(\hat{\Theta}) \right)
		\\ &=& \mathbb{P}_{n} \left( \hat{\phi}^{\text{b}}_{1}(t) + \hat{\psi}^{\text{b}}_{1}(t) \right) / \mathbb{P}_{n} \left( \hat{\phi}^{\text{b}}_{2} + \hat{\psi}^{\text{b}}_{2} \right)
		\\ &=:& \mathbb{P}_{n} \left( \hat{\phi}^{\text{b}}_{1}(t; \hat{\Theta}) + \hat{\psi}^{\text{b}}_{1}(t; \hat{\Theta}) \right) / \mathbb{P}_{n} \left( \hat{\phi}^{\text{b}}_{2}(\hat{\Theta}) + \hat{\psi}^{\text{b}}_{2} (\hat{\Theta})\right) ,		
	\end{eqnarray*}
where $\Theta$ denotes a set of nuisance functions and $\hat{\Theta}$ is its estimates.
	Then we have:
	\begin{eqnarray}
		\label{eq:substract}	
		&& \hat{\psi}^{\text{b}}(t; \hat{\Theta}) - \psi^{\text{b}}(t; \Theta) =  \frac{ \mathbb{P}_{n} \left( \phi^{\text{b}}_{1}(t; \hat{\Theta}) + \psi^{\text{b}}_{1}(t; \hat{\Theta}) \right) }{\mathbb{P}_{n} \left( \phi^{\text{b}}_{2}(\hat{\Theta}) + \psi^{\text{b}}_{2} (\hat{\Theta})\right)} -   \frac{ \mathbb{P} \left( \psi^{\text{b}}_{1}(t; \Theta)  \right)  }{\mathbb{P} \left( \psi^{\text{b}}_{2}(\Theta)  \right)}  \\ 
		&=& \xi^{-1}_{n}\left[ \mathbb{P}_{n} ( \psi^{\text{b}}_{1}(t; \hat{\Theta}) + \phi^{\text{b}}_{1}(t; \hat{\Theta})  ) -   \mathbb{P} ( \psi^{\text{b}}_{1}(t; \Theta)  ) - \psi^{\text{b}}(t)  \left(  \mathbb{P}_{n} ( \psi^{\text{b}}_{2}( \hat{\Theta}) + \psi^{\text{b}}_{2}(\hat{\Theta}) ) -   \mathbb{P} ( \psi^{\text{b}}_{2}(\Theta)  ) \right)  \right] \nonumber \\ 
		&=&   \xi^{-1}_{n} \left[ (\mathbb{P}_{n} - \mathbb{P}) \left\{ \psi^{\text{b}}_{1}(t; \hat{\Theta}) + \phi^{\text{b}}_{1}(t; \hat{\Theta})  - \psi^{\text{b}}_{1}(t; \Theta)  - \phi^{\text{b}}_{1}(t; \Theta)\right\} \right. \nonumber \\
		&& \quad \quad  \quad -   \left.  \psi^{\text{b}}(t)  (\mathbb{P}_{n} - \mathbb{P}) \left\{ \psi^{\text{b}}_{2}(\hat{\Theta}) + \phi^{\text{b}}_{2}(\hat{\Theta})   -    \psi^{\text{b}}_{2}(\Theta)- \phi^{\text{b}}_{2}(\Theta)    \right\}  \right] \nonumber \\ 
		\quad &+&   \xi^{-1}_{n} \left[ (\mathbb{P}_{n} - \mathbb{P})  \left\{ \psi^{\text{b}}_{1}(t; \Theta)  + \phi^{\text{b}}_{1}(t; \Theta) - \psi^{\text{b}}(t)    (\psi^{\text{b}}_{2}(\Theta) + \phi^{\text{b}}_{2}(\Theta) \right\}    \right] \nonumber \\ 
		\quad & + &  \xi^{-1}_{n}\left[ \mathbb{P} ( \phi^{\text{b}}_{1}(t; \hat{\Theta}) +  \psi^{\text{b}}_{1}(t; \hat{\Theta})  - \psi^{\text{b}}_{1}(t; \Theta)  ) - \psi^{\text{b}}(t)  \left(  \mathbb{P} ( \phi^{\text{b}}_{2}(\hat{\Theta}) +  \psi^{\text{b}}_{2}( \hat{\Theta}) - \psi^{\text{b}}_{2}(\Theta)  ) \right)  \right]  \nonumber,
	\end{eqnarray}
where $\xi_{n} = \mathbb{P}_{n} \left( \phi^{\text{b}}_{2}(\hat{\Theta}) + \psi^{\text{b}}_{2} (\hat{\Theta})\right)$.
In the last equation, the first term is $o_{\mathbb{P}}(n^{-1/2})$ by (C1) and the second term is asymptotically normal by the central limit theorem applied to $\psi^{\text{b}}_{1}(t; \Theta)  + \phi^{\text{b}}_{1}(t; \Theta) - \psi^{\text{b}}(t)    (\psi^{\text{b}}_{2}(\Theta) + \phi^{\text{b}}_{2}(\Theta) = \phi^{\text{b}}_{1}(t;\Theta) - \psi^{\text{b}}(t;\Theta)\phi^{\text{b}}_{2}(\Theta)$. So we will focus on the last term.

Let $\phi^{\text{b}, z, a}_{1}(t; \Theta)$ and $\phi^{\text{b}, z}_{2}(\Theta)$ denote the influence function of $\psi^{\text{b}, z, a}_{1}(t; \Theta)$ and $\psi^{\text{b}, z}_{2}(\Theta)$, respectively. Then by  iterative double conditional expectations under (A4), we have the following factorizations of the nuisance functions:
\begin{eqnarray*} 
		&& \mathbb{P}( \psi^{\text{b}, Z=1, A = 1}_{1}(t; \hat{\Theta}) + \phi^{\text{b}, Z =1, A = 1}_{1}(t; \hat{\Theta}) ) - \mathbb{P} ( \psi^{\text{b}, Z=1, A = 1}_{1}(t; \Theta) )  \\ && = \mathbb{E} \left[ \hat{\mu}_{t, 1, 1} \hat{\pi}_{1} +  \frac{\mathbb{I}(R = 1, Z = 1)}{ \hat{\omega}_{1,1} \hat{\delta}_{1} } A(Y_{t} - \hat{\mu}_{t, 1, 1} )  + \frac{\mathbb{I}(Z=1)}{\hat{\delta}_{1}}  \hat{\mu}_{t,1,1}(A - \hat{\pi}_{1}) -   \mu_{t,1,1} \pi_{1}  \right]  \\ 
		&& = \mathbb{E}\left[  \frac{\mathbb{I}(R= 1, Z = 1) }{ \hat{\omega}_{1,1} \hat{\delta}_{1} } A  \left( \mu_{t, 1, 1}- \hat{\mu}_{t,1,1} \right) + \frac{\mathbb{I}(Z=1)}{\hat{\delta}_{1}}  \hat{\mu}_{t,1,1} (A - \hat{\pi}_{1})  \right]  + \mathbb{E}\left[  \hat{\mu}_{t,1,1} \hat{\pi}_{1} - \mu_{t,1,1} \pi_{1} \right] \\ 
		&& = \mathbb{E}\left[  \frac{  \omega_{1,1} \mathbb{I}(Z = 1) }{ \hat{\omega}_{1,1} \hat{\delta}_{1} } A  \left( \mu_{t, 1, 1}- \hat{\mu}_{t,1,1} \right) + \frac{\mathbb{I}(Z=1)}{\hat{\delta}_{1}}  \hat{\mu}_{t,1,1} (A - \hat{\pi}_{1})  \right]  + \mathbb{E}\left[  \hat{\mu}_{t,1,1} \hat{\pi}_{1} - \mu_{t,1,1} \pi_{1} \right] \\ 
		&& = \mathbb{E}\left[  \frac{  \omega_{1,1} \mathbb{I}(Z = 1) }{ \hat{\omega}_{1,1} \hat{\delta}_{1} } \pi_{1}  \left( \mu_{t, 1, 1}- \hat{\mu}_{t,1,1} \right) + \frac{\mathbb{I}(Z=1)}{\hat{\delta}_{1}}  \hat{\mu}_{t,1,1} (\pi_{1} - \hat{\pi}_{1})  \right] + \mathbb{E}\left[  \hat{\mu}_{t,1,1} \hat{\pi}_{1} - \mu_{t,1,1} \pi_{1} \right] \\ 
		&& = \mathbb{E}\left[  \frac{\omega_{1,1} \delta_{1} }{ \hat{\omega}_{1,1} \hat{\delta}_{1} } \pi_{1} \left( \mu_{t,1,1} - \hat{\mu}_{t,1,1}  \right) + \frac{\delta_{1}}{\hat{\delta}_{1}}  \hat{\mu}_{t,1,1}(\pi_{1} - \hat{\pi}_{1} ) \right]  + \mathbb{E}\left[  \hat{\mu}_{t,1,1} \hat{\pi}_{1} - \mu_{t,1,1} \pi_{1} \right] \\ 
		&& = \mathbb{E} \left[ \frac{ \omega_{1,1} \delta_{1} - \hat{\omega}_{1,1} \hat{\delta}_{1} }{ \hat{\omega}_{1,1} \hat{\delta}_{1} } \pi_{1} (\mu_{t,1,1} - \hat{\mu}_{t,1,1})  \right]  +  \mathbb{E} \left[ \frac{  \delta_{1} - \delta_{1} }{\hat{\delta}_{1} }  \hat{\mu}_{t,1,1}(\pi_{1} - \hat{\pi}_{1})   \right]
		\\ && \lesssim \parallel \omega_{1,1}\delta_{1} -  \hat{\omega}_{1,1} \hat{\delta}_{1}  \parallel \cdot \parallel \mu_{t,1,1} - \hat{\mu}_{t,1,1} \parallel + \parallel \delta_{1} -   \hat{\delta}_{1}  \parallel \cdot \parallel \pi_{1} - \hat{\pi}_{1} \parallel .
	\end{eqnarray*}	
The last inequality holds when $\mathbb{P}( \epsilon < \hat{\omega}_{1,1} \hat{\delta}_{1} < \infty ) = 1$, $\mathbb{P}(\epsilon < \hat{\delta}_{1} < \infty ) = 1$, $\mathbb{P}(\epsilon < \pi_{1} < \infty ) = 1$, and $\mathbb{P}( \epsilon < \hat{\mu}_{t,1,1} < \infty)= 1$ for some $\epsilon > 0$. Note that all of these nuisance functions have a maximum of one, so roughly speaking these conditions only restrict too small values of each nuisance function. 
We have a similar result for $\mathbb{P}( \psi^{\text{b}, Z=z, A = a}_{1}(t; \hat{\Theta}) + \phi^{\text{b}, Z=z, A = a}_{1}(t; \hat{\Theta}) ) - \mathbb{P} ( \psi^{\text{b}, Z=z, A = a}_{1}(t; \Theta) )$ for any $z, a \in \{0,1\}$. 

Moreover, we can show a similar result for $\mathbb{P}( \psi^{\text{b}, Z=1}_{2}( \hat{\Theta}) +  \phi^{\text{b}, Z=1}_{2}( \hat{\Theta})) - \mathbb{P} ( \psi^{\text{b}, Z=1}_{2}(\Theta) )$ using double conditional expectations as follows:
\begin{eqnarray*} 
		&& \mathbb{P}( \psi^{\text{b}, Z=1}_{2}(\hat{\Theta})  + \phi^{\text{b}, Z = 1}_{2}(\hat{\Theta})) - \mathbb{P} ( \psi^{\text{b}, Z=1}_{2}(\Theta) ) \\ 
		&& =  \mathbb{E} \left[  \hat{\pi}_{1} + \frac{\mathbb{I}(Z = 1)}{\hat{\delta}_{1}} \left\{ A - \hat{\pi}_{1} \right\} - \pi_{1} \right]  \\ 
		&& = \mathbb{E} \left[  \frac{\mathbb{I}(Z = 1) - \hat{\delta}_{1} }{\hat{\delta}_{1}}  \left( \pi_{1}  -  \hat{\pi}_{1}  \right)  \right] \\ 
		&& \lesssim  \parallel \delta_{1} -  \hat{\delta}_{1} \parallel  \cdot  \parallel \pi_{1}  -  \hat{\pi}_{1}  \parallel.
\end{eqnarray*}

Combining the results from the nominator and denominator together, we can finally show that Equation~\ref{eq:substract} exhibits the following double robustness structure:
\begin{eqnarray*}
 && O_{\mathbb{P}} \left\{ \sum\limits_{z,a \in \{ 0, 1\}}	\parallel \omega_{z,a} \delta - \hat{\omega}_{z,a} \hat{\delta}  \parallel \parallel  \mu_{t, z,a}  - \hat{\mu}_{t, z, a} \parallel + \parallel  \delta - \hat{\delta}  \parallel \parallel  \pi_{z} - \hat{\pi}_{z} \parallel \right\}
 \\ && \quad +   \xi^{-1}_{n}(\mathbb{P}_{n} - \mathbb{P})(\phi^{\text{b}}_{1}(t ; \Theta) - \psi^{\text{b}}(t; \Theta) \phi^{\text{b}}_{2}(\Theta)) + o_{\mathbb{P}}(n^{-1/2})   
\end{eqnarray*}	
That is, if an outcome model is misspecified, a censoring model and an intervention density should be correctly specified; if a treatment model is misspecified, an intervention density should be correctly specified. 
\end{proof}

\begin{proof}[Proof of Theorem~3]
	The structure of this proof is similar to that of Theorem~2 except that we now have a different estimator for the failure and censoring outcomes. Since the denominator of the estimator~(3) and~(5) are the same, we only need to focus on the two different numerators.
	Let $\psi^{\text{b}, Z =z, A = a}_{1, \text{IF-hazard}}(t; \Theta) = \mathbb{E}(Y_{t} | Z = z, A = a)$ under the censoring assumption (A4*), and denote the influence function of $\psi^{\text{b}, Z = z, A = a}_{1, \text{IF-hazard}} (t; \Theta)$ as $\phi^{\text{b}, Z = z, A = a}_{1, \text{IF-hazard}} (t; \Theta)$.
	Then the next equations examine the convergence of $\mathbb{P}\left(\psi^{\text{b}, Z = z, A = a}_{1, \text{IF-hazard}} (t; \hat{\Theta}) +\phi_{1, \text{IF-hazard}}^{\text{b}, Z = z, A =a}(t; \hat{\Theta}) \right)$ to $\mathbb{P}\left({\psi}^{\text{b}, Z = z, A = a}_{1, \text{IF-hazard}} (t; \Theta) \right)$ for $z, a \in \{0, 1\}$.
	By iterative double expectations under (A4*), we have the following factorization of the nuisance functions:
	\begin{eqnarray*} 
		\allowdisplaybreaks		
		&& \mathbb{P}( \psi^{\text{b}, Z=z, A = a}_{1, \text{IF-hazard}}(t; \hat{\Theta}) + \phi^{\text{b}, Z=z, A = a}_{1, \text{IF-hazard}}(t; \hat{\Theta})) - \mathbb{P} ( \psi^{\text{b}, Z=z, A = a}_{1, \text{IF-hazard}}(t; \Theta) )  \\ &=& 
		\mathbb{E} \left[  - \frac{\mathbb{I}(Z =  z)}{ \hat{\delta}_{z}(\bm{X}_{0})} \left[ \sum\limits_{k=1}^{t} \frac{\mathbb{I}(A  = a)  \mathbb{I}(Y_{k-1} = 1)}{   \hat{G}_{k-1, z, a}(\bm{X}_{0})}   \frac{\hat{S}_{t, z, a}(\bm{X}_{0})}{ \hat{S}_{k, z, a}(\bm{X}_{0})}  \{ R \mathbb{I}(Y_{k} = 0) - \hat{h}_{k}(\bm{X}_{0}, z, a)  \}   \right] \right.   \\
		& + &  \mathbb{E} \left[ \frac{\mathbb{I}(Z =  z)}{ \hat{\delta}_{z}(\bm{X}_{0})}  \hat{S}_{t, z, a}(\bm{X}_{0}) (\mathbb{I}(A = a) - \hat{\pi}_{z}(a ; \bm{X}_{0}))  + \hat{S}_{t, z, a}(\bm{X}_{0}) \hat{\pi}(a ; \bm{X}_{0})  - 
		S_{t, z, a}(\bm{X}_{0}) \pi(a ; \bm{X}_{0}) \right] \\
		&=& 
		\mathbb{E} \left[  - \frac{\mathbb{I}(Z =  z)}{ \hat{\delta}_{z}(\bm{X}_{0})} \left[ \sum\limits_{k=1}^{t} \frac{\mathbb{I}(A  = a)  \mathbb{I}( C > k-1)}{   \hat{G}_{k-1, z, a}(\bm{X}_{0})}   \frac{\hat{S}_{t, z, a}(\bm{X}_{0})}{ \hat{S}_{k, z, a}(\bm{X}_{0})}  \{ \mathbb{I}(T = k)  - \mathbb{I}(T > k-1)\hat{h}_{k}(\bm{X}_{0}, z, a)  \}   \right] \right.   \\
		& + &  \mathbb{E} \left[ \frac{\mathbb{I}(Z =  z)}{ \hat{\delta}_{z}(\bm{X}_{0})}  \hat{S}_{t, z, a}(\bm{X}_{0}) (\mathbb{I}(A = a) - \hat{\pi}_{z}(a ; \bm{X}_{0}))  + \hat{S}_{t, z, a}(\bm{X}_{0}) \hat{\pi}(a ; \bm{X}_{0})  - 
		S_{t, z, a}(\bm{X}_{0}) \pi(a ; \bm{X}_{0}) \right] \\
			&=& 
		\mathbb{E} \left[  - \frac{\mathbb{I}(Z =  z)}{ \hat{\delta}_{z}(\bm{X}_{0})} \left[ \sum\limits_{k=1}^{t} \frac{\mathbb{I}(A  = a)  \mathbb{I}( C > k-1)}{   \hat{G}_{k-1, z, a}(\bm{X}_{0})}   \frac{\hat{S}_{t, z, a}(\bm{X}_{0})}{ \hat{S}_{k, z, a}(\bm{X}_{0})}  S_{k-1,z,a}(\bm{X}_{0})\{ h_{k}(\bm{X}_{0}, z, a)  -  \hat{h}_{k}(\bm{X}_{0}, z, a)  \}    \right] \right.   \\
		& + &  \mathbb{E} \left[ \frac{\mathbb{I}(Z =  z)}{ \hat{\delta}_{z}(\bm{X}_{0})}  \hat{S}_{t, z, a}(\bm{X}_{0}) (\mathbb{I}(A = a) - \hat{\pi}_{z}(a ; \bm{X}_{0}))  + \hat{S}_{t, z, a}(\bm{X}_{0}) \hat{\pi}(a ; \bm{X}_{0})  - 
		S_{t, z, a}(\bm{X}_{0}) \pi(a ; \bm{X}_{0}) \right] \\
			&=& 
		\mathbb{E} \left[  - \frac{\delta_{z}(\bm{X}_{0})}{ \hat{\delta}_{z}(\bm{X}_{0})} \left[ \sum\limits_{k=1}^{t} \frac{ \pi_{z}(a; \bm{X}_{0})  G_{k-1, z, a}(\bm{X}_{0})}{   \hat{G}_{k-1, z, a}(\bm{X}_{0})}   \frac{\hat{S}_{t, z, a}(\bm{X}_{0})}{ \hat{S}_{k, z, a}(\bm{X}_{0})}  S_{k-1,z,a}(\bm{X}_{0})\{ h_{k}(\bm{X}_{0}, z, a)  -  \hat{h}_{k}(\bm{X}_{0}, z, a)  \}   \right] \right.   \\
		& + &  \mathbb{E} \left[ \frac{\delta_{z}(\bm{X}_{0})}{ \hat{\delta}_{z}(\bm{X}_{0})}  \hat{S}_{t, z, a}(\bm{X}_{0}) (\pi_{z}(a; \bm{X}_{0}) - \hat{\pi}_{z}(a ; \bm{X}_{0}))  + \hat{S}_{t, z, a}(\bm{X}_{0}) \hat{\pi}(a ; \bm{X}_{0})  - 
		S_{t, z, a}(\bm{X}_{0}) \pi(a ; \bm{X}_{0}) \right] \\
		&& \mbox{(Hereafter, Let us omit the conditioning $\bm{X}_{0}$ in notations for simplicity)} \\ 
			&=& 
		\mathbb{E} \left[  - \pi_{z}(a)   \left[ \sum\limits_{k=1}^{t} \frac{\delta_{z} G_{k-1, z, a}}{ \hat{\delta}_{z}\hat{G}_{k-1, z, a}}   \frac{\hat{S}_{t, z, a}}{ \hat{S}_{k, z, a}}   S_{k-1,z,a}  \{ h_{k,z,a} - \hat{h}_{k, z, a}  \}   \right] \right.   \\
		& + &  \mathbb{E} \left[ \frac{\delta_{z} - \hat{\delta}_{z}}{ \hat{\delta}_{z}}  \hat{S}_{t, z, a} (\pi_{z}(a) - \hat{\pi}_{z}(a)) +  \pi_{z}(a) (\hat{S}_{t,z,a}- S_{t,z,a} ) \right].
	\end{eqnarray*}	

Following the proof in Appendix A from~\cite{moore2009increasing}, we may rewrite 
\begin{eqnarray}
	\label{eq:moore}
&&  -  \pi_{z}(a) \left[ \sum\limits_{k=1}^{t} \frac{\delta_{z} G_{k-1, z, a}}{ \hat{\delta}_{z}\hat{G}_{k-1, z, a}}   \frac{\hat{S}_{t, z, a}}{ \hat{S}_{k, z, a}}   S_{k-1,z,a}  \{ h_{k,z,a} - \hat{h}_{k, z, a}  \}    - (\hat{S}_{t,z,a}- S_{t,z,a} ) \right] \\
& =& \pi_{z}(a)  \hat{S}_{t,z,a}  \sum\limits_{k=1}^{t} \frac{\delta_{z} G_{k-1, z, a}}{ \hat{\delta}_{z}\hat{G}_{k-1, z, a}}\left\{ - \frac{\hat{S}_{k, z, a} - S_{k, z, a}}{\hat{S}_{k, z, a}} + \frac{\hat{S}_{k-1, z, a} - S_{k-1, z, a}}{\hat{S}_{k-1, z, a}}  \right\} + \pi_{z}(a)(\hat{S}_{t,z,a}- S_{t,z,a} ).	\nonumber 
\end{eqnarray}
Then applying the Cauchy-Schwarz inequality and the triangle inequality, we have the following inequality.
\begin{eqnarray*}
&&\eqref{eq:moore}= \pi_{z}(a)   \hat{S}_{t,z,a} \sum\limits_{k=1}^{t}  \frac{\delta_{z} G_{k-1, z, a} - \hat{\delta}_{z}\hat{G}_{k-1, z, a}}{ \hat{\delta}_{z}\hat{G}_{k-1, z, a}}  \left\{  \frac{\hat{S}_{k-1, z, a} - S_{k-1, z, a}}{\hat{S}_{k-1, z, a}} - \frac{\hat{S}_{k, z, a} - S_{k, z, a}}{\hat{S}_{k, z, a}} \right\}    \\
& \leq & \pi_{z}(a) \hat{S}_{t,z,a} \left\{ \sum\limits_{k=1}^{t} \left(  \frac{\delta_{z} G_{k-1, z, a} - \hat{\delta}_{z}\hat{G}_{k-1, z, a}}{ \hat{\delta}_{z}\hat{G}_{k-1, z, a}}  \right)^2 \right\}^{1/2} \left\{ \sum\limits_{k=1}^{t}  \left(  \frac{\hat{S}_{k-1, z, a} - S_{k-1, z, a}}{\hat{S}_{k-1, z, a}} - \frac{\hat{S}_{k, z, a} - S_{k, z, a}}{\hat{S}_{k, z, a}} \right)^{2} \right\}^{1/2} \\
&\leq& \pi_{z}(a)\left\{ \sum\limits_{k=1}^{t} \left(  \frac{\delta_{z} G_{k-1, z, a} - \hat{\delta}_{z}\hat{G}_{k-1, z, a}}{ \hat{\delta}_{z}\hat{G}_{k-1, z, a}}  \right)^2 \right\}^{1/2} \left\{ \sum\limits_{k=1}^{t}  \left(  \frac{\hat{S}_{k-1, z, a} - S_{k-1, z, a}}{\hat{S}_{k-1, z, a}}  \right)^{2} - \left( \frac{\hat{S}_{k, z, a} - S_{k, z, a}}{\hat{S}_{k, z, a}} \right)^{2} \right\}^{1/2} \\ &=& \pi_{z}(a) \hat{S}_{t,z,a} \left\{ \sum\limits_{k=1}^{t} \left(  \frac{\delta_{z} G_{k-1, z, a} - \hat{\delta}_{z}\hat{G}_{k-1, z, a}}{ \hat{\delta}_{z}\hat{G}_{k-1, z, a}}  \right)^2 \right\}^{1/2} \left\lVert  \hat{S}_{t, z, a} - S_{t, z, a}\right\rVert.
\end{eqnarray*}	

Therefore,
\begin{eqnarray*}
		&& \mathbb{P}( \psi^{\text{b}, Z=z, A = a}_{1, \text{IF-hazard}}(t; \hat{\Theta}) + \phi^{\text{b}, Z=z, A = a}_{1, \text{IF-hazard}}(t; \hat{\Theta})) - \mathbb{P} ( \psi^{\text{b}, Z=z, A = a}_{1, \text{IF-hazard}}(t; \Theta) )  \\ &\leq& 
		 \pi_{z}(a)  \left\lVert S_{t,z,a}- \hat{S}_{t,z,a} \right\rVert \left\lVert \sum\limits_{k=1}^{t} \frac{\delta_{z} G_{k-1, z, a} - \hat{\delta}_{z}\hat{G}_{k-1, z, a}}{ \hat{\delta}_{z}\hat{G}_{k-1, z, a}} \right\rVert+ \frac{\delta_{z} - \hat{\delta}_{z}}{\hat{\delta}_{z}}\hat{S}_{t,z,a}(\pi_{z}(a) - \hat{\pi}_{z}(a)) \\ 
		&= & O_{\mathbb{P}} \left\{ \parallel  S_{t,z,a}- \hat{S}_{t,z,a} \parallel \parallel \sum\limits_{k=1}^{t} (\delta_{z} G_{k-1, z, a} - \hat{\delta}_{z}\hat{G}_{k-1, z, a})   \parallel + \parallel  \delta_{z} - \hat{\delta}_{z}  \parallel  \parallel \pi_{z}(a) - \hat{\pi}_{z}(a) \parallel\right\},
\end{eqnarray*}	
when $\mathbb{P} ( \epsilon < \pi_{z}(a) < \infty) = 1$, $\mathbb{P} ( \epsilon < \hat{\delta}_{z} \hat{G}_{k-1, z, a} < \infty) = 1$, 
$\mathbb{P} ( \epsilon < \hat{S}_{t,z, a} < \infty) = 1$, and $\mathbb{P} ( \epsilon < \hat{\delta}_{z}< \infty) = 1$.

We can apply the above result to \emph{any} $z, a \in \{0,1\}$ of $\mathbb{P}( \psi^{\text{b}, Z=z, A = a}_{1, \text{IF-hazard}}(t; \hat{\Theta}) + \phi^{\text{b}, Z=z, A = a}_{1, \text{IF-hazard}}(t; \hat{\Theta})) - \mathbb{P} ( \psi^{\text{b}, Z=z, A = a}_{1, \text{IF-hazard}}(t; \Theta) )$.
If we use the convergence result for the denominator $\psi^{\text{b}}_{2}(\Theta)$ from Theorem~2, we can conclude the following double robustness of $\hat{\psi}^{\text{b}}_{\text{IF-hazard}}(t)$:
	\begin{eqnarray*}
		&& O_{\mathbb{P}} \left\{ \sum\limits_{z,a \in \{ 0, 1\}}	\parallel  S_{t,z,a}- \hat{S}_{t,z,a} \parallel \parallel \sum\limits_{k=1}^{t} (\delta_{z} G_{k-1, z, a} - \hat{\delta}_{z}\hat{G}_{k-1, z, a}) \parallel   \parallel + \parallel  \delta_{z} - \hat{\delta}_{z}  \parallel  \parallel \pi_{z}(a) - \hat{\pi}_{z}(a) \parallel \right\}
		\\ && \quad  + \xi^{-1}_{n}(\mathbb{P}_{n} - \mathbb{P}) \left\{ \phi^{\text{b}}_{1,\text{IF-hazard}}(t ; \Theta)-\psi^{\text{b}}_{\text{IF-hazard}}(t ; \Theta) \phi^{\text{b}}_{2}(\Theta) \right\} +  o_{\mathbb{P}}(n^{-1/2}).
	\end{eqnarray*}	
\end{proof}

\section{Simulation models}
\label{sec:appendix_sim}

In this supplementary section, we describe the detailed data generating models used for simulation.

\subsection{Cox proportional model}
\label{ssec:appendix_cox}

\begin{eqnarray*}
	\bm{X}_{i,0} &\sim& MVN (\bm{0}, \bm{I}_{5 \times 5})\\
	\mbox{logit}( p(Z_{i} = 1 | \bm{X}_{i, 0})) &= &   \bm{X}^{\prime}_{i,0} \boldsymbol{\kappa}   \mbox{ or } Z_{i} \mid \bm{X}_{i,0} \sim N( \bm{X}^{\prime}_{i,0} \boldsymbol{\kappa}  , \bm{I}_{5 \times 5}) \\
	\mbox{logit}\left(p(A_{i} | \bm{X}_{i,0}, Z_{i}, U_{i})  \right) & =  &   -0.1 + \bm{X}^{\prime}_{i,0} \boldsymbol{\alpha}_{x} + Z_{i}  \alpha_{z}  + U_{i} \alpha_{u}     \\ 
	h(t_{i} | \bm{X}_{i,0}, Z_{i}, A_{i}, U_{i}) & = & h_{0}(t_{i})\exp \left(\bm{X}^{\prime}_{i,0}  \boldsymbol{\beta}_{x} + A_{i}  \beta_{a} + U_{i}  \beta_{u} \right) \\  \mbox{logit}\left(p(R_{i} | \bm{X}_{i,0}, Z_{i}, A_{i})\right) &=&   \bm{X}^{\prime}_{i,0} \boldsymbol{\gamma}_{x} + Z_{i} \gamma_{z} + A_{i} \gamma_{a}  
\end{eqnarray*}

We generated instruments with $\boldsymbol{\kappa} = (-0.5, -0.5, 1, -1, -0.7)^{\prime}$ for a binary case (simulation with a continuous IV will be formally introduced in Section~\ref{sec:continuous}).
For a treatment density, we set $\alpha_{z} = 2$; $\boldsymbol{\alpha}_{x} = (0.5, 0.5, -1.0, -1.0, 1.5)^{\prime}$; and $\alpha_{u} = -0.3$.
Baseline hazard function is $h_{0}(t) = 0.0005t + 0.0003 t^{2}$ and we set $\beta_{a} = 1.5$, $\boldsymbol{\beta}_{x} = (0.5, 0.5, -0.5, -0.5, -0.5)^{\prime}$, and $\beta_{u} = 0.5$.
For a censoring indicator $R$,  we set $\gamma_{z} = -0.5$, $\gamma_{a} = +0.5$; and $\boldsymbol{\gamma}_{x} = (0.3, 0.3, -0.3, -0.3, -0.3)^{\prime}$.

\subsection{Additive hazards model}

\begin{eqnarray*}
		\bm{X}_{i,0} &\sim& MVN (\bm{0}, \bm{I}_{5 \times 5})\\
	\mbox{logit}( p(Z_{i} = 1 | \bm{X}_{i,0}))  &= &  \bm{X}^{\prime}_{i,0} \boldsymbol{\kappa}  \mbox{ or } Z_{i} \mid \bm{X}_{i,0} \sim N( \bm{X}^{\prime}_{i,0} \boldsymbol{\kappa}  , 1) \\
	\mbox{logit}\left( p(A_{i} | \bm{X}_{i,0}, Z_{i}, U_{i}) \right)& =  &   0.1 + \bm{X}^{\prime}_{i,0} \boldsymbol{\alpha}_{x} + Z_{i}  \alpha_{z} + U_{i} \alpha_{u}     \\ 
	h(t_{i} | \bm{X}_{i,0}, Z_{i}, A_{i}, U_{i}) & = & h_{0}(t_{i}) + \exp \left( \bm{X}^{\prime}_{i,0}  \boldsymbol{\beta}_{x} + A_{i}  \beta_{a}   + U_{i}  \beta_{u} \right) \\  \mbox{logit}\left(  p(R_{i} | \bm{X}_{i,0}, Z_{i}, A_{i})\right) &=& \bm{X}^{\prime}_{i,0} \boldsymbol{\gamma}_{x} + Z_{i} \gamma_{z} + A_{i} \gamma_{a}   
\end{eqnarray*}

We generated instruments with $\boldsymbol{\kappa} = (-0.5, -0.5, 1, -1, -0.7)^{\prime}$ both for a binary case.
For a treatment density, we set $\alpha_{z} = 2$; $\boldsymbol{\alpha}_{x} = (0.1, 0.1, -0.2, -0.2, 0.3)^{\prime}$; and $\alpha_{u} = 0.3$.
Baseline hazard is $h_{0}(t) = 0.0005t + 0.0003 t^{2}$ while we set $\beta_{a} = 0.03$, $\boldsymbol{\beta}_{x} = (0.01,  0.01, -0.01, -0.01, -0.01)^{\prime}$, and $\beta_{u} = -0.01$.
We set $\gamma_{z} = 0.5$, $\gamma_{a} = -0.5$, and $\boldsymbol{\gamma}_{x} = (-0.3, -0.3, 0.3, 0.3, 0.3)^{\prime}$ for generating a censoring indicator $R$.

\subsection{Simple estimators}

For the first simulation, we consider two simple estimators for comparison:
\begin{eqnarray*}
	&&\hat{\psi}^{\text{b}}_{\text{ipw}}(t)  =  \left[  \mathbb{P}_{n} \left(\frac{Y R  A Z}{\hat{\omega}_{1,1}  \hat{\pi}_{1} \hat{\delta}_{1} } \right) \mathbb{P}_{n}  \left(  \frac{A Z}{\hat{\delta}_{1}} \right) + 
	\mathbb{P}_{n}  \left( \frac{Y R  (1-A) Z}{ \hat{\omega}_{1,0} (1-\hat{\pi}_{1})  \hat{\delta}_{1} }  \right) \mathbb{P}_{n} \left( \frac{(1-A) Z }{ \hat{\delta}_{1}} \right)  \right. 
	\\  
	&- & \left. \left\{     \mathbb{P}_{n}  \left(\frac{Y R  A (1-Z) }{\hat{\omega}_{0,1}  \hat{\pi}_{0} \hat{\delta}_{0} } \right)  \mathbb{P}_{n} \left(  \frac{A (1-Z)}{\hat{\delta}_{0}} \right) + 
	\mathbb{P}_{n} \left( \frac{Y R  (1-A) (1-Z)}{ \hat{\omega}_{0,0} (1-\hat{\pi}_{0})  \hat{\delta}_{0} }  \right)  \mathbb{P}_{n} \left( \frac{(1-A) (1-Z) }{ \hat{\delta}_{0}} \right)  \right\}       \right]
	\\
	& & \times   \mathbb{P}_{n}  \left\{ \frac{AZ}{\hat{\delta}_{1}}  - \frac{A(1-Z)}{\hat{\delta}_{0}} \right\}^{-1}  \\ 
	&& \hat{\psi}^{\text{b}}_{\text{plug-in}}(t)  = \frac{  \mathbb{P}_{n} \left( \hat{\mu}_{t, 1, 1} \hat{\pi}_{1} + \hat{\mu}_{t,1,0} (1 - \hat{\pi}_{1})  \right) -  \mathbb{P}_{n} \left(  \hat{\mu}_{t, 0, 1} \hat{\pi}_{0} + \hat{\mu}_{t,0,0} (1 - \hat{\pi}_{0})  \right)    }{   \mathbb{P}_{n} (\hat{\pi}_{1}) -   \mathbb{P}_{n}  (\hat{\pi}_{0}) }.
\end{eqnarray*}

\subsection{Simulation under different censoring distributions}

Section~5 in the main text illustrates the performance of three, influence function-based estimators under three different censoring distributions and unmeasured confounding assumptions. For this simulation, we generated simulated data from the following models:
\begin{itemize}
	\item (Baseline covariates)  $\bm{X}_{i,0} \overset{i.i.d.}{\sim} \mbox{MVN}( \bm{0}, \bm{I}_{5 \times 5} )$.
	\item (IV assignment)  Binary instrument: $\mbox{logit}\left(p (Z_{i} = 1  | \bm{X}_{i,0}) \right) =   \bm{X}^{\prime}_{i,0}  \boldsymbol{\kappa}  $.
	\item (Treatment assignment) $\mbox{logit}\left( p(A_{i} | \bm{X}_{i,0}, Z_{i}, U_{i})\right)  =    -0.1 + \bm{X}^{\prime}_{i,0} \boldsymbol{\alpha}_{x} +Z_{i}  \alpha_{z} + U_{i} \alpha_{u}    $.
	\item (Time-to-event)  Cox model: $h(t_{i} | \bm{X}_{i,0}, Z_{i}, A_{i}, U_{i})  =  h_{0}(t_{i})\exp \left( \bm{X}^{\prime}_{i,0}  \boldsymbol{\beta}_{x} + A_{i}  \beta_{a}  + U_{i}  \beta_{u}\right)$.
\end{itemize}

We generated the data with $\boldsymbol{\kappa} = (-0.5, -0.5, 1, -1, -0.7)^{\prime}$, $\alpha_{z} = 2$, $\boldsymbol{\alpha}_{x} = (0.5, 0.5, -1.0, -1.0, 1.5)^{\prime}$. To control the unmeasured confounding, we set $\alpha_{u} = 0.3$ under (i) and (ii) and we set $\alpha_{u} = 0.0$ under no unmeasured confounding scenario (iii). 
Baseline hazard is $h_{0}(t) = 0.0005t + 0.0003 t^{2}$, and we have $\beta_{a} = 1.5$, $\boldsymbol{\beta}_{x} = (0.5, 0.5, -0.5, -0.5, -0.5)^{\prime}$, and we set $\beta_{u} = 2.5$ for (i)--(ii) and $\beta_{u} = 0.5$ for (iii).

\subsection{Misspecified Covariates}

We use the same transformation on the four (out of five) baseline covariates of $\bm{X}_{0} = (X_{0,1}, X_{0,2}, X_{0,3}, X_{0,4})$ in~\cite{kang2007demystifying} and keep using $X_{0,5}$ as it is to construct the misspecified covariates set $\bm{W}_{0}  = (W_{0,1} , W_{0, 2}, W_{0, 3}, W_{0,4},W_{0,5} ) \in \mathbb{R}^{5}$. The following equations relate $\bm{X}_{0}$ into $\bm{W}_{0}$.
\begin{eqnarray*}
	W_{0,1} &=& \exp(X_{0, 1}/2) \\ 
	W_{0,2} &=& X_{0,2} / (1 + \exp(X_{0,1}))+ 10\\
	W_{0,3} &=& (X_{0,1} X_{0,3}/25 + 0.6)^{3} \\
	W_{0,4} &=& (X_{0,2} + X_{0,4} + 20 )^2 \\ 
	W_{0,5} &= & X_{0,5}.
\end{eqnarray*}	


\section{Continuous instrument}
\label{sec:continuous}

In this section, we propose a causal estimator for~(2) in the main text with a continuous IV under censoring assumption (A4) and demonstrate its performance through simulation.

\subsection{Nonparametric estimator}

While assumptions (A1)--(A4) are commonly defined for $\psi^{\text{b}}(t)$ and $\psi^{\text{c}, \kappa}(t)$, our proposed estimator for $\psi^{\text{c}, \kappa}(t)$ needs another set of positivity and monotonicity assumption mainly because the subpopulation on which the local average treatment effect is measured is differently defined when $Z$ is continuous.
Positivity assumption for a continuous instrument often requires non-zero probabilities of observing the instrument over \textit{all} possible values of $Z$, which might be too stringent and sometimes infeasible. Instead, we consider a modified causal estimator with a relaxed version of positivity assumption. This is earned by conditioning on the subpopulation where perturbing instrumental values by $\pm \kappa$ could potentially reverse a treatment value for some positive $\kappa > 0$.
When a perturbed instrumental value by $\pm \kappa$ is still within the support of the variable, i.e., $Z \pm \kappa \in [z_{\min}, z_{\max}]$, then the identification requires the following positivity and monotonicity assumption~\citep{mauro2018instrumental, kennedy2019nonparametric}.	
\begin{enumerate}
	\item[ ($\mbox{A5}^{\text{c}}$)] Positivity: $\mathbb{P}\left(  \epsilon < \mathbb{P}( Z \pm \kappa  | \bm{X}_{0} ) /  \mathbb{P}( Z | \bm{X}_{0} ) <  \infty \right) =  1$ and $\mathbb{P}(R=1 | \bm{X}_{0}, Z, A = a) > 0$ a.e. for some $\epsilon > 0$ across all $Z \pm \kappa \in [z_{\min}, z_{\max}]$ and $a \in \{0, 1\}$.
	\item[ ($\mbox{A6}^{\text{c}}$)] Monotonicity: $\mathbb{P}\left( A^{Z+\kappa} > A^{Z-\kappa}  \right)  > 0$
\end{enumerate}
With those assumptions, the following Lemma~\ref{lemma:continuous_psi} represents the causal estimator~(2) in the main text through the identifiable conditional expectations.

\begin{lemma}(Target estimand $\psi^{\text{c}, \kappa}(t)$ through conditional expectations)
	\label{lemma:continuous_psi}
Define $Z_{+\kappa} = Z + \kappa \mathbb{I}( Z + \kappa < z_{\text{max}} )$ and $Z_{-\kappa} =  Z - \kappa \mathbb{I}(Z - \kappa > z_{\text{min}})$.	
Under the identification assumptions (A1)--(A4) and ($\mbox{A5}^{\text{c}}$)--($\mbox{A6}^{\text{c}}$), our target estimand of a local average treatment effect on $Y_{t}$ can be represented through the following conditional expectations:

\begin{eqnarray}
\label{eq:psi_c_lemma}
\psi^{\text{c}, \kappa}(t) = \frac{\mathbb{E}\left( \mathbb{E}\left( Y_{t} | \bm{X}_{0},  Z_{+\kappa} \right) - \mathbb{E}\left( Y_{t} | \bm{X}_{0},  Z_{-\kappa}   \right) \right)}{\mathbb{E}\left(  \mathbb{E}\left(  A | \bm{X}_{0}, Z_{+\kappa}   \right) - \mathbb{E} \left( A | \bm{X}_{0}, Z_{-\kappa}   \right) \right)}.
\end{eqnarray}
Similarly to the binary case, $\mathbb{E}\left( Y_{t} | \bm{X}_{0}, Z_{\pm \kappa} \right) = \sum\limits_{a \in \{0,1 \}} \mathbb{E}\left(  Y_{t} | \bm{X}_{0}, R = 1, Z_{\pm \kappa}, A = a \right) \mathbb{P}(A = a | \bm{X}_{0},  Z_{\pm \kappa})$  by (A4).
\end{lemma}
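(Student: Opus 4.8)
The plan is to follow the same two-step template used in the proof of Lemma~1: first rewrite the numerator $\mathbb{E}\left( (Y^{A=1}_{t} - Y^{A=0}_{t}) \mathbb{I}(A^{Z+\kappa} > A^{Z-\kappa}) \right)$ and then the denominator $\mathbb{P}(A^{Z+\kappa} > A^{Z-\kappa})$, in each case by conditioning on $(\bm{X}_{0}, Z)$ via iterated expectations and then invoking (A1)--(A3) to pass from counterfactuals to observable regressions. The only genuinely new ingredient relative to Lemma~1 is that the ``assigned'' and ``reference'' instrument levels are no longer the fixed values $1$ and $0$ but the subject-specific, possibly truncated values $Z_{+\kappa}$ and $Z_{-\kappa}$ (with $Z_{-\kappa} \le Z \le Z_{+\kappa}$ always); I would carry these along as deterministic functions of $Z$ throughout.

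For the numerator I would condition on $(\bm{X}_{0}, Z = z)$, so that $A^{z_{+\kappa}}$ and $A^{z_{-\kappa}}$ become counterfactual treatments at two fixed instrument levels $z_{-\kappa} \le z_{+\kappa}$. Using monotonicity (here in the sense that $A^{z}$ is a.s.\ nondecreasing in $z$, ruling out defiers between $z_{-\kappa}$ and $z_{+\kappa}$) together with the exclusion restriction (A3) and consistency (A1), I would establish the pointwise identity $(Y^{A=1}_{t} - Y^{A=0}_{t}) \mathbb{I}(A^{z_{+\kappa}} > A^{z_{-\kappa}}) = Y^{Z=z_{+\kappa}}_{t} - Y^{Z=z_{-\kappa}}_{t}$: on compliers the indicator is $1$ and the exclusion restriction gives $Y^{Z=z_{+\kappa}}_{t} = Y^{A=1}_{t}$ and $Y^{Z=z_{-\kappa}}_{t} = Y^{A=0}_{t}$; on always- and never-takers both sides vanish; defiers are excluded; and this remains valid in the degenerate boundary case where $z_{+\kappa}=z$ or $z_{-\kappa}=z$. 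Taking $\mathbb{E}(\cdot \mid \bm{X}_{0}, Z = z)$ and applying ignorability (A2) and consistency (A1) to each term — noting that $z_{\pm\kappa}$ is a deterministic function of $z$, so that $\mathbb{E}(Y^{z_{\pm\kappa}}_{t} \mid \bm{X}_{0}, Z = z) = \mathbb{E}(Y^{z_{\pm\kappa}}_{t} \mid \bm{X}_{0}) = \mathbb{E}(Y_{t} \mid \bm{X}_{0}, Z = z_{\pm\kappa})$ — and then taking an outer expectation over $(\bm{X}_{0}, Z)$ yields $\mathbb{E}\left( \mathbb{E}(Y_{t} \mid \bm{X}_{0}, Z_{+\kappa}) - \mathbb{E}(Y_{t} \mid \bm{X}_{0}, Z_{-\kappa}) \right)$.

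The denominator is the analogous but simpler computation: $\mathbb{P}(A^{Z+\kappa} > A^{Z-\kappa}) = \mathbb{E}\left[ \mathbb{E}(\mathbb{I}(A^{Z_{+\kappa}} > A^{Z_{-\kappa}}) \mid \bm{X}_{0}, Z) \right]$, and since $A$ is binary and defiers are excluded, $\mathbb{I}(A^{z_{+\kappa}} > A^{z_{-\kappa}}) = A^{z_{+\kappa}} - A^{z_{-\kappa}}$, so by (A1)--(A2) this equals $\mathbb{E}\left( \mathbb{E}(A \mid \bm{X}_{0}, Z_{+\kappa}) - \mathbb{E}(A \mid \bm{X}_{0}, Z_{-\kappa}) \right)$, which is strictly positive by ($\mbox{A6}^{\text{c}}$), so the ratio in~\eqref{eq:psi_c_lemma} is well defined. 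Finally, to obtain the stated expansion of $\mathbb{E}(Y_{t} \mid \bm{X}_{0}, Z_{\pm\kappa})$ I would apply the tower property to split over $A$ and then use (A4) in the form $\mathbb{E}(Y_{t} \mid \bm{X}_{0}, Z, A) = \mathbb{E}(Y_{t} \mid \bm{X}_{0}, R = 1, Z, A)$, exactly as in Lemma~1; the positivity clause ($\mbox{A5}^{\text{c}}$), in particular $\mathbb{P}(R = 1 \mid \bm{X}_{0}, Z, A = a) > 0$ and the boundedness of the propensity ratios, is what licenses this conditioning and keeps the resulting expressions finite.

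The step I expect to require the most care is the treatment of the truncation: for subjects with $Z + \kappa \ge z_{\max}$ one has $Z_{+\kappa} = Z$ (symmetrically at the lower boundary), so the perturbation collapses and such subjects contribute a zero complier indicator; I need the monotonicity and exclusion-restriction arguments above to remain valid in this degenerate case, and I need the positivity ratio in ($\mbox{A5}^{\text{c}}$) to be read only over $Z \pm \kappa \in [z_{\min}, z_{\max}]$. A secondary subtlety is making the substitution $\mathbb{E}(Y^{z_{\pm\kappa}}_{t} \mid \bm{X}_{0}, Z = z) = \mathbb{E}(Y_{t} \mid \bm{X}_{0}, Z = z_{\pm\kappa})$ rigorous, since both the conditioning event and the potential-outcome superscript move with $z$; this is handled cleanly by first removing the $Z$ conditioning via ignorability (A2) and only then re-introducing it at the level $z_{\pm\kappa}$.
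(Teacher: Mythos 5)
Your proposal is correct and follows essentially the same route as the paper: the paper does not write out a separate proof of this lemma, and the intended argument is exactly the Lemma~1 template you describe (iterate expectations over $\bm{X}_{0}$ and $Z$, use monotonicity plus exclusion to turn the complier-weighted contrast into $Y^{Z_{+\kappa}}_{t} - Y^{Z_{-\kappa}}_{t}$, invoke (A2) and (A1) to identify the potential-outcome regressions, and apply (A4) to restrict to $R=1$), with $Z_{\pm\kappa}$ carried along as deterministic functions of $Z$. One peripheral remark in your write-up is slightly off — when only one of the two perturbations is truncated at the boundary the complier indicator need not be zero — but this does not affect your argument, since the pointwise identity you establish holds for whatever values $z_{+\kappa} \geq z_{-\kappa}$ take.
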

Based on Lemma~\ref{lemma:continuous_psi}, we derive an efficient influence function for $\psi^{\text{c}, \kappa}(t)$. Different from the binary instrument case, here we should pay more attention on the boundary of continuous $Z$, $z_{\text{min}}$ and $z_{\text{max}}$. Also, we generalize the previous notation of a $\delta$-function for a continuous $Z$: $\delta_{z} =  \delta(Z ; \bm{X}_{0}) := d\mathbb{P}(Z | \bm{X}_{0})$ while other nuisance functions $\mu_{t,z,a}, \pi_{z}$ and $\omega_{z,a}$ are defined in the same way. 
\begin{theorem}[Influence function of $\psi^{\text{c}, \kappa}(t)$]
\label{thm:psi_c}
An efficient influence function of $\psi^{\text{c}, \kappa}(t)$ is given by:
\begin{eqnarray}
\label{eq:psi_c_if}
\IF\left( \psi^{\text{c}, \kappa}(t) \right) = \left\{ M^{\text{c}}(Y_{t}; Z_{+\kappa} ) - M^{\text{c}}(Y_{t}; Z_{-\kappa} ) - \psi^{\text{c}, \kappa}(t) \left(  \Pi^{\text{c}}(A; Z_{+\kappa} ) - \Pi^{\text{c}}(A; Z_{-\kappa}) \right) \right\} / \psi^{\text{c}}_{2}.
\end{eqnarray}	
When $Z \pm \kappa \in [z_{\min}, z_{\max}]$, 
\begin{eqnarray*}
	M^{c}(Y_{t}; Z_{\pm \kappa}, \Theta) &=& \left( \mu_{t, Z \pm \kappa,1} \pi_{Z \pm \kappa} + \mu_{t, Z \pm \kappa, 0} (1-\pi_{Z \pm \kappa}  \right) \\ &  +&  \frac{\delta_{Z \mp \kappa} }{\delta_{Z} } \left[   \frac{R A}{\omega_{Z, 1} } (Y_{t} - \mu_{t, Z, 1}) + \mu_{t,Z, 1}(A - \pi_{Z}) \right]  \\ 
	& + & \frac{\delta_{Z \mp \kappa} }{\delta_{Z} }  \left[   \frac{R (1-A)}{\omega_{Z, 0} } (Y_{t} - \mu_{t, Z, 0}) + \mu_{t,Z, 0}((1-A) - (1-\pi_{Z}) \right] \\
	\Pi^{c}(A; Z_{\pm \kappa}, \Theta) &=&  (A - \pi_{Z} ) \delta_{Z \mp \kappa} / \delta_{Z} + \pi_{Z \pm \kappa}. 	
\end{eqnarray*}	
Based on the influence function as a function of estimable nuisance functions~\eqref{eq:psi_c_if}, we propose an influence function-based estimator for $\psi^{\text{c}, \kappa}(t)$:
\begin{eqnarray}
\label{eq:psi_c_hat}
\hat{\psi}^{\text{c}, \kappa}(t;\hat{\Theta})  = \frac{ \mathbb{P}_{n}\left( M^{\text{c}}(Y_{t}; Z_{+\kappa}, \hat{\Theta}) - M^{\text{c}}(Y_{t}; Z_{-\kappa},\hat{\Theta}) \right) }{ \mathbb{P}_{n} \left( \Pi^{\text{c}}(A; Z_{+ \kappa},\hat{\Theta} ) - \Pi^{\text{c}}(A; Z_{- \kappa},\hat{\Theta} )   \right) }.
\end{eqnarray}	
\end{theorem}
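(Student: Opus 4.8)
The plan is to follow the template of the proof of Theorem~\ref{thm:phib_influence}, starting from the conditional-expectation representation in Lemma~\ref{lemma:continuous_psi}, with one extra ingredient to accommodate the continuous instrument: viewing each building block as the mean of an outcome under a boundary-truncated \emph{shift intervention} on $Z$, whose efficient influence function is obtained by the familiar change-of-measure (reweighting) argument. The numerator/denominator bookkeeping and the final quotient-rule assembly are then identical to the binary case, so the only genuinely new work is Step~2 below.

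\textbf{Step 1 (reduce to building blocks).} Write $\psi^{\text{c},\kappa}(t) = \psi^{\text{c}}_1(t)/\psi^{\text{c}}_2$ with $\psi^{\text{c}}_1(t) = \mathbb{E}\{\mathbb{E}(Y_t\mid\bm{X}_0, Z_{+\kappa}) - \mathbb{E}(Y_t\mid\bm{X}_0, Z_{-\kappa})\}$ and $\psi^{\text{c}}_2 = \mathbb{E}\{\mathbb{E}(A\mid\bm{X}_0, Z_{+\kappa}) - \mathbb{E}(A\mid\bm{X}_0, Z_{-\kappa})\}$, as given by Lemma~\ref{lemma:continuous_psi}. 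By the quotient rule for influence functions, $\IF(\psi^{\text{c},\kappa}(t)) = \{\IF(\psi^{\text{c}}_1(t)) - \psi^{\text{c},\kappa}(t)\,\IF(\psi^{\text{c}}_2)\}/\psi^{\text{c}}_2$, so it suffices to compute $\IF$ of each of the four terms $\mathbb{E}\{\mathbb{E}(Y_t\mid\bm{X}_0, Z_{\pm\kappa})\}$ and $\mathbb{E}\{\mathbb{E}(A\mid\bm{X}_0, Z_{\pm\kappa})\}$. As in Lemma~\ref{lemma:continuous_psi}, (A4) lets us expand $\mathbb{E}(Y_t\mid\bm{X}_0, Z_{\pm\kappa}) = \sum_{a}\mu_{t, Z_{\pm\kappa}, a}\,\pi_{Z_{\pm\kappa}}(a)$ with the $R=1$ conditioning, so each $Y_t$-block is a sum over $a\in\{0,1\}$ of shift-intervention means of the inner functional $\mathbb{E}(Y_t\mid\bm{X}_0, Z, A=a)$.

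\textbf{Step 2 (EIF of a shift-intervention mean).} The key device is the change-of-measure identity: for a fixed $m(\bm{x},z)$,
\[
\mathbb{E}\big[m(\bm{X}_0, Z+\kappa)\,\mathbb{I}(Z+\kappa < z_{\max})\big] = \mathbb{E}\Big[\tfrac{\delta_{Z-\kappa}(\bm{X}_0)}{\delta_Z(\bm{X}_0)}\, m(\bm{X}_0, Z)\,\mathbb{I}(Z-\kappa > z_{\min})\Big],
\]
together with its mirror for $-\kappa$; this holds under $(\text{A5}^{\text{c}})$ because the weight $\delta_{Z\mp\kappa}/\delta_Z$ is bounded. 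Applying it, each building block becomes an ordinary functional in which the nuisances are evaluated at the observed $Z$, carrying the extra factor $\delta_{Z\mp\kappa}/\delta_Z$. A point-mass-contamination argument, differentiating factor by factor exactly as in parts (A)--(C) of the proof of Theorem~\ref{thm:phib_influence}, then shows that the perturbation of $\mu_{t,Z,1}$ contributes $\tfrac{\delta_{Z\mp\kappa}}{\delta_Z}\tfrac{RA}{\omega_{Z,1}}(Y_t-\mu_{t,Z,1})$; the perturbation of $\pi_Z$ contributes $\tfrac{\delta_{Z\mp\kappa}}{\delta_Z}\mu_{t,Z,1}(A-\pi_Z)$ (and analogously for $a=0$); the perturbation of the marginal law of $(\bm{X}_0, Z)$, after undoing the shift identity, returns the plug-in term $\mu_{t,Z_{\pm\kappa},1}\pi_{Z_{\pm\kappa}}$ centered at its mean; and for the $A$-block the same steps drop the $R/\omega$ factor. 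Collecting the $a\in\{0,1\}$ pieces reproduces $M^{\text{c}}(Y_t; Z_{\pm\kappa},\Theta)$ and the $A$-block reproduces $\Pi^{\text{c}}(A; Z_{\pm\kappa},\Theta)$, so that $\IF(\psi^{\text{c}}_1(t)) = (M^{\text{c}}(Y_t;Z_{+\kappa}) - M^{\text{c}}(Y_t;Z_{-\kappa})) - \psi^{\text{c}}_1(t)$ and $\IF(\psi^{\text{c}}_2) = (\Pi^{\text{c}}(A;Z_{+\kappa}) - \Pi^{\text{c}}(A;Z_{-\kappa})) - \psi^{\text{c}}_2$. Substituting into the quotient-rule expression and using $\psi^{\text{c}}_1(t) = \psi^{\text{c},\kappa}(t)\,\psi^{\text{c}}_2$ yields \eqref{eq:psi_c_if}; replacing $\mathbb{P}$ by $\mathbb{P}_n$ and solving the estimating equation gives \eqref{eq:psi_c_hat}, the proposed estimator of $\psi^{\text{c},\kappa}(t)$ in \eqref{eq:psi_c}.

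\textbf{Main obstacle.} The delicate part is the boundary truncation encoded in $Z_{\pm\kappa} = Z\pm\kappa\,\mathbb{I}(\cdot)$: on the event that the shifted value would leave the support one has $Z_{\pm\kappa} = Z$ and $\delta_{Z\mp\kappa}/\delta_Z = 1$, so I need to check that the change-of-measure identity and the contamination computation remain valid with these indicator restrictions and that the indicator events---being fixed and $\bm{X}_0$-measurable---contribute nothing to the EIF. Relatedly, because $\delta$ is now a density rather than a probability, pathwise differentiability of the functional with a square-integrable gradient requires the two-sided density-ratio bound in $(\text{A5}^{\text{c}})$ rather than the one-sided positivity bound used in (A5); this is precisely where $(\text{A5}^{\text{c}})$ enters. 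Efficiency (that the derived function is the canonical gradient) then follows as in the binary case, since each summand is a conditional-mean-zero projection onto the corresponding variational submodel.
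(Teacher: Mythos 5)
Your proposal is correct and follows essentially the same route as the paper: the paper likewise reduces to the quotient rule over the four building blocks, adapts the point-mass-contamination computation of Theorem~\ref{thm:phib_influence} with the continuous density $\delta(Z;\bm{X}_0)$, and invokes the change-of-variables/reweighting identity from \cite{mauro2018instrumental} that produces the $\delta_{Z\mp\kappa}/\delta_Z$ factors, with the boundary handled by the case-split (truncated) influence functions exactly as you flag. The only difference is expository: the paper states the resulting case-split gradients directly and cites Mauro et al., whereas you make the change-of-measure step and the boundary-indicator bookkeeping explicit.
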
		

An efficient influence function when $Z \pm \kappa \notin [z_{\text{min}}, z_{\text{max}}]$ is presented in Section~\ref{ssec:proof}. Similar to the binary case, the estimator~\eqref{eq:psi_c_hat} requires nuisance functions estimates $\hat{\Theta} = \{ \hat{\mu}_{t, z, a}, \hat{\omega}_{z, a}, \hat{\pi}_{z}, \hat{\delta}_{z} \}$ for continuous $z \in \mathbb{R}$
and then evaluates the causal effect through the sample average $\mathbb{P}_{n}$. In the next section, we will elaborate the estimation procedures and the large-sample properties of the proposed estimator~\eqref{eq:psi_c_hat}.

\subsection{Asymptotic properties of an estimator}

An influence function-based estimator~\eqref{eq:psi_c_hat} for a continuous IV also has a desirable properties as we had in the binary case under certain assumptions.
Let us assume the following conditions hold in addition to the identification assumptions (A1)--(A4) and ($\mbox{A5}^{\text{c}}$)--($\mbox{A6}^{\text{c}}$):
\begin{itemize}
	\item[($\mbox{C1}^{\text{c}}$)]  The nuisance functions are in the Donsker class. 
	\item[($\mbox{C2}^{\text{c}}$)] For some constant $\epsilon > 0$, $\mathbb{P}(\epsilon < \delta_{z \pm \kappa}/\delta_{z} < \infty) = 1$ and $\mathbb{P}( \epsilon < \hat{\omega}_{z,a}  < \infty) = 1$ for any $a \in \{0,1\}$ and $z \pm \kappa \in [z_{\min}, z_{\max}]$.
\end{itemize}	

Then we have the following doubly robust properties for $\hat{\psi}^{\text{c}, \kappa}(t)$.
\begin{theorem}[Asymptotic distribution of $\hat{\psi}^{\text{c}, \kappa}(t)$]
	\label{thm:psi_c_asy}	
Under (A1)--(A4), ($\text{A5}^{\text{c}}$)--($\text{A6}^{\text{c}}$), and ($\text{C1}^{\text{c}}$)--($\text{C2}^{\text{c}}$), 
	\begin{eqnarray}
	\label{eq:psi_c_asy}
	&&	\hat{\psi}^{\text{c}, \kappa}(t) - \psi^{\text{c}, \kappa}(t)  \nonumber
	\\ &= &  O_{\mathbb{P}} \left( \sum\limits_{a \in \{0,1\}} \left\lVert    \frac{ \hat{\delta}_{Z - \kappa}  }{\hat{\delta}_{Z} } - \frac{\delta_{Z - \kappa}}{\delta_{Z}} \right\rVert
	\parallel   \omega_{Z, a} - \hat{\omega}_{Z, a}  \parallel \cdot \parallel  \mu_{t, Z, a} - \hat{\mu}_{t, Z, a} \parallel + \left\lVert    \frac{ \hat{\delta}_{Z - \kappa}  }{\hat{\delta}_{Z} } - \frac{\delta_{Z - \kappa} }{\delta_{Z}} \right\rVert \parallel \pi_{Z} - \hat{\pi}_{Z} \parallel \right. \nonumber \\
	&+&  \left. \sum\limits_{a \in \{0,1\}} \left\lVert  \frac{ \hat{\delta}_{Z + \kappa} }{\hat{\delta}_{Z}} - \frac{\delta_{Z + \kappa}}{\delta_{Z}} \right\rVert 
	\parallel   \omega_{Z, a} - \hat{\omega}_{Z, a}  \parallel \cdot \parallel  \mu_{t, Z, a} - \hat{\mu}_{t, Z, a} \parallel   + \left\lVert    \frac{ \hat{\delta}_{Z + \kappa}}{\hat{\delta}_{Z}} - \frac{\delta_{Z + \kappa}}{\delta_{Z}} \right\rVert \parallel \pi_{Z} - \hat{\pi}_{Z} \parallel \right)   \nonumber \\ 
	& + & \xi^{-1}_{n}(\mathbb{P}_{n} - \mathbb{P}) \left\{ \phi^{c, \kappa}_{1}(t; \Theta) - \psi^{c, \kappa}_{1}(t; \Theta) \phi^{c, \kappa}_{2}(\Theta)\right\} + o_{\mathbb{P}}(n^{-1/2}),  
	\end{eqnarray}	
where $\psi^{c, \kappa}_{1}(t; \Theta) = \mathbb{E}\left( \mathbb{E}(Y_{t} | \bm{X}_{0}, Z_{+\kappa}) - \mathbb{E}(Y_{t} | \bm{X}_{0}, Z_{-\kappa})\right)$ and $\psi^{c, \kappa}_{2}(\Theta)  = \mathbb{E}\left( \mathbb{E}(A | \bm{X}_{0}, Z_{+\kappa}) - \mathbb{E}(A | \bm{X}_{0}, Z_{-\kappa})\right)$; and $\phi^{c, \kappa}_{1}(t; \Theta)$ and $\phi^{c, \kappa}_{2}(\Theta)$ are the influence function of $\psi^{c, \kappa}_{1}(t; \Theta)$ and $\psi^{c, \kappa}_{2}(\Theta)$, respectively.
\end{theorem}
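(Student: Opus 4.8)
The plan is to follow the template of the proof of Theorem~\ref{thm:psi_b_asy}, adapting it to the continuous instrument, where the indicator weight $\mathbb{I}(Z=j)/\delta_{j}$ is replaced by the density ratio $\delta_{Z\mp\kappa}/\delta_{Z}$ and where the truncations hidden in $Z_{\pm\kappa}=Z\pm\kappa\,\mathbb{I}(Z\pm\kappa\text{ in support})$ must be tracked. First I would write, via Theorem~\ref{thm:psi_c} and the identifications $M^{\text{c}}(Y_{t};Z_{+\kappa})-M^{\text{c}}(Y_{t};Z_{-\kappa})=\hat{\phi}^{c,\kappa}_{1}(t)+\hat{\psi}^{c,\kappa}_{1}(t)$ and $\Pi^{\text{c}}(A;Z_{+\kappa})-\Pi^{\text{c}}(A;Z_{-\kappa})=\hat{\phi}^{c,\kappa}_{2}+\hat{\psi}^{c,\kappa}_{2}$, the estimator as the ratio $\mathbb{P}_{n}(\hat{\phi}^{c,\kappa}_{1}(t)+\hat{\psi}^{c,\kappa}_{1}(t))/\mathbb{P}_{n}(\hat{\phi}^{c,\kappa}_{2}+\hat{\psi}^{c,\kappa}_{2})$, and apply the same algebraic linearization as in \eqref{eq:substract}: with $\xi_{n}=\mathbb{P}_{n}(\hat{\phi}^{c,\kappa}_{2}+\hat{\psi}^{c,\kappa}_{2})$,
\[
\hat{\psi}^{c,\kappa}(t)-\psi^{c,\kappa}(t)=\xi_{n}^{-1}\big[(\mathbb{P}_{n}-\mathbb{P})\{g(\hat\Theta)-g(\Theta)\}+(\mathbb{P}_{n}-\mathbb{P})\{g(\Theta)\}+\mathcal{R}_{n}\big],
\]
where $g(\Theta)=\phi^{c,\kappa}_{1}(t;\Theta)-\psi^{c,\kappa}(t)\phi^{c,\kappa}_{2}(\Theta)$ and $\mathcal{R}_{n}=\mathbb{P}\{\hat{\phi}^{c,\kappa}_{1}(t;\hat\Theta)+\hat{\psi}^{c,\kappa}_{1}(t;\hat\Theta)-\psi^{c,\kappa}_{1}(t;\Theta)\}-\psi^{c,\kappa}(t)\,\mathbb{P}\{\hat{\phi}^{c,\kappa}_{2}(\hat\Theta)+\hat{\psi}^{c,\kappa}_{2}(\hat\Theta)-\psi^{c,\kappa}_{2}(\Theta)\}$. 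By $(\text{C1}^{\text{c}})$ (or by cross-fitting) the first empirical-process term is $o_{\mathbb{P}}(n^{-1/2})$, the second is the stated asymptotically normal term, and by consistency $\xi_{n}^{-1}=(\psi^{c,\kappa}_{2}(\Theta))^{-1}+o_{\mathbb{P}}(1)$; so everything reduces to bounding the conditional-bias remainder $\mathcal{R}_{n}$.

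Second, I would evaluate $\mathcal{R}_{n}$ termwise. The crucial new ingredient is the change-of-variables identity: for integrable $r(\cdot,\bm{X}_{0})$,
\[
\mathbb{E}\Big[\tfrac{\delta_{Z-\kappa}(\bm{X}_{0})}{\delta_{Z}(\bm{X}_{0})}\,r(Z,\bm{X}_{0})\,\Big|\,\bm{X}_{0}\Big]=\int \delta_{z-\kappa}(\bm{X}_{0})\,r(z,\bm{X}_{0})\,dz=\mathbb{E}\big[r(Z+\kappa,\bm{X}_{0})\,\big|\,\bm{X}_{0}\big],
\]
and symmetrically $\delta_{Z+\kappa}/\delta_{Z}$ shifts evaluation from $Z$ to $Z-\kappa$ on the event where the shift stays in the support. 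On $\{Z\pm\kappa\in[z_{\min},z_{\max}]\}$ I would then imitate the binary computation on $\mathbb{P}(M^{\text{c}}(Y_{t};Z_{+\kappa},\hat\Theta))-\mathbb{E}(\mathbb{E}(Y_{t}\mid\bm{X}_{0},Z_{+\kappa}))$: iterate expectations, first conditioning on $(\bm{X}_{0},Z,A)$ and invoking (A4) to replace $\mathbb{E}[RA(Y_{t}-\hat\mu_{t,Z,1})\mid\cdot]$ by $\omega_{Z,1}\mathbb{I}(A=1)(\mu_{t,Z,1}-\hat\mu_{t,Z,1})$, then conditioning on $(\bm{X}_{0},Z)$ to replace $\mathbb{I}(A=1)$ by $\pi_{Z}$; next split $\hat\delta_{Z-\kappa}/\hat\delta_{Z}=\delta_{Z-\kappa}/\delta_{Z}+(\hat\delta_{Z-\kappa}/\hat\delta_{Z}-\delta_{Z-\kappa}/\delta_{Z})$, apply the identity above to the leading piece so it cancels against $\mathbb{E}[\hat\mu_{t,Z+\kappa,1}\hat\pi_{Z+\kappa}]$ coming from the plug-in part of $M^{\text{c}}$, and collect the surviving cross terms via Cauchy--Schwarz into $\|\hat\delta_{Z-\kappa}/\hat\delta_{Z}-\delta_{Z-\kappa}/\delta_{Z}\|\cdot(\|\omega_{Z,1}-\hat\omega_{Z,1}\|\,\|\mu_{t,Z,1}-\hat\mu_{t,Z,1}\|+\|\pi_{Z}-\hat\pi_{Z}\|)$, using $(\text{C2}^{\text{c}})$ to keep $\hat\omega_{Z,a}$ and the density ratio bounded away from $0$. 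The $A=0$ block and the $\Pi^{\text{c}}$ (denominator) block go through identically, the latter contributing only the $\|\hat\delta/\hat\delta-\delta/\delta\|\,\|\pi-\hat\pi\|$ pieces exactly as in the denominator analysis of the proof of Theorem~\ref{thm:psi_b_asy}. Summing over $a\in\{0,1\}$ and over the $+\kappa$ and $-\kappa$ contributions reproduces the displayed rate.

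Third, on the complementary boundary event (say $\{Z+\kappa\ge z_{\max}\}$, where $Z_{+\kappa}=Z$) the efficient influence function takes the alternative form recorded in Section~S3, in which the density-ratio weight degenerates to $\delta_{Z}/\delta_{Z}=1$; the same iterated-expectation argument then yields only ordinary doubly robust products, which are special cases (with $\delta_{Z+\kappa}\equiv\delta_{Z}$) of the terms already listed in the theorem, so no new contribution appears. Reassembling the numerator and denominator remainders through $\xi_{n}^{-1}$ and using the consistency of $\xi_{n}$ then gives the stated expansion. The step I expect to be the main obstacle is making the change-of-variables identity rigorous in the presence of the truncations: one must check that the region indicators $\mathbb{I}(Z\pm\kappa\in[z_{\min},z_{\max}])$ transform correctly under the substitution $u=z+\kappa$ (tracking which portion of the support maps where), and confirm that the cross terms generated near the boundary are genuinely of the claimed product form $\|\hat\delta_{Z\pm\kappa}/\hat\delta_{Z}-\delta_{Z\pm\kappa}/\delta_{Z}\|\cdot(\cdots)$ rather than something weaker such as $\|\hat\delta_{Z}-\delta_{Z}\|$ in isolation. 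Gluing the interior and boundary pieces into a single $O_{\mathbb{P}}$ bound is essentially the only genuinely new work relative to the binary proof.
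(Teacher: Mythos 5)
Your proposal follows essentially the same route as the paper's proof: the same ratio linearization through $\xi_{n}$ into an empirical-process term, a CLT term, and a conditional-bias remainder; iterated expectations under (A4) to convert $RA(Y_{t}-\hat{\mu})$ and $(A-\hat{\pi}_{Z})$ into products of nuisance errors; the change-of-variables identity (as in the cited Mauro et al.\ argument) to shift the $\delta_{Z\mp\kappa}/\delta_{Z}$-weighted expectations; and the splitting of the estimated density ratio into the true ratio plus its error before applying Cauchy--Schwarz. The only substantive difference is that you also sketch the boundary case $Z\pm\kappa\notin[z_{\min},z_{\max}]$, whereas the paper's proof simply assumes $Z\pm\kappa\in[z_{\min},z_{\max}]$ ``for simplicity'' and never returns to it, so your treatment is, if anything, slightly more complete.
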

In addition to the double robustness properties, Theorem~\ref{thm:psi_c_asy} suggests that as long as $\hat{\delta}_{z}$ is correctly specified, $\hat{\psi}^{\text{c}, \kappa}(t)$ is consistent. We will examine this point through simulation study.

\subsection{Simulation}

In this simulation, we generate the full data of $\{ (\bm{X}_{i,0}, Z_{i}, A_{i}, R_{i}, T_{i}): i=1,2,\ldots, n\}$ similarly to the binary instrument case, but now we have a continuous $Z_{i} | \bm{\tilde{X}}_{i,0} \overset{ind}{\sim} N( \bm{\tilde{X}}^{\prime}_{i,0} \tilde{\boldsymbol{\kappa}})$ instead of binary $Z_{i}$'s here. For computational simplicity, we consider two-dimensional covariates $\bm{\tilde{X}}_{0} = (X_{0,1}, X_{0,2})$ instead of the five-dimensional $\bm{X}_{i,0}$ only when generating the instrumental variable. Set $\boldsymbol{\tilde{\kappa}} = (-1.0, -1.0)^{\prime}$ and $\beta_{u} = -0.02$ in the additive hazards model.
See Section~\ref{sec:appendix_sim} for the details.
We compare three different estimators: (a) a simple inverse-weighted estimator (IPW estimator); (b) a regression-based plug-in estimator; and (c) the proposed estimator~\eqref{eq:psi_c_hat}. Same as the binary case, we will illustrate the performance of these three estimators under four misspecification scenarios but replace (iv) by the case that misspecifies $\hat{\mu}_{t,z,a}$ in addition to $\hat{\pi}_{z}$ and $\hat{\omega}_{z,a}$: in summary, we simulate (i) correctly specified nuisance functions; (ii) incorrectly specified $\hat{\omega}_{z,a}, \hat{\delta}_{z}$; (iii) incorrectly specified $\hat{\pi}_{z}, \hat{\mu}_{t,z,a}$ functions; and (iv) incorrectly specified $\hat{\pi}_{z}, \hat{\omega}_{z,a}, \hat{\mu}_{t,z,a}$.
\begin{figure}[!ht]
	\centering
	\begin{subfigure}[b]{0.4\textwidth}
		\includegraphics[width=\textwidth]{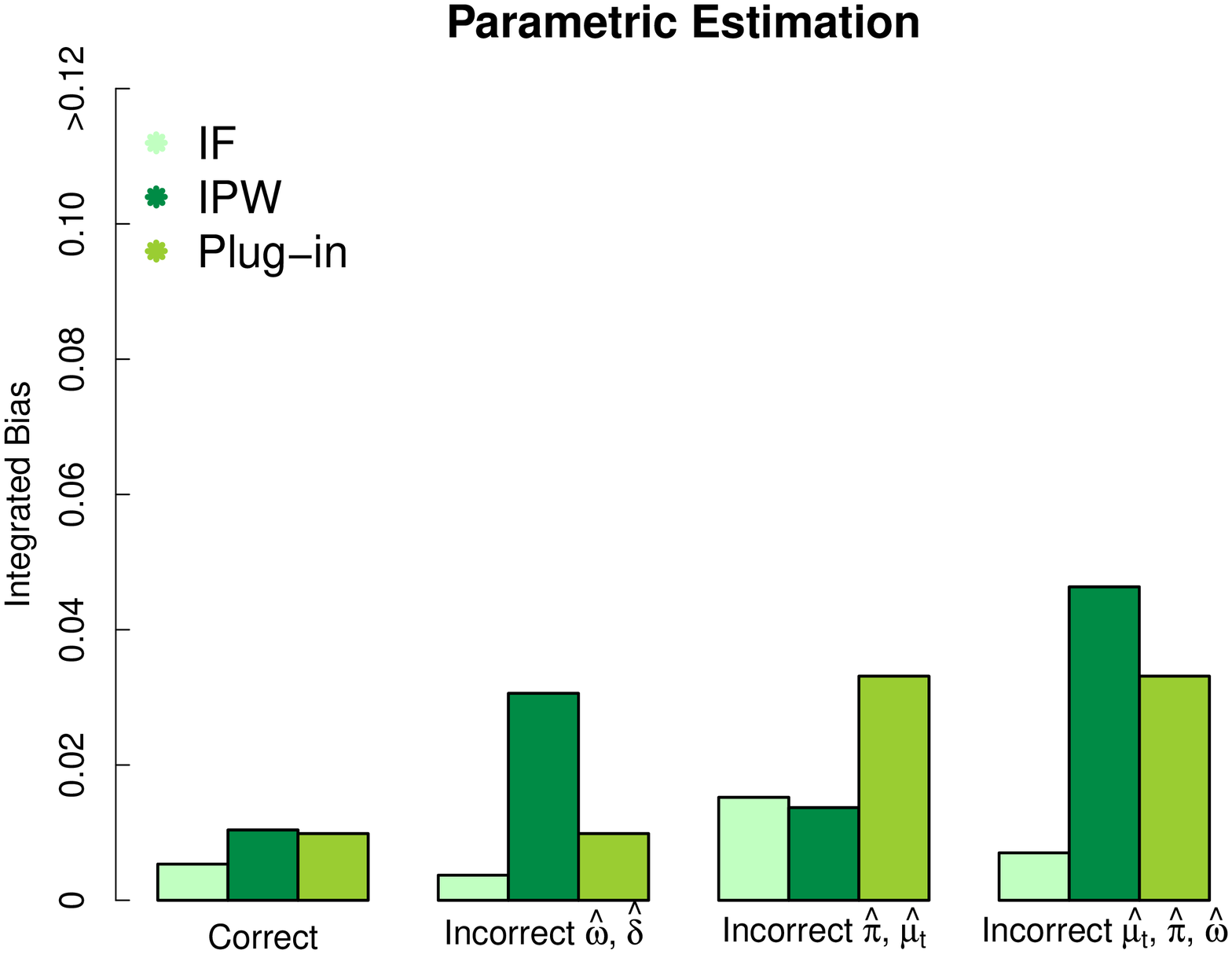}
		\caption{ }
	\end{subfigure}
	\begin{subfigure}[b]{0.4\textwidth}
		\includegraphics[width=\textwidth]{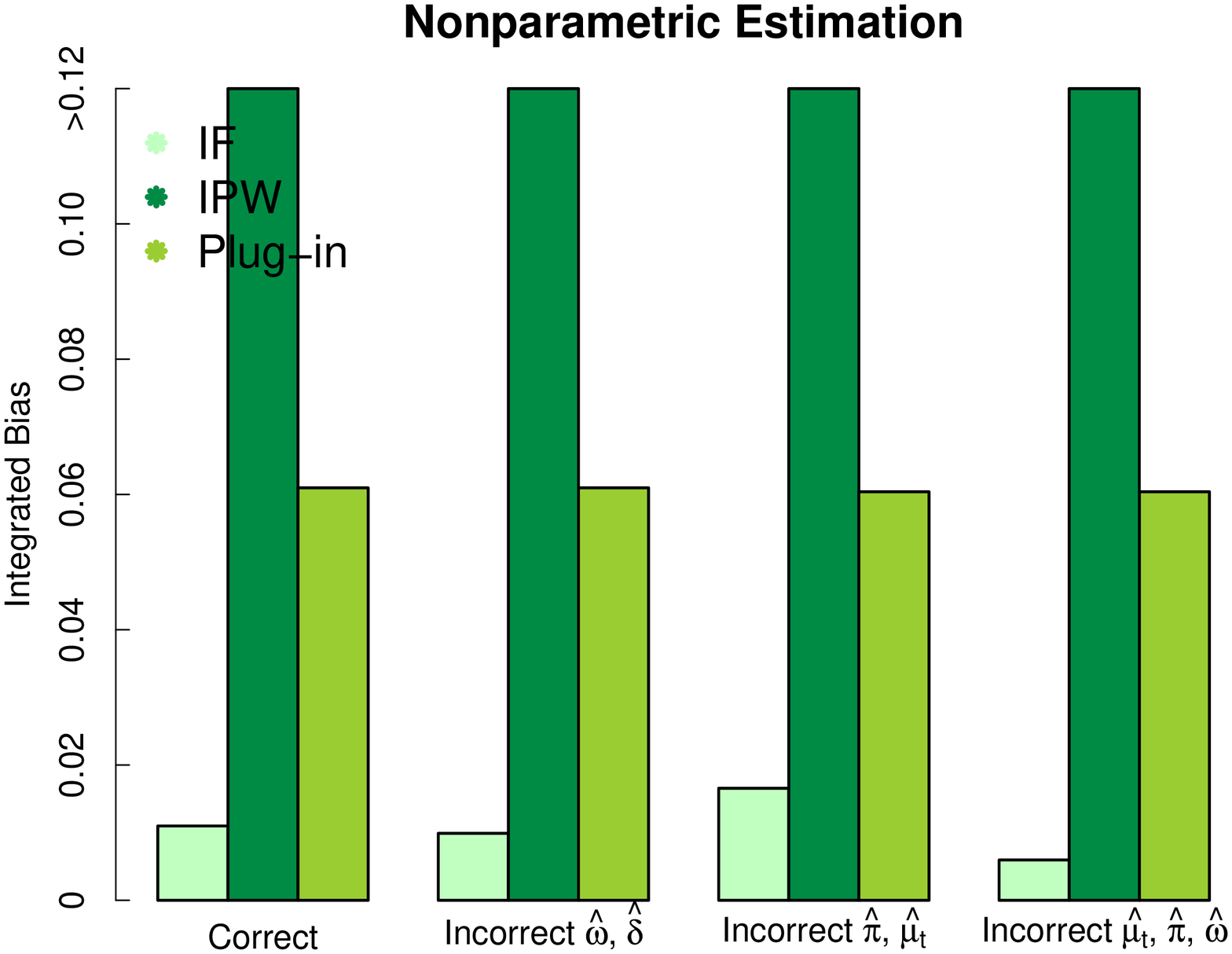}
		\caption{}
	\end{subfigure}
	\begin{subfigure}[b]{0.4\textwidth}
		\includegraphics[width=\textwidth]{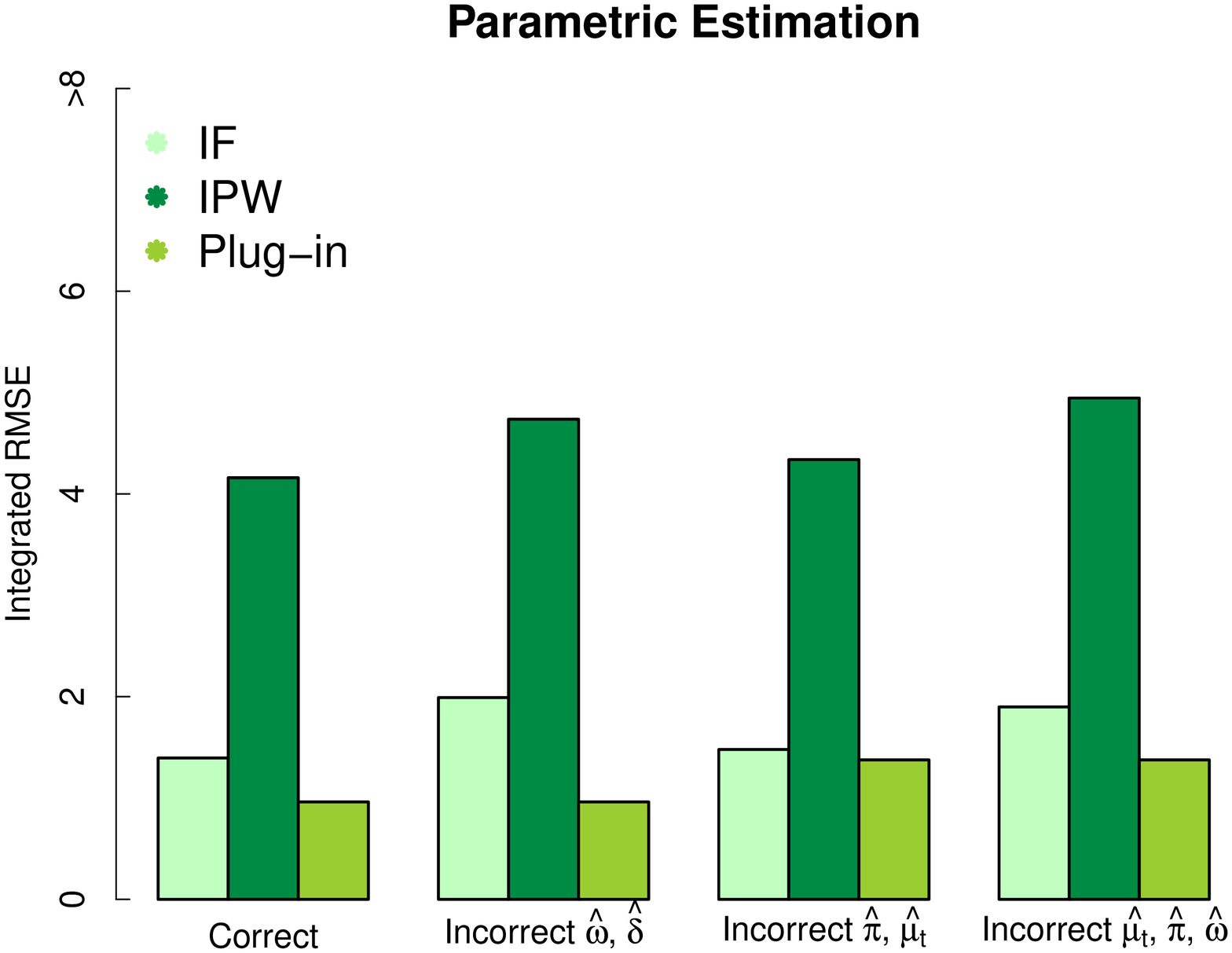}
		\caption{ }
	\end{subfigure}
	\begin{subfigure}[b]{0.4\textwidth}
		\includegraphics[width=\textwidth]{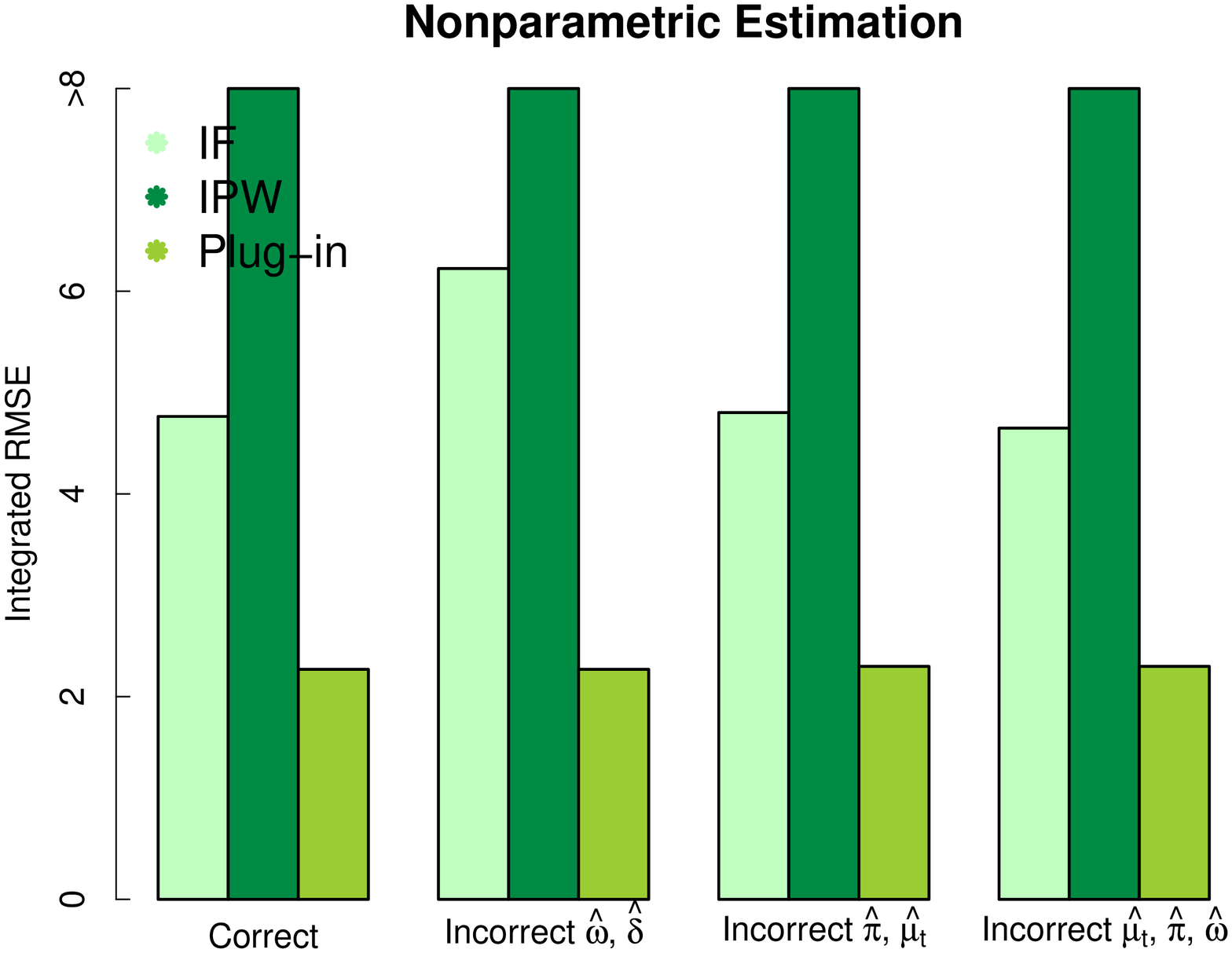}
		\caption{}
	\end{subfigure}
	\caption{\label{fig:additive_continuous} Integrated bias (upper panel) and RMSE (lower panel) of $\hat{\psi}^{\text{c}, \kappa}(t), \hat{\psi}^{\text{c}, \kappa}_{\text{ipw}}(t)$, and $\hat{\psi}^{\text{c}, \kappa}_{\text{plugin}}(t)$ with parametric nuisance functions estimation (left panel) and nonparametric estimation (right panel) when survival outcomes are generated from an additive hazards model and an instrument is continuous. Sample size is $n=1000$ and each scenario was replicated $I=1000$ times.}
\end{figure}

Figure~\ref{fig:additive_continuous} presents the results when the instrument is continuous and survival outcomes are generated from an additive hazards model. The results are generally similar to the binary case except for a few following points: under parametric estimation the proposed estimator (IF) shows relatively inconsistent performance compared to the binary case;  this might be due to the increased variability in estimating a continuous density of $\delta$ function instead of a binary density. When (iii) $\hat{\mu}_{t,z,a}$ and $\hat{\pi}_{a}$ are both incorrectly specified, the performance of the estimator is dependent heavily on the bias in $\hat{\delta}_{z}$, so the estimator shows higher bias than in the correctly specified case. However, the influence function-base estimator is still least sensitive to the model misspecification. 
With nonparametric estimation, the proposed estimator maintains the smallest bias over all four scenarios. The plug-in estimator maintains the smallest RMSE as it does not involve $\delta$-function estimation, which often induces substantial variability.

\subsection{Proof}
\label{ssec:proof}

\begin{proof}[Proof of Theorem~\ref{thm:psi_c}]
	We followed the proof of Theorem 1 of \cite{mauro2018instrumental} and adapted the proof of Theorem~1 with a continuous density for $Z$, $\delta(Z; \bm{X}_{0}) := d\mathbb{P}(Z | \bm{X}_{0})$.
	
	Define a continuous version of $M_{j}$ ($j=0,1$) from Theorem~1 and denote it as $M^{c}$, which is defined across $Z \in [z_{\min}, z_{\max}]$.	
	\begin{eqnarray*}	
		M^{c}(Y_{t}; Z \pm \kappa) &=& \left( \mu_{t, Z \pm \kappa, 1} \pi_{Z \pm \kappa } + \mu_{t, Z \pm \kappa, 1} (1-\pi_{Z \pm \kappa}  \right) \\ &  +&  \frac{\delta(Z \mp \kappa ; \bm{X}_{0})}{\delta(Z; \bm{X}_{0}) } \left[   \frac{R A}{\omega_{Z, 1} } (Y_{t} - \mu_{t, Z, 1}) + \mu_{t,Z, 1}(A - \pi_{Z}) \right]  \\ 
		& + & \frac{\delta(Z \mp \kappa ; \bm{X}_{0})}{\delta(Z; \bm{X}_{0})}  \left[   \frac{R (1-A)}{\omega_{Z, 0} } (Y_{t} - \mu_{t, Z, 0}) + \mu_{t,Z, 0}((1-A) - (1-\pi_{Z}) \right] \\
		\Pi^{c}(A; Z \pm \kappa) &=&  (A - \pi_{Z} ) \delta(Z \mp \kappa; \bm{X}_{0}) / \delta(Z; \bm{X}_{0}) + \pi_{Z \pm \kappa}.
	\end{eqnarray*}
	
Then when $0 < 2 \kappa  < z_{\text{max}} - z_{\text{min}}$,
	
	\begin{eqnarray*}
		&& \IF\left(\psi^{\text{c}}_{1} (t; Z_{+\kappa}) \right) \\ & = &  \left\{  \begin{array}{cc}  M^{\text{c}}(Y_{t}; Z+\kappa) -   \psi^{\text{c}}_{1} (t; Z_{+\kappa}) & Z \in [z_{\text{min}}, z_{\text{max}} - \kappa) \\
			Y_{t} - \psi^{c}_{1}(t; Z_{+\kappa})	& Z \in [z_{\text{max}} - \kappa , z_{\text{max}} ]   \end{array} \right. 
	\end{eqnarray*}

	\begin{eqnarray*}
		&& \IF\left(\psi^{\text{c}}_{1} (t; Z_{-\kappa}) \right) \\ & = &  \left\{  \begin{array}{cc} Y_{t} - \psi^{c}_{1}(t; Z_{-\kappa}) & Z \in [z_{\min} , z_{\text{min}} + \kappa) \\  
			M^{\text{c}}(Y_{t}; Z-\kappa ) -  \psi^{\text{c}}_{1} (t; Z_{-\kappa})  & Z \in [z_{\text{min}} + \kappa, z_{\max}) \end{array} \right. 
	\end{eqnarray*}			
	
	Similarly, 
	\begin{eqnarray*}
		&& \IF\left(\psi^{\text{c}}_{2} (t; Z_{+\kappa}) \right) \\ & = &  \left\{  \begin{array}{cc}  \Pi^{\text{c}}(A; Z+\kappa ) -   \psi^{\text{c}}_{2} (t; Z_{+\kappa}) & Z \in [z_{\text{min}}, z_{\text{max}} - \kappa) \\
			A - \psi^{c}_{2}(t; Z_{+\kappa})	& Z \in [z_{\text{max}} - \kappa , z_{\text{max}} ]   \end{array} \right. 
	\end{eqnarray*}			
	
	\begin{eqnarray*}
		&& \IF\left(\psi^{\text{c}}_{2} (t; Z_{-\kappa}) \right) \\ & = &  \left\{  \begin{array}{cc} A - \psi^{c}_{2}(t; Z_{-\kappa}) & Z \in [z_{\min} , z_{\text{min}} + \kappa) \\  
			\Pi^{\text{c}}(Y_{t}; Z-\kappa ) -  \psi^{\text{c}}_{2} (t; Z_{-\kappa})  & Z \in [z_{\text{min}} + \kappa, z_{\max}) \end{array} \right. 
	\end{eqnarray*}			
	
	Combining the influence functions above together, we have an efficient influence function of $\psi^{\text{c}, \kappa}(t)$ as in Equation~\ref{eq:psi_c_if} 	when $Z \pm \kappa \in [z_{\min}, z_{\max}]$.
	
\end{proof}

\begin{proof}[Proof of Theorem~\ref{thm:psi_c_asy}]
	Let us assume $Z \pm \kappa \in [z_{\min}, z_{\max}]$ for simplicity. 
	For a continuous instrument, we will use Equation~\ref{eq:substract} by replacing $\psi^{\text{b}}_{1}(t)$ and $\psi^{\text{b}}_{2}$ for $\psi^{\text{c}, \kappa}_{1}(t)$ and $\psi^{\text{c}, \kappa}_{2}$; similarly replace $\phi^{\text{b}}_{1}(t)$ and $\phi^{\text{b}}_{2}$ by $\phi^{\text{c}, \kappa}_{1}(t)$ and $\phi^{\text{c}}_{2}$, respectively. 
	\begin{eqnarray*}
		&& \mathbb{P}(\psi^{c, Z_{+\kappa}, A = 1}_{1} (t; \hat{\Theta}) + \phi^{c, Z_{+\kappa}, A = 1}_{1} (t; \hat{\Theta}) ) - \mathbb{P}(\psi^{c, Z_{+\kappa}, A = 1}_{1} (t; \Theta) ) 
		\\ &=& \mathbb{E} \left[  \hat{\mu}_{t}(\bm{X}_{0}, 1, Z + \kappa, A = 1) \hat{\pi}_{Z + \kappa }  
		- \mu_{t}(\bm{X}_{0}, 1, Z + \kappa, A = 1) \pi_{Z + \kappa }  \right]  \\ & + &	  \mathbb{E} \left[    \frac{\hat{\delta}(Z - \kappa ; \bm{X}_{0})}{\hat{\delta}(Z; \bm{X}_{0}) }  \left\{ \frac{R A}{\hat{\omega}_{Z, 1} } (Y_{t} - \hat{\mu}_{t, Z, 1})  + \hat{\mu}_{t,Z, 1}(A - \hat{\pi}_{Z})   \right\}  \right] \\
		&=&  \mathbb{E} \left[  \hat{\mu}_{t, Z + \kappa, 1} \hat{\pi}_{Z + \kappa }  
		- \mu_{t, Z + \kappa, 1} \pi_{Z + \kappa }  \right]  \\ & + &	  \mathbb{E} \left[    \frac{\hat{\delta}(Z - \kappa ; \bm{X}_{0})}{\hat{\delta}(Z; \bm{X}_{0}) }   \left\{  \frac{\omega_{Z, 1} \pi_{Z}}{\hat{\omega}_{Z, 1} } (\mu_{t, Z, 1} - \hat{\mu}_{t, Z, 1})  + \hat{\mu}_{t,Z, 1}(\pi_{Z} - \hat{\pi}_{Z})  \right\} \right] \\
		&=& \mathbb{E}\left[  \frac{\delta(Z; \bm{X}_{0})}{ \delta(Z - \kappa; \bm{X}_{0})} \left\{ \frac{ \hat{\delta}(Z - \kappa; \bm{X}_{0})}{\hat{\delta}(Z ; \bm{X}_{0})} - \frac{\delta(Z - \kappa; \bm{X}_{0})}{\delta(Z ;\bm{X}_{0}) }  \right\}  \frac{ \omega_{Z, 1}  -  \hat{\omega}_{Z,1}   }{  \hat{\omega}_{Z,1}  }  \pi_{Z}  \left\{  \mu_{t, Z, 1}  - \hat{\mu}_{t,Z, 1}     \right\} \right] \\ 
		&+& \mathbb{E}\left[ \frac{\delta(Z; \bm{X}_{0})}{ \delta(Z - \kappa; \bm{X}_{0})} \left\{ \frac{ \hat{\delta}(Z - \kappa; \bm{X}_{0})}{\hat{\delta}(Z ; \bm{X}_{0})} - \frac{\delta(Z - \kappa; \bm{X}_{0})}{\delta(Z ;\bm{X}_{0}) }  \right\} \hat{\mu}_{t,Z,1} \left\{   \pi_{Z} - \hat{\pi}_{Z}  \right\}    \right].
	\end{eqnarray*}		
	The last equation is from the change of variables that was used in \cite{mauro2018instrumental}. 
	Under the assumption that $\mathbb{P}(\epsilon < \delta(Z - \kappa; \bm{X}_{0}) / \delta(Z; \bm{X}_{0})  < \infty  ) = 1$ and $\mathbb{P}( \epsilon <  \hat{\omega}(\bm{X}_{0}, Z, 1)< \infty  ) = 1 $ for some $\epsilon > 0$, we have the following inequality, which holds for any $\{ (Z, A) : Z \in [z_{\min}, z_{\max}],~A \in \{ 0, 1\}   \}$.
	
	\begin{eqnarray*} 
		&&  \mathbb{P}(\psi^{c, Z_{+\kappa}, A = 1}_{1} (t; \hat{\Theta}) + \phi^{c, Z_{+\kappa}, A = 1}_{1} (t; \hat{\Theta})) - \mathbb{P}(\psi^{c, Z_{+\kappa}, A = 1}_{1} (t; \Theta) ) 	\\
		& \lesssim & \left\lVert    \frac{ \hat{\delta}(Z - \kappa; \bm{X}_{0})}{\hat{\delta}(Z;\bm{X}_{0})} - \frac{\delta(Z - \kappa; \bm{X}_{0})}{\delta(Z;\bm{X}_{0})} \right\rVert
		\parallel   \omega_{Z, 1} - \hat{\omega}_{Z, 1} \parallel \cdot \parallel  \mu_{t, Z, 1} - \hat{\mu}_{t, Z, 1} \parallel \\ && \quad + \left\lVert   \frac{ \hat{\delta}(Z - \kappa; \bm{X}_{0})}{\hat{\delta}(Z;\bm{X}_{0})} - \frac{\delta(Z - \kappa; \bm{X}_{0})}{\delta(Z;\bm{X}_{0})} \right\rVert \parallel \pi_{Z} - \hat{\pi}_{Z} \parallel,
	\end{eqnarray*}

	For the denominator of $\psi^{\text{c}, \kappa}(t)$, we can similarly derive:
	\begin{eqnarray*}
		&& \mathbb{P}( \psi^{\text{b}, Z_{+\kappa}}_{2}(\hat{\Theta}) + \phi^{\text{b}, Z_{+\kappa}}_{2}(\hat{\Theta})) - \mathbb{P} ( \psi^{\text{b}, Z_{+\kappa}}_{2}(\Theta) ) \\ 
		&& =  \mathbb{E} \left[  \frac{ \hat{\delta}(Z-\kappa; \bm{X}_{0}) }{ \hat{\delta}(Z;\bm{X}_{0})} ( \pi_{Z} - \hat{\pi}_{Z} ) + \hat{\pi}_{Z+\kappa} - \pi_{Z + \kappa}
		\right] \\ 
		&=& \mathbb{E} \left[   \frac{ \hat{\delta}(Z - \kappa; \bm{X}_{0})/\hat{\delta}(Z ; \bm{X}_{0}) - \delta(Z - \kappa; \bm{X}_{0}) / \delta(Z ;\bm{X}_{0}) }{ \delta(Z - \kappa; \bm{X}_{0}) / \delta(Z; \bm{X}_{0})    } \left\{  \pi_{Z} - \pi_{Z}  \right\} \right] \\
		&& \lesssim  \left\lVert \frac{\hat{\delta}(Z - \kappa; \bm{X}_{0})}{\hat{\delta}(Z; \bm{X}_{0})} - \frac{\delta(Z - \kappa; \bm{X}_{0})}{\delta(Z ;\bm{X}_{0})} \right\rVert  \cdot  \parallel \pi_{Z}  -  \hat{\pi}_{Z}  \parallel.
	\end{eqnarray*}
	
	Hence, combined together we have~\eqref{eq:psi_c_asy}:
	\begin{eqnarray*}
		&& \hat{\psi}^{c, \kappa}(t) -\psi^{c, \kappa}(t) \\ &=& O_{\mathbb{P}} \left(   \sum\limits_{a \in \{0,1\}} \left\lVert   \frac{ \hat{\delta}_{Z - \kappa}  }{\hat{\delta}_{Z} } - \frac{\delta_{Z - \kappa}}{\delta_{Z}} 
		\right\rVert \parallel  \omega_{Z, a} - \hat{\omega}_{Z, a}  \parallel \cdot \parallel  \mu_{t, Z, a} - \hat{\mu}_{t, Z, a} \parallel + \left\lVert   \frac{ \hat{\delta}_{Z - \kappa}  }{\hat{\delta}_{Z} } - \frac{\delta_{Z - \kappa} }{\delta_{Z}} \right\rVert \parallel \pi_{Z} - \hat{\pi}_{Z} \parallel \right. \\
		&+&  \left. \sum\limits_{a \in \{0,1\}} \left\lVert    \frac{ \hat{\delta}_{Z + \kappa} }{\hat{\delta}_{Z}} - \frac{\delta_{Z + \kappa}}{\delta_{Z}} \right\rVert
		\parallel   \omega_{Z, a} - \hat{\omega}_{Z, a}  \parallel \cdot \parallel  \mu_{t, Z, a} - \hat{\mu}_{t, Z, a} \parallel   + \left\lVert    \frac{ \hat{\delta}_{Z + \kappa}}{\hat{\delta}_{Z}} - \frac{\delta_{Z + \kappa}}{\delta_{Z}} \right\rVert \parallel \pi_{Z} - \hat{\pi}_{Z} \parallel \right) \nonumber
		\\&+&(\mathbb{P}_{n} - \mathbb{P}) \left\{ \phi^{c, \kappa}_{1}(t; \Theta) - \psi^{c, \kappa}(t; \Theta)\phi^{c, \kappa}_{2}(\Theta)\right\} + o_{\mathbb{P}}(n^{-1/2})
	\end{eqnarray*}		
\end{proof}

\section{Sample splitting method for nuisance parameter estimation}

\begin{algorithm}[H]
	\caption{Estimating the proposed estimator~(3) in the main text}\label{algo:psib}
	\begin{algorithmic}[1]
		\State Define the number of sample splits $K$. Defaults to $K = 10$. Then for $k \in \{1,2,\ldots, K \}$, 
		let $\mathcal{D} = \{ \mathcal{O}_{i} : S_{i} \neq k  \}$ be the training data and $\mathcal{B} = \{ \mathcal{O}_{i} : S_{i} = k   \}$ be the test data set. 
		\State \textbf{1. Outcome regression}
		\State Implement machine learning methods, e.g., random forest or classification, to learn $\mathbb{E}(Y_{t} | \bm{X}_{0}, Z, A, R = 1)$ only using $\mathcal{D}$ for $t=1,2,\ldots, \tau$.
		\State Obtain predicted $\{ \hat{\mu}_{t, z, a}: z, a \in \{ 0, 1 \}, t= 1,2,\ldots, \tau \}$ for subjects in $\mathcal{B}$.
		\bigskip
		\State \textbf{2. Censoring indicator}
		\State Use machine learning techniques to learn conditional distribution of $\mathbb{E}(R | \bm{X}_{0}, Z, A)$ only using observations in $\mathcal{D}$.
		\State Obtain predicted $\{ \hat{\omega}(\bm{X}_{0}, z, a) : z,a \in \{ 0, 1\} \}$ for subjects in $\mathcal{B}$.
		\bigskip
		\State \textbf{3. Treatment propensity score}
		\State Use machine learning techniques to learn conditional distribution of $\mathbb{E}(A | \bm{X}_{0}, Z)$ using $\mathcal{D}$.
		\State Obtain predicted $\{  \hat{\pi}_{z}  : z \in \{0, 1\}  \}$ for subjects in $\mathcal{B}$.
		\bigskip
		\State \textbf{4. Instrument prevalence}
		\State Use machine learning techniques to learn conditional distribution of $\mathbb{E}(Z | \bm{X}_{0})$ using $\mathcal{D}$.
		\State Obtain predicted $\{  \hat{\delta}_{z}  : z \in \{0, 1\}  \}$ for subjects in $\mathcal{B}$.
		\bigskip
		\State Construct an influence function-based estimator of $\hat{\psi}^{\text{b}(k)}(t)$ using the predicted values \textbf{1}-\textbf{4} of $\mathcal{B}$.
		\State Repeat the procedures \textbf{1}-\textbf{4} for each $k=1,2,\ldots,K$.
		\State \textbf{Output}: $\hat{\psi}^{\text{b, split}}(t) = \frac{1}{K} \sum\limits_{k=1}^{K} \hat{\psi}^{\text{b}(k)}(t)$.
	\end{algorithmic}
\end{algorithm}

\begin{algorithm}[H]
	\caption{Estimating the proposed estimator~\eqref{eq:psi_c_hat}}\label{algo:psic}
	\begin{algorithmic}[1]
	\State Define the number of sample splits $K$. Defaults to $K = 10$. Then for $k \in \{1,2,\ldots, K \}$, 
	let $\mathcal{D} = \{ \mathcal{O}_{i} : S_{i} \neq k  \}$ be the training data and $\mathcal{B} = \{ \mathcal{O}_{i} : S_{i} = k   \}$ be the test data set. 
	\State \textbf{1. Outcome regression}
	\State  Implement machine learning methods, e.g., random forest or classification, to learn $\mathbb{E}(Y_{t} | \bm{X}_{0}, Z, A, R = 1)$ only using $\mathcal{D}$ for $t=1,2,\ldots, \tau$.
	\State Obtain predicted $\{ \hat{\mu}_{t, z, a}: z, a \in \{ 0, 1 \}, t= 1,2,\ldots, \tau \}$ for subjects in $\mathcal{B}$.
	\bigskip
	\State \textbf{2. Censoring indicator}
	\State Use machine learning techniques to learn conditional distribution of $\mathbb{E}(R | \bm{X}_{0}, Z, A)$ only using observations in $\mathcal{D}$.
	\State Obtain predicted $\{ \hat{\omega}(\bm{X}_{0}, z, a) : z,a \in \{ 0, 1\} \}$ for subjects in $\mathcal{B}$.
	\bigskip
	\State \textbf{3. Treatment propensity score}
	\State Use machine learning techniques to learn conditional distribution of $\mathbb{E}(A | \bm{X}_{0}, Z)$ using $\mathcal{D}$.
	\State Obtain predicted $\{  \hat{\pi}_{z}  : z \in \{0, 1\}  \}$ for subjects in $\mathcal{B}$.
	\bigskip
	\State \textbf{4. Instrument prevalence}
	\State Use machine learning techniques to learn the conditional density of $\delta(Z ; \bm{X}_{0})$ using $\mathcal{D}$, and then estimates a kernel density over $z \in [z_{\min}, z_{\max}]$.
	\State Obtain predicted values $\{  \hat{\delta}_{z}   \}$ for subjects in $\mathcal{B}$.
	\bigskip
	\State Construct an influence function-based estimator of $\hat{\psi}^{\text{c}(k)}(t)$ using the predicted values \textbf{1}-\textbf{4} of $\mathcal{B}$.
	\State Repeat the procedure \textbf{1}-\textbf{4} for each $k=1,2,\ldots,K$.
	\State \textbf{Output}: $\hat{\psi}^{\text{c, split}, \kappa}(t) = \frac{1}{K} \sum\limits_{k=1}^{K} \hat{\psi}^{\text{c}(k), \kappa}(t)$.
\end{algorithmic}
\end{algorithm}

When an instrument is continuous, the only difference from the above Algorithm~\ref{algo:psic} is in predicting the conditional density of a continuous instrument. We use the \texttt{R} package of \texttt{ks} to estimate and evaluate the probability density of continuous variables. 
One of the other alternatives to estimate $\hat{\delta}_{z+\kappa}/\hat{\delta}_{z}$ or $\hat{\delta}_{z-\kappa}/\hat{\delta}_{z}$ is to use the strategy proposed by~\cite{diaz2020non}, which, based on the Bayes rule, estimates the density ratio through the conditional probability estimation of a binary indicator.

\section{Additional simulation results}
\label{sec:additional_1}

Figure~\ref{fig:additive_binary} presents the simulation results with three estimators $\hat{\psi}^{\text{b}}(t), \hat{\psi}^{\text{b}}_{\text{ipw}}(t)$, and $\hat{\psi}^{\text{b}}_{\text{plugin}}(t)$ under the censoring assumption (A4) when survival outcomes are generated from an additive hazards model.
\begin{figure}[H]
	\centering
	\begin{subfigure}[b]{0.4\textwidth}
		\includegraphics[width=\textwidth]{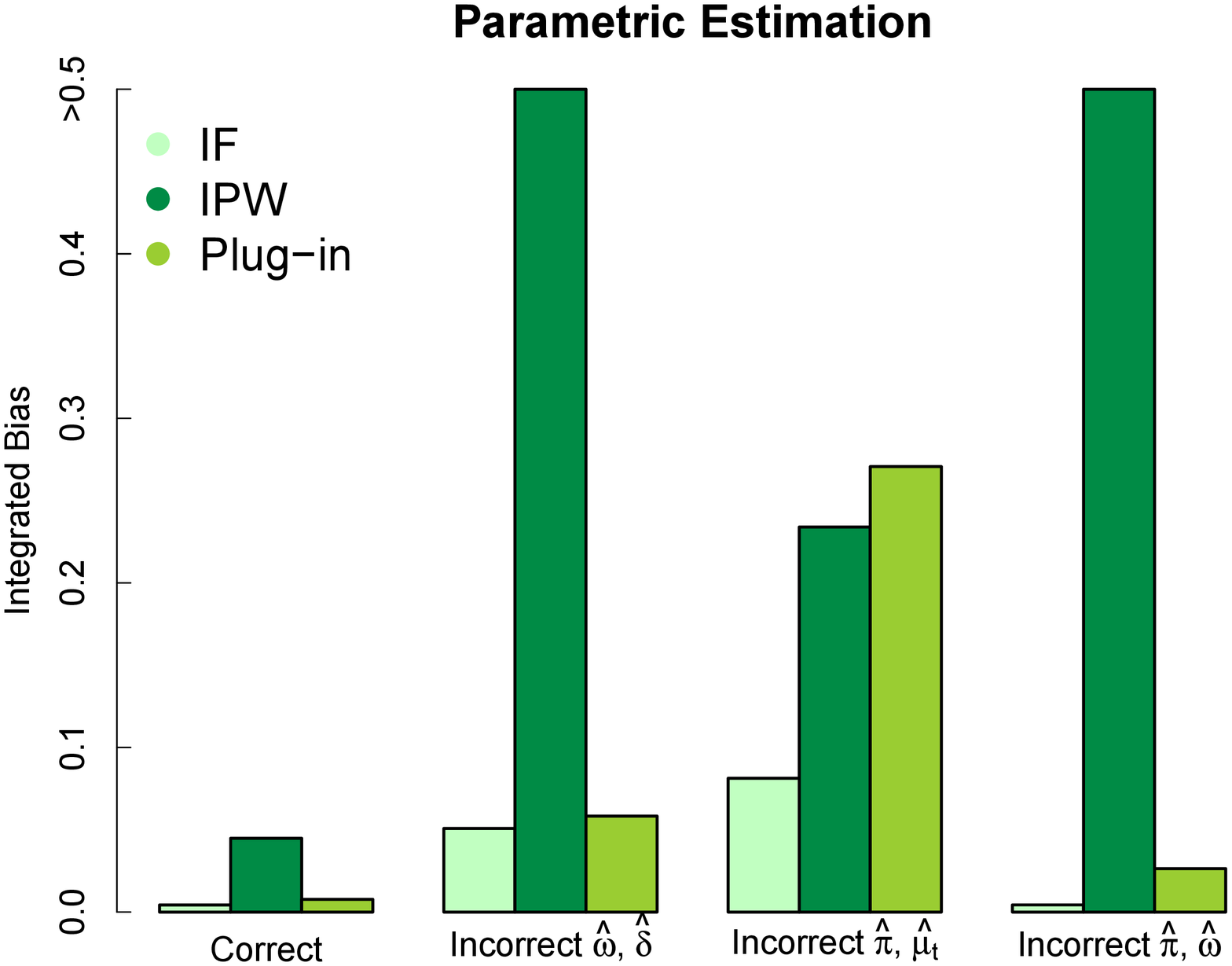}
		\caption{ }
	\end{subfigure}
	\begin{subfigure}[b]{0.4\textwidth}
		\includegraphics[width=\textwidth]{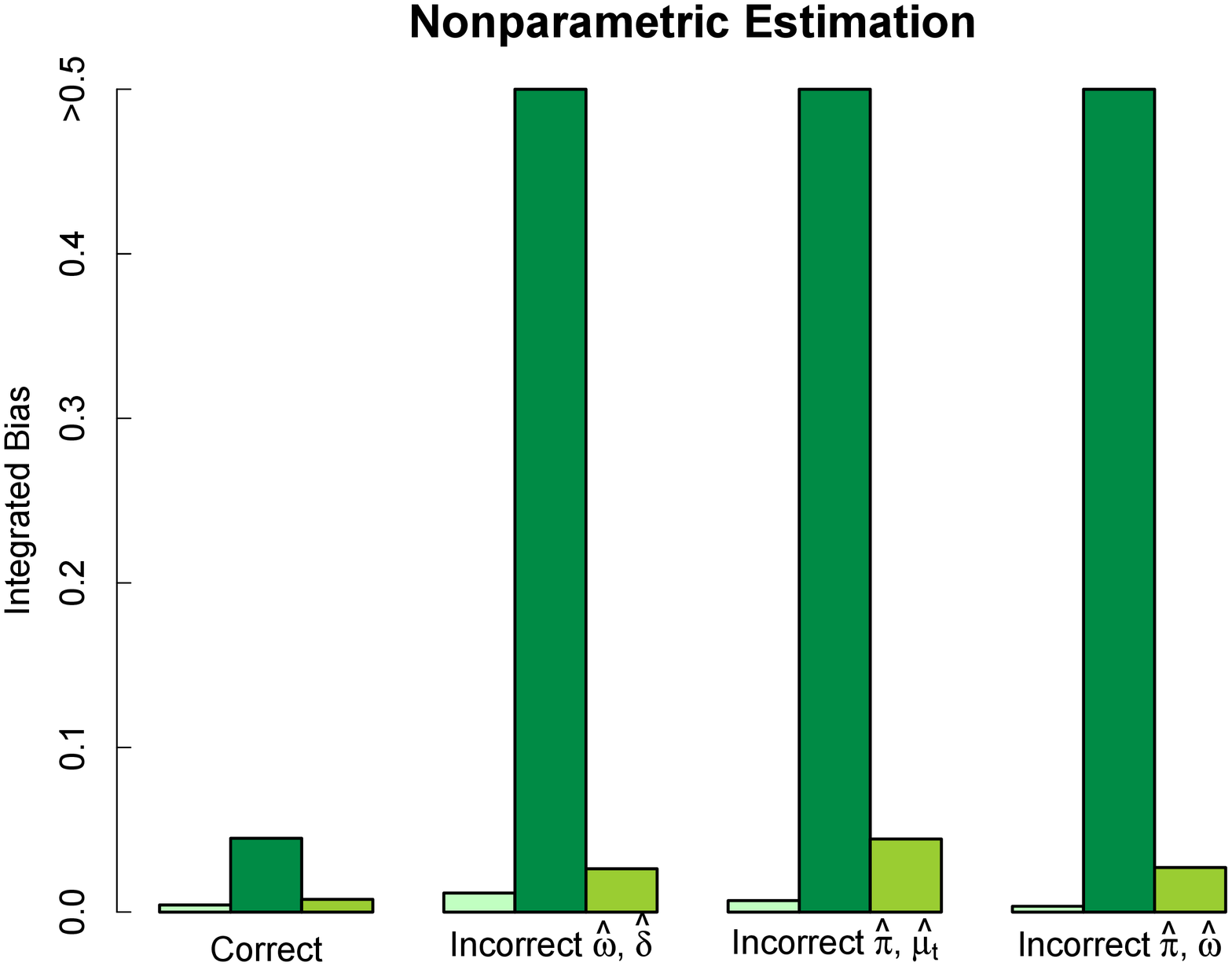}
		\caption{}
	\end{subfigure}
	\begin{subfigure}[b]{0.4\textwidth}
		\includegraphics[width=\textwidth]{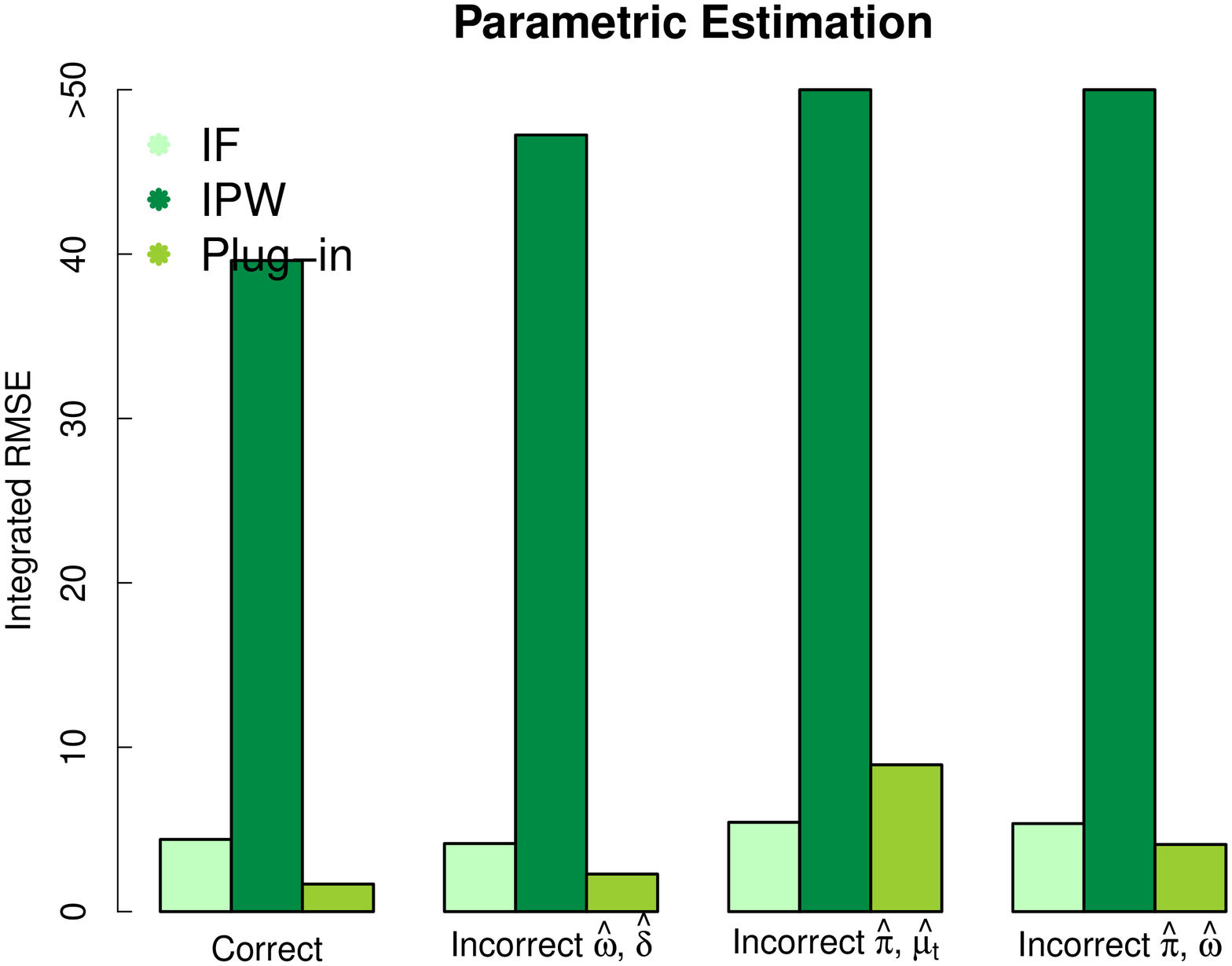}
		\caption{ }
	\end{subfigure}
	\begin{subfigure}[b]{0.4\textwidth}
		\includegraphics[width=\textwidth]{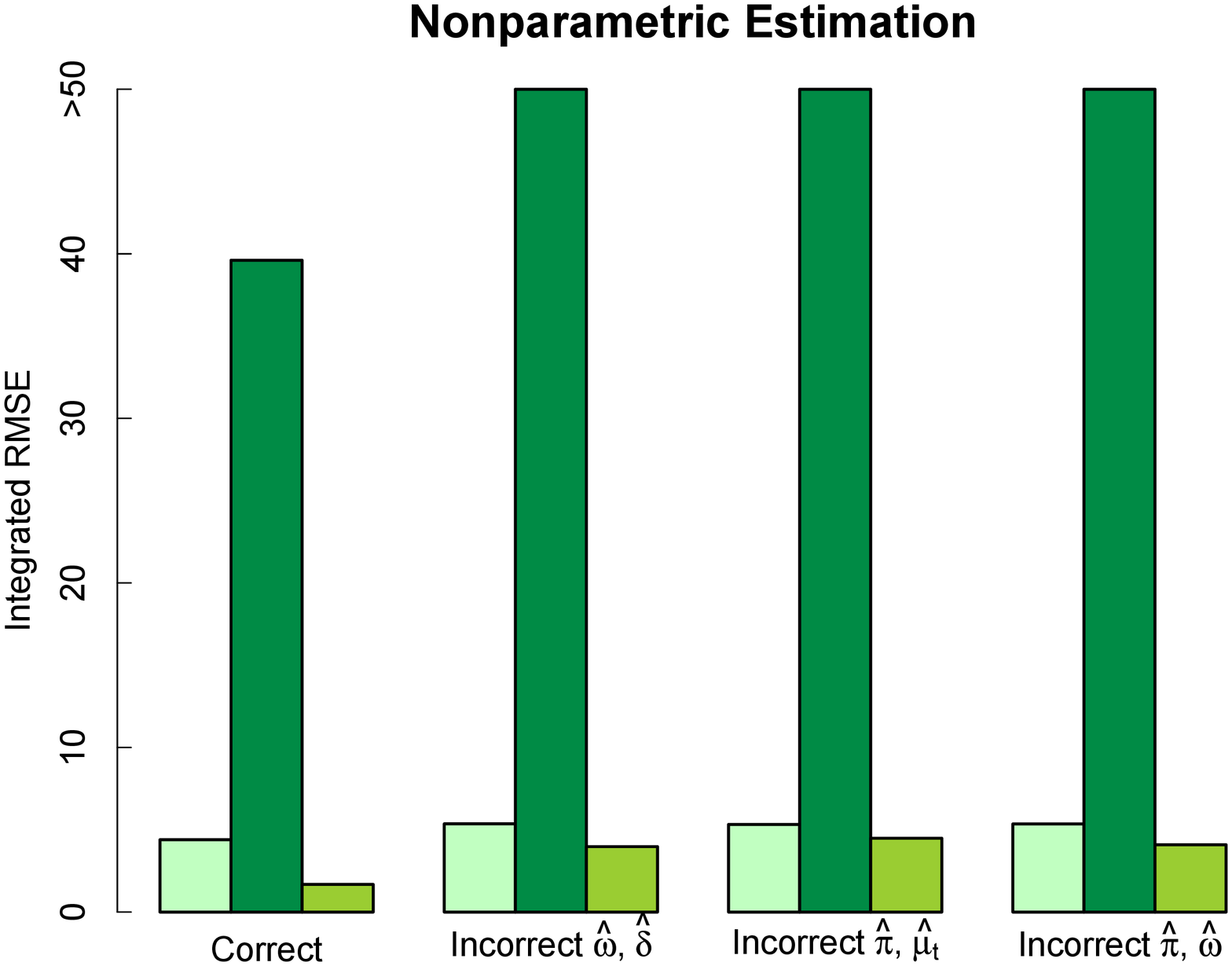}
		\caption{}
	\end{subfigure}
	\caption{\label{fig:additive_binary}Bias (upper panel) and RMSE (lower panel) of $\hat{\psi}^{\text{b}}(t), \hat{\psi}^{\text{b}}_{\text{ipw}}(t)$, and $\hat{\psi}^{\text{b}}_{\text{plugin}}(t)$ with parametric nuisance functions estimation (left panel) and nonparametric estimation (right panel) when survival outcomes are generated from an additive hazards model and an instrument is binary. Sample size is $n=1000$ and each scenario was replicated $I=1000$ times.}
\end{figure}
Figure~\ref{fig:additive_binary} shows that, similar to Figure~1 in the main text, the influence function-based estimator (IF) results in the smallest bias across three model misspecifications; while the plug-in estimator has the smallest RMSE in overall both under parametric and nonparametric estimation. Under nonparametric estimation, both the influence function-based and plug-in estimators are robust against three model specifications while the former estimator still beats the latter slightly.  
The bias of the IPW estimator is over five and its RMSE is exceeding 6000 under nonparametric estimation. 

Table~\ref{tab:rmse} presents the RMSE that is corresponding to the bias results shown in Table~1 from the main text. As expected, the smallest RMSE is observed among the expected winners (colored cells) for each scenario under parametric estimation. However, under nonparametric estimation where the variabilities largely come from nuisance function estimations, the naive estimator $\hat{\psi}^{\text{b}}_{\text{Naive-hazard}}(t)$, which requires the smaller number of nuisance functions than the other two, shows the smallest RMSE even though it has the largest bias under (i) and (ii) (see Table~1 in the main text).
\begin{table}[H]
	\centering
	\resizebox{0.9\textwidth}{!}{\begin{tabular}{rrrr|rrr}
			\hline
			& \multicolumn{3}{c|}{\textbf{Parametric estimation}} &  \multicolumn{3}{c}{\textbf{Nonparametric estimation}}  \\ 
			& $\hat{\psi}^{\text{b}}_{\text{Naive-hazard}}(t)$ & $\hat{\psi}^{\text{b}}_{\text{IF-hazard}}(t)$ & $\hat{\psi}^{\text{b}}(t)$ & $\hat{\psi}^{\text{b}}_{\text{Naive-hazard}}(t)$ & $\hat{\psi}^{\text{b}}_{\text{IF-hazard}}(t)$ & $\hat{\psi}^{\text{b}}(t)$ \\ 
			\hline
			\multicolumn{5}{l}{Scenario (i)} \\
			\hline
			\textbf{Correct} & 
		5.71 & \cellcolor{GreenYellow}5.80 & 7.89 & 3.98 & 6.67 & 8.37 \\ 
			\textbf{Incorrect $\hat{\omega}, \hat{\delta}$} &  
			5.71 &  \cellcolor{GreenYellow}4.70 & 6.76 & 3.84 & 6.68 & 8.55 \\ 
			\textbf{Incorrect $\hat{\pi}, \hat{\mu}_{t}$} & 
			6.25 & \cellcolor{GreenYellow}5.82 & 7.75 & 3.90 & 6.88 & 8.52 \\ 
			\textbf{Incorrect $\hat{\pi}, \hat{\omega}$} & 
			5.61 & \cellcolor{GreenYellow}5.63 & 7.78 & 3.85 & 6.71 & 8.64 \\ 
			\hline
			\multicolumn{5}{l}{Scenario (ii)} \\
			\hline
			\textbf{Correct} &$>10^{10}$ & $>10^{10}$ & \cellcolor{GreenYellow}5.04 & 1.16 & 5.57 & 5.00 \\ 
			\textbf{Incorrect $\hat{\omega}, \hat{\delta}$}  & $>10^{10}$ & $>10^{10}$& \cellcolor{GreenYellow}5.06 & 2.66 & 6.30 & 6.11 \\
			\textbf{Incorrect $\hat{\pi}, \hat{\mu}_{t}$} & $>10^{10}$ & $>10^{10}$&   \cellcolor{GreenYellow}6.27 & 2.70 & 6.35 & 6.38  \\ 
			\textbf{Incorrect $\hat{\pi}, \hat{\omega}$} & $>10^{10}$ & $>10^{10}$ & \cellcolor{GreenYellow}6.17 & 2.67 & 6.24 & 6.20 \\  
			\hline 
			\multicolumn{5}{l}{Scenario (iii)} \\
			\hline
			\textbf{Correct}  & \cellcolor{GreenYellow}1.34 &  \cellcolor{GreenYellow}4.14 & 16.14 & 1.55 & 5.17 & 14.12 \\  
			\textbf{Incorrect $\hat{\omega}, \hat{\delta}$}   &  \cellcolor{GreenYellow}1.34 & \cellcolor{GreenYellow}3.44 & 11.51 & 1.58 & 5.48 & 13.04 \\ 
			\textbf{Incorrect $\hat{\pi}, \hat{\mu}_{t}$} &  2.68 & \cellcolor{GreenYellow}4.41 & 16.58 & 1.28 & 5.39 & 14.53 \\ 
			\textbf{Incorrect $\hat{\pi}, \hat{\omega}$}  &  \cellcolor{GreenYellow}1.23 & \cellcolor{GreenYellow}4.03 & 13.00 & 1.34 & 5.08 & 12.36 \\ 
			\hline
	\end{tabular}}
	\caption{\label{tab:rmse} RMSE of three influence functions under three different scenarios (i)-(iii). In case of parametric estimation, we mark the case when each estimator should be valid and robust against model misspecification.}
\end{table}

\section{Additional application results}
\label{sec:additional_2}

\begin{table}[ht]
\centering
\resizebox{\textwidth}{!}{\begin{tabular}{rlll|lll}
  \hline
 & Control ($Z=0$) & Intervention ($Z=1$) &  & Not screened ($A=0$) & Screened ($A=0$) &  \\ 
Characteristics  & $n=71,578$ & $n=71,848$ & p-value & $n=78,724$ & $n=63,702$ & p-value  \\ 
  \hline
\textbf{Age} (continuous) & 62.61  ( 5.36 ) & 62.56  ( 5.35 ) & 0.0888 & 62.66  ( 5.38 ) & 62.49  ( 5.32 ) & 0.0000 \\ 
 \multicolumn{4}{l|}{\textbf{Age Level}} \\ 
  55-59 yr & 23403  ( 33.16 ) & 24042  ( 33.46 ) &  & 25948  ( 32.96 ) & 21497  ( 33.75 ) &  \\ 
  60-64 yr & 21787  ( 30.87 ) & 22197  ( 30.89 ) &  & 24130  ( 30.65 ) & 19854  ( 31.17 ) &  \\ 
 65-69 yr & 15953  ( 22.60 ) & 16134  ( 22.46 ) &  & 17851  ( 22.68 ) & 14236  ( 22.35 ) &  \\ 
 70-74 yr & 9435  ( 13.37 ) & 9475  ( 13.19 ) & 0.5301 & 10795  ( 13.71 ) & 8115  ( 12.74 ) & 0.0000 \\ 
  \multicolumn{4}{l|}{\textbf{Sex}} \\
  Male & 35326  ( 50.05 ) & 36478  ( 50.77 ) &  & 38472  ( 48.87 ) & 33332  ( 52.32 ) &  \\ 
  Female & 35252  ( 49.95 ) & 35370  ( 49.23 ) & 0.0067 & 40252  ( 51.13 ) & 30370  ( 47.68 ) & 0.0000 \\ 
  \multicolumn{4}{l|}{\textbf{Family History of Any Cancer}} \\
 No & 31292  ( 44.34 ) & 31966  ( 44.49 ) &  & 35109  ( 44.60 ) & 28149  ( 44.19 ) &  \\ 
  Yes & 38991  ( 55.25 ) & 39701  ( 55.26 ) & 0.7599 & 43274  ( 54.97 ) & 35418  ( 55.60 ) & 0.0549 \\ 
  Unknown & 295  ( 0.42 ) & 181  ( 0.25 ) & 0.0000 & 341  ( 0.43 ) & 135  ( 0.21 ) & 0.0000 \\ 
 \multicolumn{4}{l|}{\textbf{Family History of Colorectral Cancer}} \\  
No & 31292  ( 44.34 ) & 31966  ( 44.49 ) &  & 35109  ( 44.60 ) & 28149  ( 44.19 ) &  \\ 
  Yes & 36871  ( 52.24 ) & 37323  ( 51.95 ) & 0.0036 & 40866  ( 51.91 ) & 33328  ( 52.32 ) & 0.0485 \\ 
  Possibly / Unknown & 2415  ( 3.42 ) & 2559  ( 3.56 ) & 0.2491 & 2749  ( 3.49 ) & 2225  ( 3.49 ) & 0.2934 \\ 
  \multicolumn{4}{l|}{\textbf{Colorectal Polyps}} \\
 No & 65291  ( 92.51 ) & 66574  ( 92.66 ) &  & 72769  ( 92.44 ) & 59096  ( 92.77 ) &  \\ 
 Yes & 4637  ( 6.57 ) & 4917  ( 6.84 ) & 0.0646 & 5199  ( 6.60 ) & 4355  ( 6.84 ) & 0.1458 \\ 
  Unknown & 650  ( 0.92 ) & 357  ( 0.50 ) & 0.0000 & 756  ( 0.96 ) & 251  ( 0.39 ) & 0.0000 \\ 
   \multicolumn{4}{l|}{\textbf{Diabetes}}  \\
   No & 64614  ( 91.55 ) & 66015  ( 91.88 ) &  & 71797  ( 91.20 ) & 58832  ( 92.36 ) &  \\ 
  Yes & 5396  ( 7.65 ) & 5554  ( 7.73 ) & 0.7096 & 6268  ( 7.96 ) & 4682  ( 7.35 ) & 0.0000 \\ 
  Unknown & 568  ( 0.80 ) & 279  ( 0.39 ) & 0.0000 & 659  ( 0.84 ) & 188  ( 0.30 ) & 0.0000 \\ 
  \hline
\end{tabular}}
\caption{\label{tab:realtab} Characteristics of the study population. }
\end{table}
Table~\ref{tab:realtab} describes the distribution of the baseline covariates $\bm{X}_{0}$ for the participants, which is corresponding to Table 2 in~\cite{kianian2019causal}. Compared to Table 2 in~\cite{kianian2019causal}, we have a smaller number of participants mainly because we excluded the participants with no history of \textit{all} types of cancer, not only of colorectal cancers. P-values in Table~\ref{tab:realtab} that compare the distribution of the observed covariates $\bm{X}_{0}$ with respect to $Z$ show less evidence of systematic differences in covariates distribution than those for the comparison with respect to $A$. However, we still observe some discrepancy between two arms in sex, family history of colorectal cancer, and colorectal polyps, so we further adjust the distribution of $Z$ by the all baseline covariates in Table~\ref{tab:realtab}. On the other hand, comparison with respect to the actual intervention received ($A$) indicates a significant discrepancy in almost all $\bm{X}_{0}$ between two intervention groups, and we further suspect that there will be unmeasured confounders other than $\bm{X}_{0}$. 

\begin{figure}[ht]
	\centering
	\includegraphics[width=\textwidth]{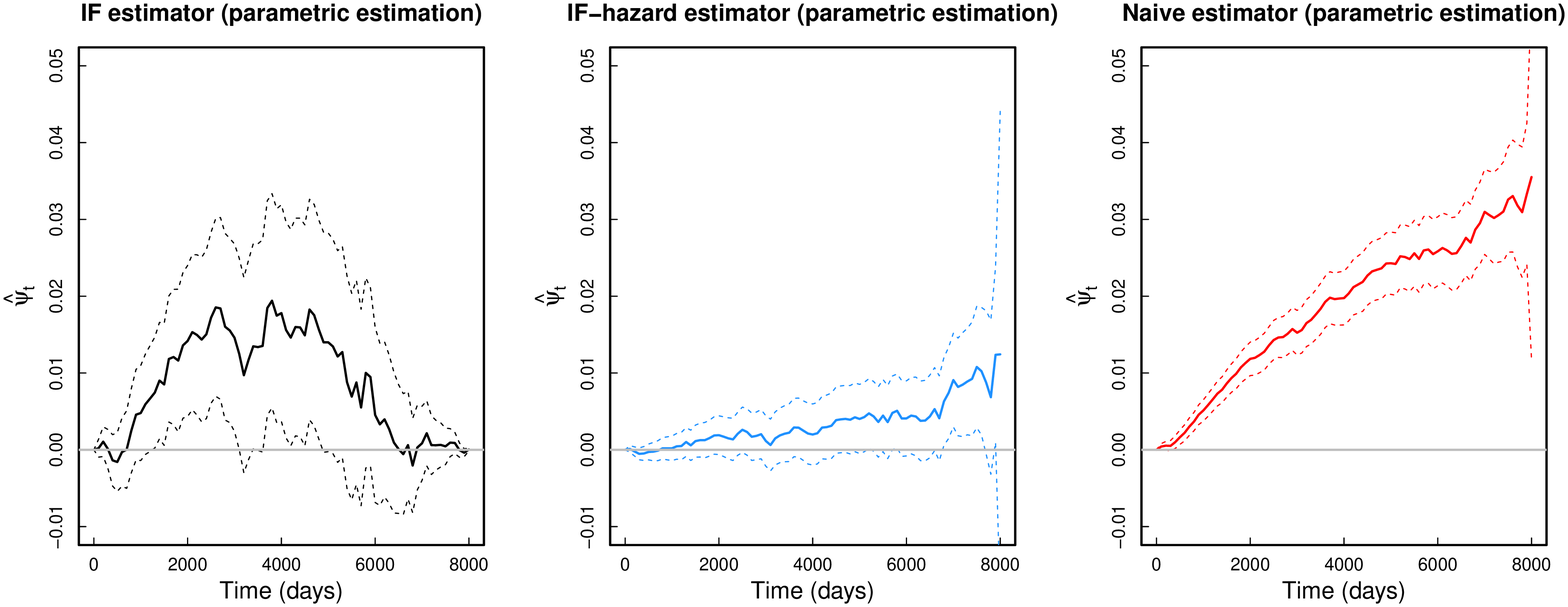}
	\caption{\label{fig:realresult_para} Estimated effect of the screening on the survival probability using three different influence function-based estimators when nuisance functions are parametrically identified. Point-wise confidence intervals (dotted lines) are estimated through $500$ bootstrap samples.}
\end{figure}

Figure~\ref{fig:realresult_para} presents the estimated effect of the screening on the survival probability from the PLCO data when all of the nuisance functions are parametrically identified. For parametric estimation, we used a logistic regression with a linear and additive predictor of the observed covariates for $Y_{t}, R, A$, and $Z$. Results are similar to Figure 2 in the main text when we estimate the nuisance functions nonparametrically instead; but here we have a larger variance of $\hat{\psi}^{\text{b}}_{\text{IF-hazard}} (t)$
, and $\hat{\psi}^{\text{b}}_{\text{Naive-hazard}}(t)$ at the very later time points possibly due to very few observations in the risk set. 


\label{lastpage}

\end{document}